\newtheorem{theorem}{Theorem}
\newtheorem{prop}{Proposition}
\newtheorem{defn}{Definition}
\newtheorem{example}{Example}
\newtheorem{corollary}{Corollary}
\newcommand\mc{\mathcal}
\newcommand{\Tr}{\operatorname{Tr}}
\newcommand{\tr}{\operatorname{Tr}}
\newcommand{\Range}{\operatorname{Range}}
\newcommand{\sign}{\operatorname{sign}}
\newcommand{\Conv}{\operatorname{Conv}}
\DeclarePairedDelimiter\norm{\lVert}{\rVert}
\begin{document}

\title{Quantum Maps Between CPTP and HPTP}

\author{Ningping Cao}
\email{ncao@uwaterloo.ca}
\affiliation{Institute for Quantum Computing, University of Waterloo, Waterloo N2L 3G1, Ontario, Canada}
\affiliation{Perimeter Institute, Waterloo N2L 2Y5, Ontario, Canada}

\author{Maxwell Fitzsimmons}
\email{mfitzsim@uwaterloo.ca}
\affiliation{Department of Applied Mathematics, University of Waterloo N2L 3G1, Ontario, Canada}

\author{Zachary Mann}
\email{zmann@uwaterloo.ca}
\affiliation{Institute for Quantum Computing, University of Waterloo, Waterloo N2L 3G1, Ontario, Canada}

\author{Rajesh Pereira}
\email{pereirar@uoguelph.ca}
\affiliation{Department of Mathematics $\&$ Statistics, University of Guelph, Guelph N1G 2W1, Ontario, Canada}

\author{Raymond Laflamme}
\email{laflamme@uwaterloo.ca}
\affiliation{Institute for Quantum Computing, University of Waterloo, Waterloo N2L 3G1, Ontario, Canada}
\affiliation{Perimeter Institute, Waterloo N2L 2Y5, Ontario, Canada}

\date{\today}

\begin{abstract}
For an open quantum system to evolve under CPTP maps, assumptions are made on the initial correlations between the system and the environment. Hermitian-preserving trace-preserving (HPTP) maps are considered as the local dynamic maps beyond CPTP. In this paper, we provide a succinct answer to the question of what physical maps are in the HPTP realm by two approaches. The first is by taking one step out of the CPTP set, which provides us with Semi-Positivity (SP) TP maps. The second way is by examining the physicality of HPTP maps, which leads to Semi-Nonnegative (SN) TP maps.
Physical interpretations and geometrical structures are studied for these maps. The non-CP SPTP maps $\Psi$ correspond to the quantum non-Markovian process under the CP-divisibility definition ($\Psi = \Xi \circ \Phi^{-1}$, where $\Xi$ and $\Phi$ are CPTP). When removing the invertibility assumption on $\Phi$, we land in the set of SNTP maps.
A by-product of set relations is an answer to the following question -- what kind of dynamics the system will go through when the previous dynamic $\Phi$ is non-invertible. In this case, the only locally well-defined maps are in $\mathcal{SN}\backslash \mathcal{SP}$, they live on the boundary of $\mathcal{SN}$. Otherwise, the non-local information will be irreplaceable in the system's dynamic.

With the understanding of physical maps beyond CPTP, we prove that the current quantum error correction scheme is still sufficient to correct quantum non-Markovian errors.
In some special cases, lack of complete positivity could provide us with more error correction methods with less overhead.
\end{abstract}
\maketitle

\section{Introduction}

Quantum channels, also known as completely positive trace-preserving (CPTP) maps, play a critical role in almost all aspects of quantum information. Non-unitary CPTP maps characterize open system dynamics under certain assumptions on the initial correlation between the environment $E$ and the system $S$~\cite{carteret2008dynamics,shabani2009vanishing,brodutch2013vanishing,dominy2016beyond}. 

In the discussion of whether the reduced dynamic of an open system $S$ is CPTP, locality is critical.
Local, in this context, means there exists a map $\Psi$ that could determine the future evolution by only taking in the local density matrix $\rho_S$.
Looking at the system $S$ from the whole system $SE$, the dynamic $\Psi$ of $S$,
\begin{equation*}
    \Psi(\rho_{SE}) = \Tr_E(U\rho_{SE}U^\dag),
\end{equation*}
is necessarily CPTP since the action of partial trace $E$ is CPTP. 
Therefore, open-system dynamics beyond CPTP is only valuable when one does not have complete information on both $S$ and $E$. The exploration in this paper considers only local maps. 

When one only focuses on the system $S$, a sufficient condition for the open system dynamic to be CP is as follows. 
If the input state of $S$ is separable from the environment ($\rho_S\otimes\rho_E$) and the initial state $\rho_E$ of the environment is fixed,
\begin{equation}\label{eq:CPTP}
    \Psi(\rho_S) = \Tr_E[U(\rho_S\otimes\rho_E) U^\dag],
\end{equation}
then the evolution of the given system is CPTP~\cite{nielsen2010quantum}.
Generally, if backtracking in time, one may find the point when the system and environment are not yet correlated. Therefore, the system's evolution afterwards can be characterized by CPTP maps. 
However, that particular starting point may not be accessible or interesting to us. 
When this separation condition is not satisfied, there is no guarantee that the system's dynamics could be described as a CPTP map or, in the worst scenario, not even a well-defined linear map that is local to the system.

There are many attempts to go beyond CPTP dynamics and interpret the meaning of non-CP open system dynamics~\cite{pechukas1994reduced,alicki1995comment,shaji2005who,jordan2004dynamics,salgado2004evolution,carteret2008dynamics,dominy2016beyond}.
An initial correlation between $S$ and $E$ may cause the subsystem dynamic $\Psi$ to be non-CP. More general maps such as assignment maps~\cite{pechukas1994reduced,carteret2008dynamics,brodutch2013vanishing}, $\mathbb{C}$-linear HPTP maps~\cite{dominy2016beyond} are proposed. 
Positive trace-preserving (PTP)~\cite{shaji2005who} and Hermitian-preserving trace-preserving (HPTP)~\cite{breuer2016colloquium,watrous2018theory,dominy2016beyond,shabani2009vanishing} are well known in quantum information and mathematical literature. 

What are local physical maps? Or what kind of map can describe a dynamic process in physical systems?
In this paper, we provide an answer to this question under the following two assumptions.
First, the Hamiltonian $H_{SE}$ at different time periods can vary, meaning that the whole process may not be governed by one master equation, which is more suitable for quantum information and quantum computing. Second, the correlation between $S$ and $E$ at the starting point of $\Psi$ is generated by a unitary $U_c$ (referred to as the \textit{context unitary} since it builds the context for later evolution), i.e. they are separable before $U_c$.

We approach this question from two directions: taking one step beyond CPTP maps, and examining the physicality of HPTP maps. 
Stepping out of the CPTP realm, we discover semi-positive trace-preserving (SPTP) maps.
By examining the physicality of HPTP maps, we bring semi-nonnegative trace-preserving (SNTP) maps to light.
After a meticulous examination of the maps between CPTP and HPTP, we prove the following hierarchy
\begin{equation*}
    \small{\text{CPTP} \subseteq \text{PTP} \subseteq \text{SPTP (after reduction)} \subseteq \text{SNTP} \subseteq \text{HPTP}.}
\end{equation*}
The physical meanings and geometric structures of these classes of maps are studied. 
SPTP maps correspond to quantum non-Markovian process under the CP-divisibility definition ($\Psi = \Xi \circ \Phi^{-1}$, where $\Xi$ and $\Phi$ are CPTP)~\cite{rivas2014quantum,breuer2016colloquium}.
When the invertibility assumption on the preconditioning process $\Phi$ is removed, we land in the set of SNTP maps.
As illustrated in~\cref{fig:VennDiagram}, the sets of SPTP and SNTP maps are star-shaped, and positive maps live in their star center. 
The closure of the SPTP maps is the SNTP maps. The convex hulls of the SPTP and SNTP maps are the HPTP maps.

CPTP is context-independent while building connections between $S$ and $E$ is important for SNTP and SPTP.
The context-independent property makes CPTP maps unique from an operational perspective -- there is no need to build an initial correlation between the system and an ancillary qubit before implementing these operations.

\begin{figure}[ht]
    \includegraphics[width = .8\linewidth]{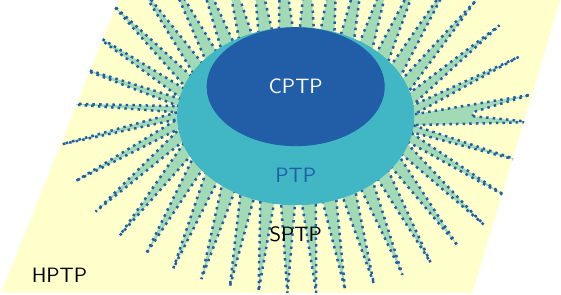}
    \caption{\textbf{Schematic diagram of map hierarchy}: Not all hermitian-preserving trace-preserving (HPTP) maps can be found in physical systems. While CPTP maps are context-independent, the non-CP SPTP and SNTP maps are context-dependent. CPTP maps and PTP maps are convex and compact. On the contrary, the sets of SPTP and SNTP maps are star-shaped and unbounded. The closure of the SPTP set is the set of SNTP.}\label{fig:VennDiagram}
\end{figure}

Since we know that the local dynamics of a system is richer than CPTP, what if the noise in the system is non-CP, how does it affect QEC?
With the physical understanding of maps beyond CPTP, we prove that the current quantum error correction scheme is still sufficient to correct errors beyond CPTP.
In some special cases, lack of complete positivity could provide us with more error correction methods with less overhead.

This paper is organized as follows. We first introduce the motivation, definitions, characterizations, and physical interpretations of SPTP and SNTP maps in Sec II. The geometric results and relations between sets of maps are studied in Sec III. In Sec IV, we prove that the Quantum Error Correction Criteria still hold for non-Markovian noise. 
Discussion and open problems are posed in the last section.

\section{Physical Non-CP Maps and Where to Find Them}

\subsection{Beyond Completely Positivity: Semi-Postive}

\begin{figure}[ht]
    \centering
    \includegraphics[width = .55\linewidth]{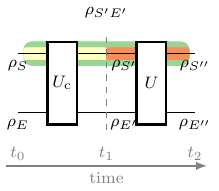}
    \caption{The unitary $U_c$ is called the context unitary since it sets the context for the evolution starting from $t_1$. Denote the processes from $t_0$ to $t_2$ (green) and $t_0$ to $t_1$ (yellow)  for the system $S$ as $\Xi$ and $\Phi$, respectively. The dynamics from $t_1$ to $t_2$ is characterized by $\Psi := \Xi\circ \Phi^{-1}$ given $\Phi$ is invertible.}
    \label{fig:2_unitary}
\end{figure}

The first step in relaxing the requirements for complete positivity is to create an initial correlation between the system and the environment. As shown in~\cref{fig:2_unitary}, we do so by introducing a context-unitary $U_c$, then study the process $\Psi$ that maps $\rho_{S'}$ to $\rho_{S''}$ (orange colored in~\cref{fig:2_unitary}). 
Assume the system $S$ and environment $E$ are first in a separable state $\rho_{SE} = \rho_S\otimes \rho_E$. After the context-unitary $U_c$, $S$ and $E$ may no longer be separable.
Denote the evolution of system $S$ undergoing $U_c$ and $U\circ U_c$ as $\Phi$ and $\Xi$ respectively. 
Clearly, the maps $\Phi$ and $\Xi$ are CPTP.

If there exists a well-defined map $\Psi: \rho_{S'} \mapsto \rho_{S''}$ that represents the dynamics from $t_1$ to $t_2$, we have
    \begin{align}
        \rho_{S''} & = \Psi(\rho_{S'}) = \Psi(\Tr_E(\rho_{S'E'})) = \Psi(\Tr_E(U_c(\rho_{SE})U_c^\dag)) \nonumber\\ 
        &= \Psi\circ \Phi(\rho_{S}) = \Xi(\rho_{S}). \label{eq:xi_phi}
    \end{align}
When $\Phi$ is invertible, from~\cref{eq:xi_phi},
the evolution $\Psi: \rho_{S'} \mapsto \rho_{S''}$ is mathematically equivalent to 
\begin{equation}\label{eq:SP}
\Psi := \Xi\circ \Phi^{-1}.	
\end{equation}
The inverse $\Phi^{-1}$ is HPTP but not CP essentially when $\Phi$ is not unitary~\cite{nayak2006invertible}.
The composition of a CPTP map $\Xi$ and an HPTP map $\Phi^{-1}$ could give rise to another (non-CP) HPTP map $\Psi$.
Note that $\Psi$ does not even need to be positive. The input state $\rho_{S'}$ of $\Psi$ is subject to $\Phi$. Mapping a subset of all density matrices to another subset of density matrices is weaker than positivity, allowing $\Psi$ to have many novel properties compared to positive maps.
Similar structures of \cref{eq:SP} has been studied in the context of quantum Markovianity. 
The non-CP $\Psi$ is, in fact, non-Markovian under the CP-divisibility definition~\cite{rivas2014quantum,breuer2016colloquium}. We will discuss this at the end of this subsection. 

When $\Phi$ is non-invertible, there may not exist a well-defined local map describing the dynamics from $\rho_{S'}$ to $\rho_{S''}$. The information that is non-local to $S$ can be irreplaceable in $S$'s evolution. More discussion and an example can be found in Section I in the supplemental material.

In the following, we fully characterize the structure of~\cref{eq:SP} by introducing the notion of a semi-positive linear map, which is inspired by the concept of semi-positive matrices~\cite{tsatsomeros2016geometric,sivakumar2018semipositive,dorsey2016new}.

\begin{defn}\label{def:SP}
Let $\mathcal{H}$ and $\mathcal{K}$ be finite-dimensional Hilbert spaces.  
An HP map $\Psi: B(\mathcal{H}) \to B(\mathcal{K})$ is said to be semi-positive (SP) if there exists an invertible density matrix $\rho$ such that $\Psi(\rho)$ is also an invertible density matrix.
The map $\Psi$ is said to be semi-positive after reduction (SPR) if $\Psi$ as a map into $B(\mathcal K_\Psi)$ is SP, where $\mathcal{K}_{\Psi}=\operatorname{Span}(\bigcup_{a\in B(\mathcal{H})} \operatorname{Range} (\Psi(a)))$.
\end{defn}

SPR is proposed to incorporate the detail below. 
The notion of invertibility on $\Psi(\rho)$ of SP in Definition~\ref{def:SP} can be subtle in certain cases. Consider the replacement channel $\Psi_\text{rp}: \rho \mapsto \tr(\rho)\ket{\psi}\bra{\psi}$ sends every density matrix to a pure state $\ket{\psi}$. Since $\ket{\psi}\bra{\psi}$ is generally not invertible in $B(\mathcal{K})$ of an arbitrary Hilbert space $\mathcal{K}$, this map is CPTP but not SPTP. However, if $\mathcal{K}$ is restricted to be $\text{Span}_{a\in B(\mathcal{H})} \Range(\Psi_\text{rp}(a))$, the state $\ket{\psi}\bra{\psi}$ becomes invertible in $\mathcal{K}$. 
Thus, $\Psi_\text{rp}$ is SPRTP. We then include all CPTP maps as a subset of SPRTP maps in such a manner.

The following result can be viewed as the quantum version of \cite[Theorem 3.1]{sivakumar2018semipositive}.

\begin{theorem}\label{thm:semi-postive}
	Let $\mathcal{H}$ and $\mathcal{K}$ be finite dimensional Hilbert spaces and let $\Psi$ be an HPTP map from $B(\mathcal{H})$ to $B(\mathcal{K})$.
	Then TFAE
		\begin{enumerate}
			\item $\Psi$ is SP \textup{[}SPR\textup{]}.
			\item There exist CPTP maps $\Phi:B(\mathcal{H}) \to B(\mathcal{H})$ and $\Xi:B(\mathcal{H}) \to B(\mathcal{K})$ \textup{[}$\Xi:B(\mathcal{H}) \to B(\mathcal{K}_{\Psi})$\textup{]} such that $\Psi= \Xi \circ \Phi^{-1}$.
		\end{enumerate}
\end{theorem}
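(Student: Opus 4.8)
The plan is to prove the equivalence by an explicit construction of the preconditioning channel for $(1)\Rightarrow(2)$ and by a genericity argument for $(2)\Rightarrow(1)$, treating the SP and SPR cases in parallel: I work with $\mc K_\Psi$ throughout and use that semipositivity of $\Psi$ forces $\mc K_\Psi=\mc K$, so that the bracketed and unbracketed versions coincide in that case. Write $d=\dim\mc H$ and let $C_\Psi=(\mathrm{id}\otimes\Psi)\big(\sum_{ij}\ket{ii}\bra{jj}\big)$ be the (Hermitian) Choi matrix of $\Psi$.

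\textbf{$(1)\Rightarrow(2)$.} Let $\rho_0$ be an invertible density matrix with $\sigma_0:=\Psi(\rho_0)$ an invertible density matrix on $\mc K_\Psi$. The key idea is to precondition by \emph{mixing toward $\rho_0$}: for $t\in(0,1]$ put $\Phi_t(X)=tX+(1-t)\Tr(X)\rho_0$. This is a convex combination of the identity channel and the replacement channel $X\mapsto\Tr(X)\rho_0$, hence CPTP; it fixes $\rho_0$ and acts as $t\cdot\mathrm{id}$ on traceless operators, so its eigenvalues are $1$ and $t$, and it is invertible for $t\neq 0$. A direct computation gives $C_{\Psi\circ\Phi_t}=t\,C_\Psi+(1-t)\,I_{\mc H}\otimes\sigma_0$. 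Since $\sigma_0$ is positive definite on $\mc K_\Psi$, the operator $I\otimes\sigma_0$ is bounded below by $\lambda_{\min}(\sigma_0)I>0$, while $t\,C_\Psi\succeq -t\norm{C_\Psi}I$; hence $C_{\Psi\circ\Phi_t}\succeq 0$ whenever $t\le \lambda_{\min}(\sigma_0)/(\lambda_{\min}(\sigma_0)+\norm{C_\Psi})$. Fixing such a $t$, the map $\Xi:=\Psi\circ\Phi_t$ has PSD Choi matrix, hence is CP, is TP as a composition of TP maps, and has outputs in $B(\mc K_\Psi)$. Then $\Phi:=\Phi_t$ and $\Xi$ are the required CPTP maps with $\Psi=\Xi\circ\Phi^{-1}$.

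\textbf{$(2)\Rightarrow(1)$.} Suppose $\Psi=\Xi\circ\Phi^{-1}$ with $\Phi:B(\mc H)\to B(\mc H)$, $\Xi:B(\mc H)\to B(\mc K_\Psi)$ CPTP and $\Phi$ invertible. Since $\Phi^{-1}$ is a bijection, $\mc K_\Psi$ (computed from $\Psi$) equals $\Range(\Xi(I/d))$: positivity of $\Xi$ and $a\preceq \norm a\, I$ give $\Range(\Xi(a))\subseteq\Range(\Xi(I/d))$ for every $a\succeq 0$, hence for every $a\in B(\mc H)$. I then seek a density matrix $\tau$ with $\Phi(\tau)$ invertible on $\mc H$ and $\Xi(\tau)$ invertible on $\mc K_\Psi$; setting $\rho:=\Phi(\tau)$ gives an invertible density matrix with $\Psi(\rho)=\Xi(\tau)$ an invertible density matrix, so $\Psi$ is SPR (and SP when $\mc K_\Psi=\mc K$). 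Existence of $\tau$ is a genericity statement: $\tau\mapsto\det\big(\Phi(\tau)\big)$ and $\tau\mapsto\det\big(\Xi(\tau)\big)$ (the latter taken on $\mc K_\Psi$) are polynomials on the $(d^2{-}1)$-dimensional affine space of trace-one Hermitian matrices; the first is nonzero at $\Phi^{-1}(I/d)$ and the second at $I/d$, so neither vanishes identically, hence each is nonzero on a dense open subset of the relative interior of the density-matrix set; the intersection of two such subsets is nonempty. (Taking $\Phi=\mathrm{id}$ here recovers the earlier remark that every CPTP map is SPR.)

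The main obstacle is the construction in $(1)\Rightarrow(2)$: one must identify a preconditioner whose composition with the (possibly highly non-positive) map $\Psi$ is completely positive, and the "mix toward the witness state $\rho_0$" choice is exactly what achieves this; after that, the Choi-matrix identity and the eigenvalue bound are routine. Direction $(2)\Rightarrow(1)$ is comparatively soft, the only subtlety being the need for \emph{simultaneous} invertibility of $\Phi(\tau)$ and $\Xi(\tau)$, which is handled by the polynomial genericity argument; the identification $\mc K_\Psi=\Range(\Xi(I/d))$ is what reconciles the two formulations of statement (2).
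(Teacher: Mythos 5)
Your proposal is correct, and it splits naturally into a half that coincides with the paper and a half that does not. For $(1)\Rightarrow(2)$ you use exactly the paper's construction: the preconditioner $\Phi_t(X)=tX+(1-t)\Tr(X)\rho_0$ (the paper's $\Phi(x)=\lambda x+(1-\lambda)\tr(x)\rho$) together with the Choi identity $J(\Psi\circ\Phi_t)=t\,J(\Psi)+(1-t)\,I_{\mathcal H}\otimes\Psi(\rho_0)$; the only difference is that you exhibit the explicit threshold $t\le \lambda_{\min}(\sigma_0)/(\lambda_{\min}(\sigma_0)+\norm{J(\Psi)})$ where the paper merely asserts the existence of a small enough $\lambda$. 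For $(2)\Rightarrow(1)$ your route is genuinely different: the paper fixes an arbitrary invertible density matrix $x$, asserts that $\rho=\Phi(x)$ is invertible because $\Phi$ is a surjective CPTP map, and transfers invertibility from some $a$ with $\Xi(a)$ invertible to $\Xi(x)$ via Choi's Schwarz inequality $\Xi(a)^*\Xi(a)\le k\,\Xi(x)$; you instead identify $\mathcal K_\Psi=\Range(\Xi(I/d))$ directly from positivity of $\Xi$ and then produce a single $\tau$ with $\det(\Phi(\tau))\neq 0$ and $\det(\Xi(\tau))\neq 0$ (the latter on $\mathcal K_\Psi$) by a determinant-genericity argument on the trace-one Hermitian affine space. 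The paper's Schwarz route yields a bit more information (every full-rank input $x$ produces a witness $\Phi(x)$), while yours is more elementary — no operator inequality — and it makes explicit two points the paper leaves implicit: that $\Phi$ sends some full-rank state to a full-rank state, and that an invertible element must exist in the range of $\Xi$, which in the unbracketed case is exactly the condition $\mathcal K_\Psi=\mathcal K$. On that last point your proof and the paper's share the same caveat rather than you introducing a new gap: from the unbracketed hypothesis alone one only obtains SPR (the paper's own replacement-channel example shows SP can fail), and your remark that SP follows precisely when $\mathcal K_\Psi=\mathcal K$ is the accurate reading of the statement.
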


\begin{proof}
		$(1) \implies (2)$: Let $\rho$ be a density matrix such that $\Psi(\rho)\in B(\mathcal{K})$ [$\Psi(\rho)\in B(\mathcal{K}_\Psi)$] is an invertible density matrix. 
		Let $\Xi_\lambda(x)=\lambda \Psi(x) +(1-\lambda)\tr(x)\Psi(\rho)=\Psi(\lambda x+(1-\lambda)\tr(x)\rho)$.  
		Taking the Choi matrices of each side we get $J(\Xi_\lambda)=\lambda J(\Psi)+(1-\lambda)I_{\mathcal{H}}\otimes \Psi(\rho)$. 
		Since $J(\Psi)$ is Hermitian and $I_{\mathcal{H}}\otimes \Psi(\rho)$ is positive definite, there exists $\epsilon>0$ such that if $0<\lambda<\epsilon$ then $J(\Xi_\lambda)$ is positive semidefinite and hence $\Xi_\lambda$ is completely positive. 
		Now fix $\lambda \in (0,\epsilon)$, set $\Xi=\Xi_\lambda$ and $\Phi(x)=\lambda x+(1-\lambda)\tr(x)\rho$.  Then  $\Xi(x)=\Psi(\Phi(x))$ and hence $\Psi= \Xi \circ \Phi^{-1}$.

		$(2) \implies (1)$: Let $x$ be an invertible density matrix and let $\rho=\Phi(x)$.  Since $\Phi$ is a surjective CPTP map, $\rho$ is an invertible density matrix, and $\Psi(\rho)=\Xi(x)$ is a density matrix.
		Let $a\in B(\mathcal{H})$ with $\Xi(a)$ being invertible in $B(\mathcal K)$ [$B(\mathcal{K}_\Psi)$], then there exists a positive $k$ such that $a^*a\le kx$ and hence $\Xi(a)^*\Xi(a)\le \Xi(a^*a)\le k \Xi(x)$ with the first inequality being the celebrated Schwarz inequality \cite[Corollary 2.8]{choi1974schwarz}. Thus, $\Psi(\rho)=\Xi(x)$ is invertible in $B(\mathcal K)$ [$B(\mathcal{K}_\Psi)$].
		
\end{proof}

For a given pair $\{\Phi, \Xi\}$, the map $\Psi$ is uniquely defined by~\cref{eq:SP}. 
Yet the choices of $\rho$ and $\lambda$ in the proof are not unique, which leads to different pairs of $\{\Phi, \Xi\}$.
That is to say, the same $\Psi$ can appear in many distinct physical settings. 
The output of $\Phi$ feeds into $\Psi$, the effective input of $\Psi$ varies from the choices of $\Phi$ and $\Xi$.
The proof for~\cref{thm:semi-postive} provides a method to realize a SPTP map $\Psi$ in a physical device. Given $\Psi$, finding the decomposition $\Xi$ and $\Phi$, dilating them to unitary operations $U_\Xi$ and $U_\Phi$, letting $U_c = U_\Phi$ and $U = U_\Xi U_\Phi^\dag$, the evolution of $S$ from $t_1$ to $t_2$ is then characterized by $\Psi$.

\begin{example}\label{eg:transpose_map}(The transpose map)
    The transpose map $\Psi: \rho \mapsto \rho^T$ is known to be positive but not completely positive.
    Consider the transpose map on a single qubit.
    Let $\Xi(x)=\lambda T(x)+(1-\lambda)\Tr(x)\frac{1}{2}I$ and $\Phi(x)=\lambda x+(1-\lambda)\Tr(x)\frac{1}{2}I$ where $T$ denotes the transpose map, $I$ is the single qubit identity.
    These maps follow the construction outlined in the proof of \cref{thm:semi-postive} with the choice of $\rho=\frac{1}{2}I$ the fully mixed state. Letting $0<\lambda\leq\frac{1}{3}$ guarantees that $\Phi$ and $\Xi$ are CPTP. 
    The transpose map is then given by $T=\Xi\circ \Phi^{-1}$. These maps can be dilated to unitary matrices, allowing to represent the transpose map as a quantum circuit with the same form as \cref{fig:2_unitary}. More details on the construction of the maps and unitaries are provided in the supplemental material.
    
\end{example}

We refine the HPTP results on the inverse of an invertible CPTP map $\Phi$. Note a CPTP or PTP map which is invertible is always SP.
\begin{corollary}\label{cor:inv}
    Suppose that $\Phi$ is invertible. Then, $\Phi$ is SPTP if and only if $\Phi^{-1}$ is  SPTP. In particular, when $\Phi$ is CPTP (or PTP) then $\Phi^{-1}$ is  SPTP. 
\end{corollary}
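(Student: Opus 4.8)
The plan is to obtain everything from \cref{thm:semi-postive} together with elementary facts about inverses of HPTP maps; no new idea is needed. First I would check that $\Phi^{-1}$ is again HPTP whenever $\Phi$ is: trace preservation of $\Phi^{-1}$ follows from substituting $\Phi^{-1}(Y)$ for $Y$ in $\Tr(\Phi(Y))=\Tr(Y)$, and Hermitian preservation follows because $\Phi$ restricts to an injective, hence (by finite dimensionality) bijective, $\mathbb{R}$-linear self-map of the space of Hermitian operators, so $\Phi^{-1}$ does too. Since $(\Phi^{-1})^{-1}=\Phi$, the two statements ``$\Phi$ is SPTP'' and ``$\Phi^{-1}$ is SPTP'' are symmetric in $\Phi\leftrightarrow\Phi^{-1}$, so it suffices to prove one implication of the equivalence.

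For the equivalence itself I would argue straight from \cref{def:SP}: $\Phi$ is SP exactly when there is an invertible density matrix $\rho$ with $\Phi(\rho)$ an invertible density matrix. The assignment $\sigma=\Phi(\rho)$, with inverse $\rho=\Phi^{-1}(\sigma)$, is a bijection between such $\rho$ and the invertible density matrices $\sigma$ for which $\Phi^{-1}(\sigma)$ is an invertible density matrix. Hence $\Phi$ is SP iff $\Phi^{-1}$ is SP, and since both maps are trace preserving, $\Phi$ is SPTP iff $\Phi^{-1}$ is SPTP. (Because $\Phi$ is onto $B(\mathcal{H})$, SP and SPR coincide here, so no reduction is needed.)

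The ``in particular'' claim reduces, via the equivalence just proved, to the assertion that an invertible CPTP (or PTP) map is SP. For a CPTP $\Phi$ this is immediate from \cref{thm:semi-postive}: writing $\Phi=\Phi\circ\mathrm{id}^{-1}$ with $\mathrm{id}$ and $\Phi$ both CPTP exhibits item (2). For a general invertible positive trace-preserving $\Phi$ I would instead show that $\sigma:=\Phi(I/\dim\mathcal{H})$ is invertible, so that $\rho=I/\dim\mathcal{H}$ witnesses semi-positivity. If $\sigma$ were singular, pick a unit vector $\psi$ with $\sigma\psi=0$; monotonicity of the positive map $\Phi$ gives $0\le\Phi(X)\le\|X\|_\infty\,\Phi(I)=\dim\mathcal{H}\,\|X\|_\infty\,\sigma$ for every $X\ge 0$, so $\langle\psi|\Phi(X)|\psi\rangle=0$; writing an arbitrary operator as a linear combination of positive ones extends this to all $Y$, and surjectivity of $\Phi$ then forces $\langle\psi|Z|\psi\rangle=0$ for every $Z\in B(\mathcal{H})$, contradicting $Z=\ket{\psi}\bra{\psi}$. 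Hence $\sigma$ is an invertible density matrix.

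I expect the main (indeed the only) non-formal point to be this last step, where genuine positivity of $\Phi$---not merely Hermitian preservation---is what prevents $\Phi$ from collapsing the maximally mixed state onto the boundary of the state space; everything else is bookkeeping about inverses and the definition of SP.
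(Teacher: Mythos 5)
Your proposal is correct and follows essentially the route the paper intends: the equivalence is the symmetric witness argument straight from \cref{def:SP} (with the routine check that $\Phi^{-1}$ is again HPTP), and the ``in particular'' part rests on the fact, stated without proof just before the corollary, that an invertible CPTP or PTP map is SP. Your only addition is to actually verify that last fact --- showing via $0\le\Phi(X)\le\|X\|_\infty\,\Phi(I)$ and surjectivity that $\Phi(I/\dim\mathcal H)$ cannot be singular --- which fills in a detail the paper merely asserts.
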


Positive maps are contractive with respect to trace norm~\cite{siudzinska2021interpolating}, i.e. $\|\Phi(x)\|_{\text{tr}} \le \|x\|_{\text{tr}}$. Thus its inverse, if it exists, is generally non-contractive, i.e. $\|\Phi^{-1}(x)\|_{\text{tr}} > \|x\|_{\text{tr}}$. From~\cref{cor:inv}, certain SPTP maps can expand. Thus, non-positive SPTP maps possess novel informatics properties.
These maps in a physical system can increase quantum state distinguishability~\cite{chakraborty2019information,siudzinska2021interpolating,breuer2016colloquium} and violate data processing inequality~\cite{dominy2016beyond}. Examples can be found in~\cite{dominy2016beyond}.

Quantum non-Markovianity defined by divisibility is in a similar notion with~\cref{eq:SP}.
We adapt the notation from~\cite{breuer2016colloquium}.
The evolution of a system between time $s$ and $t$ is given by
\begin{equation}\label{eq:non-M}
    \Phi_{t,s} = \Phi_{t}\circ\Phi_s^{-1},
\end{equation}
the whole process $\Phi_t$ is Markovian if $\Phi_{t,s}$ is CPTP for any $s\in [0,t]$.
Quantum Non-Markovian dynamics are obviously SPTP.
The case studied in~\cref{eq:SP} is broader than~\cref{eq:non-M}. 
In the discussion of Markovianity, the evolution is driven by a fixed Hamiltonian $H_{SE}$, and hence by one master equation. 
As in~\cref{fig:2_unitary}, $U$ and $U_c$ can be from two different Hamiltonians, such as in quantum computing, where the two unitaries in a circuit are realized by different pulses.
While a range of continuous time $[0,t]$ is usually considered in the study of quantum non-Markovianity, discretized time sequences $\{t_1, t_2, \cdots\}$, especially time points between each operation, are more relevant in Quantum Computation. 
Non-Markovianity of open system dynamics signals a non-trivial memory effect between the system and the environment and information backflow~\cite{breuer2016colloquium,chakraborty2019information}.

\subsection{Examine HPTP: Semi-Nonnegative}

\begin{figure}[ht]
    \includegraphics[width = .7\linewidth]{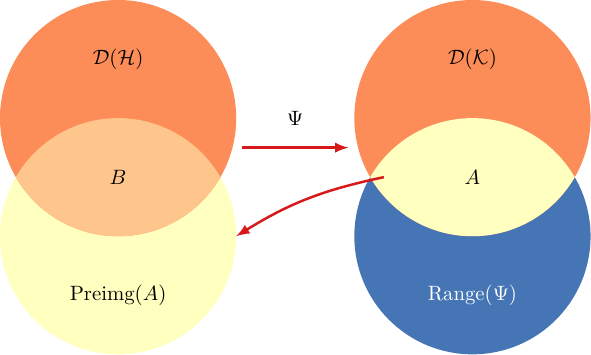}
    \caption{\textbf{$B$ determines the property of $\Psi$}: Given a HPTP map $\Psi: B(\mathcal{H}) \to B(\mathcal{K})$, the set $A$ is the intersection between $\mathcal{D}(\mathcal{K})$ and $\Range(\Psi):=\{\Psi(a) : a\in B(\mathcal{H}) \}$, where $\mathcal{D}(\mathcal{K})$ is the set of density matrices in $\mathcal{K}$. 
    The preimage under $\Psi$ of $A$ is denoted as $\text{Preimg}(A)$. The set $B$ is $\mathcal{D}(\mathcal{H})\cap \text{Preimg}(A)$. 
    (1) If $B = \mathcal{D}(\mathcal{H})$, $\Psi$ is positive; (2) if there exist a $\rho \in B$ such that $\Psi(\rho)$ is invertible, $\Psi$ is semi-positive (SP); (3) if $B\neq \emptyset$, $\Psi$ is semi-nonnegative (SN); (4) if $B = \emptyset$, $\Psi$ is a non-SN HPTP map.}\label{fig:scheme}
\end{figure}

What kind of HPTP maps are physical?
Clearly, certain HPTP maps are not interesting for quantum physics. 
A simple instance is the replacement HPTP map $\Upsilon: \rho \mapsto \tr(\rho)D$, where $D$ is an indefinite Hermitian matrix, it sends no density matrix to density matrices.
Since the partial trace is a CPTP map, an indefinite matrix can not be dilated to a density matrix in a larger Hilbert space. Thus $\Upsilon$ can never be a physical map, even considering a possible dilation. 
From this viewpoint, we require the map to send at least one density matrix to a density matrix. 
This leads to the concept of semi-nonnegative maps, where the terminology is adapted from analysis literature~\cite{dorsey2016new}.

\begin{defn}\label{def:SN} Let $\mathcal{H}$ and $\mathcal{K}$ be finite dimensional Hilbert spaces.  A HPTP map $\Psi: B(\mathcal{H}) \to B(\mathcal{K})$ is said to be semi-nonnegative (SN) if there exists a density matrix $\rho$ such that $\Psi(\rho)$ is also a density matrix.   
\end{defn}

 \begin{theorem}\label{thm:SN} 
 Let $\mathcal{H}$ and $\mathcal{K}$ be finite dimensional Hilbert spaces and let $\Psi$ be an HPTP map from $B(\mathcal{H})$ to $B(\mathcal{K})$. 
Then TFAE
 \begin{enumerate}
   \item $\Psi$ is semi-nonnegative.
   \item There exist CPTP maps $\Phi:B(\mathcal{H})\to B(\mathcal{H})$ and $\Xi:B(\mathcal{H})\to B(\mathcal{K})$ such that $\Psi\circ \Phi= \Xi$.
  \end{enumerate}
\end{theorem}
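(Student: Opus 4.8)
The plan is to establish the equivalence directly, following the template of the proof of \cref{thm:semi-postive} but with the full-rank interpolation $x \mapsto \lambda x + (1-\lambda)\tr(x)\rho$ replaced by a plain replacement channel. The structural reason for the change is that semi-nonnegativity only guarantees that \emph{some} density matrix $\rho$ is sent to a density matrix $\Psi(\rho)$, and $\Psi(\rho)$ may be rank-deficient, so the argument of \cref{thm:semi-postive} that manufactured an invertible $\Phi$ together with a completely positive $\Xi$ is no longer available.

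For $(1) \implies (2)$ I would take a density matrix $\rho$ with $\sigma := \Psi(\rho)$ again a density matrix --- it is Hermitian because $\Psi$ is HP, of unit trace because $\Psi$ is TP, and positive semidefinite by the SN hypothesis --- and set $\Phi : B(\mathcal H) \to B(\mathcal H)$, $\Phi(x) = \tr(x)\,\rho$, and $\Xi : B(\mathcal H) \to B(\mathcal K)$, $\Xi(x) = \tr(x)\,\sigma$. Both are CPTP: their Choi matrices are $I_{\mathcal H} \otimes \rho$ and $I_{\mathcal H} \otimes \sigma$, positive semidefinite since $\rho,\sigma \ge 0$, and trace preservation is immediate from $\tr(\rho) = \tr(\sigma) = 1$. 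By linearity of $\Psi$, $\Psi(\Phi(x)) = \tr(x)\,\Psi(\rho) = \tr(x)\,\sigma = \Xi(x)$ for every $x$, i.e.\ $\Psi \circ \Phi = \Xi$.

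For $(2) \implies (1)$ the argument is immediate: pick any density matrix $x \in B(\mathcal H)$ and put $\rho := \Phi(x)$, which is a density matrix because $\Phi$ is CPTP; then $\Psi(\rho) = \Psi(\Phi(x)) = \Xi(x)$ is a density matrix because $\Xi$ is CPTP, so $\Psi$ maps the density matrix $\rho$ to a density matrix and is semi-nonnegative.

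I do not expect a genuine obstacle in the existence claim itself; the only subtlety worth stressing is the contrast with \cref{thm:semi-postive}. If one insisted on an invertible $\Phi$, then $\Xi = \Psi \circ \Phi$ would be CPTP precisely when $\Psi = \Xi \circ \Phi^{-1}$ is SP, so for maps in $\mathcal{SN}$ that fail to be SP no invertible preconditioning channel can work, and the rank-collapsing replacement channel is the natural witness. This is exactly the sense in which dropping the invertibility of $\Phi$ exchanges SPTP for SNTP, consistent with the closure statement announced in the introduction.
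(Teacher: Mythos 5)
Your proposal is correct and matches the paper's own proof essentially verbatim: both directions use exactly the same replacement channels $\Phi(x)=\tr(x)\rho$, $\Xi(x)=\tr(x)\Psi(\rho)$ for $(1)\implies(2)$, and the same one-line evaluation argument for $(2)\implies(1)$. No differences worth noting.
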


\begin{proof} 
$(1) \implies (2)$: Let $\rho$ be a density matrix such that $\Psi(\rho)$ is also a density matrix.  Let $\Phi(x)=tr(x)\rho$ and $\Xi(x)=tr(x)\Psi(\rho)$, then $\Phi$ and $\Xi$ are CPTP maps with $\Psi\circ \Phi= \Xi$.

 $(2) \implies (1)$: Now suppose there exist CPTP maps $\Phi:B(\mathcal{H})\to B(\mathcal{H})$ and $\Xi:B(\mathcal{H})\to B(\mathcal{K})$ such that $\Psi\circ \Phi= \Xi$.  Now let $\rho$ be any density matrix, then $\Psi$ maps $\Phi(\rho)$ to $\Xi(\rho)$ and hence $\Psi$ is semi-nonnegative.

\end{proof}

From~\cref{thm:SN}, a similar dilation can be constructed. The unitary operations $U_c$ and $U_\Xi$ in a larger Hilbert space represent the CPTP maps $\Phi$ and $\Xi$, respectively. 
Now it is clear that the prerequisite for SP maps and SN maps to appear in quantum systems is the context put in by the map $\Phi$, or equivalently the context unitary $U_c$. 
The CPTP map $\Phi$ restricts the density matrix sent through a non-CP map $\Psi$ to only a subset of $\mathcal{D}(\mathcal{H})$, which ensures the physicality of the results (i.e. no negative eigenvalue in $\Psi(\rho)$). 
CPTP maps are superior from the operational perspective -- they do not require an initial correlation between the system and the ancilla. 
In the landscape of HPTP maps, the CPTP maps are \textit{context-independent} physical maps, while other non-CP SNTP maps (including positive maps) are \textit{context-dependent}. The non-SN HPTP maps are never physical.

Comparing the Definition~\ref{def:SP} of SP or SPR with Definition~\ref{def:SN} of SN, the invertibility of $\rho$ is relaxed in SN. The SPTP maps are a subset of SNTP maps.
Equivalent characterizations of SP and SN are given in~\cref{thm:semi-postive} and~\cref{thm:SN}, respectively.
In the proof of~\cref{thm:SN}, we merely removed the invertible assumption for $\Phi$ from semi-positivity.
Denote the set of SNTP maps and SPTP maps as $\mathcal{SN}(\mathcal{H},\mathcal{K})$ and $\mathcal{SP}(\mathcal{H},\mathcal{K})$, respectively, or simply $\mathcal{SN}$ and $\mathcal{SP}$ when it does not cause confusion.
The elements in $\mathcal{SN}\backslash \mathcal{SP}$ have no such decomposition $\Xi\circ \Phi^{-1}$.

\begin{example}[SNTP but not SP]
    Consider the single qubit HPTP map $\Psi:$
    \[
    \begin{pmatrix}
        a & b\\
        c & d
    \end{pmatrix} \mapsto
    \begin{pmatrix}
        a+2d & b\\
        c & -d
    \end{pmatrix}.
    \]
    $\Psi$ is a non-SP SNTP map, it maps the whole Bloch sphere to indefinite matrices except $\ket{0}\bra{0}$. 

    Let $\Phi$ be the CPTP map with Kruas operators $$\left\{\begin{pmatrix}
        1 & 0\\
        0 & 0
    \end{pmatrix},\begin{pmatrix}
        0 & 1\\
        0 & 0
    \end{pmatrix}\right\}.$$ The composition $\Psi \circ \Phi(\rho) = \Phi(\rho)$ is still a CPTP map. More discussion on this map can be found in Sup. Mat.
\end{example}

From another perspective, the classes of maps can be distinguished under a unified framework.
Given a HPTP map $\Psi: B(\mathcal{H}) \to B(\mathcal{K})$, the set $A$ is the intersection between $\mathcal{D}(\mathcal{K})$ and $\Range(\Psi)$. As in \cref{fig:scheme}, the preimage of $A$ denoted as $\text{Preimg}(A)$. The set $B$ is $\mathcal{D}(\mathcal{H})\cap \text{Preimg}(A)$, it is the subset of density matrices $\mathcal{D}(\mathcal{H})$ mapped to density matrices in $\mathcal{D}(\mathcal{K})$ by $\Psi$. 

(1) If $B = \mathcal{D}(\mathcal{H})$, $\Psi$ is positive; 

(2) if there exists a $\rho \in B$ with $\Psi(\rho)$ invertible, $\Psi$ is semi-positive (see~\cref{thm:SP_noinv} in the Sup. Mat.); 

(3) if $B\neq \emptyset$, $\Psi$ is semi-nonnegative; 

(4) if $B = \emptyset$, $\Psi$ is a non-SN HPTP map. This map is non-physical. 

A by-product of this framework is a clear answer to the question: what kind of dynamics will the system go through when removing the invertibility of $\Phi$ in~\cref{eq:SP} and~\cref{eq:non-M}. The only non-SP physically interpretable maps are in the set $\mathcal{SN}\backslash \mathcal{SP}$. We later prove in~\cref{thm:SN_SP} that these maps are the boundary of $\mathcal{SN}$. Beyond the power of SNTP maps, the system $S$'s dynamics are not locally well-defined, information either stored in $E$ or stored globally will kick in and dramatically change the evolution of $S$.

\section{Relations Between Maps and Geometrical Characterization}
With the newly defined SPTP and SNTP maps in hand, it is natural to wonder where they are in the HPTP landscape and what are the relations between various sets of maps.
In this subsection, we provide the geometrical characterization of these sets.
\cref{fig:VennDiagram} is a schematic diagram of these results.
Proofs of all results in this subsection can be found in Sup. Mat.

In~\cref{thm:set_relations}, we study the geometric properties of each set of maps.
\begin{theorem}[Geometric properties of maps (informal)]\label{thm:set_relations}
     Let $\mathcal{H}$ and $\mathcal{K}$ be finite-dimensional Hilbert spaces. 
     Let $\mathcal {HP}$ be the set of all HPTP linear maps from $B(\mathcal H)$ to $B(\mathcal K)$. 
     The following statements hold:
     \begin{enumerate}
         \item The set $\mathcal {SP}$ of SPTP maps in $\mathcal {HP}$ is open and unbounded 
         but not convex. It is star-shaped with star center equal to $ \mathcal {SP}\cap \mathcal {P}$. \label{thm:set_relations.SP}
         \item The set $\mathcal {SN}$ of SNTP maps in $\mathcal {HP}$ is closed and unbounded 
         but not convex. It is star-shaped with star center equal to $ \mathcal {P}$. \label{thm:set_relations.SN}
         \item The set $\mathcal {SPR}$ of SPRTP maps in $\mathcal {HP}$ is unbounded, not open and not closed 
         nor is it convex. It is star-shaped with star center equal to $ \mathcal {P}$. \label{thm:set_relations.SPR}
         \item The set $\mathcal {P}$ of positive TP maps in $\mathcal {HP}$ is compact 
         and convex. \label{thm:set_relations.P}
         \item The set $\mathcal {CP}$ of CPTP maps in $\mathcal {HP}$ is compact 
         and convex. \label{thm:set_relations.CP}
     \end{enumerate}
\end{theorem}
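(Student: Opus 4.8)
The plan is to prove the five items by exploiting the characterizations of \cref{thm:semi-postive} and \cref{thm:SN}, together with elementary facts about the Choi isomorphism. Throughout I identify an HPTP map $\Psi$ with its Choi matrix $J(\Psi)$, which lives in the real affine subspace of Hermitian matrices in $B(\mathcal H\otimes\mathcal K)$ cut out by the trace-preservation constraint $\tr_{\mathcal K} J(\Psi) = I_{\mathcal H}$; this makes ``closed'', ``open'', ``bounded'', ``convex'' and ``star-shaped'' unambiguous. The key technical device for the star-shapedness claims is the interpolation trick already used in the proof of \cref{thm:semi-postive}: for a fixed positive map (or CPTP map) $\Theta$ and an arbitrary HPTP map $\Psi$ in the relevant class, the segment $\lambda\mapsto \lambda\Psi + (1-\lambda)\Theta$ has Choi matrix $\lambda J(\Psi) + (1-\lambda)J(\Theta)$, and one shows the whole open segment (or half-open segment) stays in the class.

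For item \ref{thm:set_relations.P} and \ref{thm:set_relations.CP}, I would argue compactness and convexity directly from the definitions: $\mathcal{P}$ is the set of TP maps sending the (compact) set of density matrices into the closed cone of positive semidefinite matrices, an intersection of closed conditions with the bounded affine slice $\{\tr_{\mathcal K} J = I_{\mathcal H}\}$ (boundedness follows since trace-preservation pins down $\tr J(\Psi)=\dim\mathcal H$ and positivity of $\Psi$ forces the Choi matrix, while not itself PSD, to have bounded entries — more carefully, a positive map is trace-norm contractive, so its Choi matrix lies in a bounded set); convexity is immediate. For $\mathcal{CP}$ the same works with ``$J(\Psi)\succeq 0$'' replacing the positivity condition, which is manifestly a closed convex bounded slice. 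For item \ref{thm:set_relations.SN}: by \cref{thm:SN}, $\mathcal{SN}$ is the set of HPTP $\Psi$ for which some density matrix $\rho$ has $\Psi(\rho)$ a density matrix; closedness follows because the set of pairs $(\Psi,\rho)$ with $\Psi$ HPTP, $\rho$ a density matrix and $\Psi(\rho)$ a density matrix is closed, and $\rho$ ranges over a compact set, so the projection to the $\Psi$-coordinate is closed. Unboundedness: exhibit a ray, e.g.\ take a fixed non-CP HPTP map $\Delta$ with $\Delta(\rho_0)=0$ for some density matrix $\rho_0$ and form $\Theta + t\Delta$ for a positive $\Theta$ and $t\to\infty$. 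Non-convexity: give two SNTP maps whose midpoint is not SN (a transpose-type map and a suitable reflected partner). Star-shapedness with center $\mathcal P$: given $\Psi$ SNTP with witness $\rho$, and any positive TP map $\Theta$, the map $\lambda\Psi + (1-\lambda)\Theta$ sends $\rho$ to $\lambda\Psi(\rho) + (1-\lambda)\Theta(\rho)$, a convex combination of density matrices, hence is SNTP for all $\lambda\in[0,1]$; conversely a point in the star center must be SNTP ``from every direction'', which forces positivity.

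For item \ref{thm:set_relations.SP}: by \cref{thm:semi-postive}/\cref{def:SP}, $\Psi\in\mathcal{SP}$ iff some invertible density matrix $\rho$ has $\Psi(\rho)$ invertible. Openness: if $\Psi(\rho)$ is an invertible density matrix then for $\Psi'$ close to $\Psi$ in Choi norm, $\Psi'(\rho)$ is close to $\Psi(\rho)$, still invertible and still trace one (automatic from TP), hence still a density matrix — so $\Psi'\in\mathcal{SP}$. Unboundedness and non-convexity: reuse (perturbed versions of) the $\mathcal{SN}$ examples, checking the witness stays invertible. Star center $\mathcal{SP}\cap\mathcal{P}$: same interpolation argument, noting that when $\Theta$ is positive \emph{and} $\Psi$ is SP with invertible witness $\rho$, the combination $\lambda\Psi(\rho)+(1-\lambda)\Theta(\rho)$ is an invertible density matrix for $\lambda$ near $1$, but we need it for all $\lambda\in(0,1]$ — here I would choose the witness more carefully (use that $\Theta(\rho)\succeq 0$ and $\Psi(\rho)\succ 0$, so any nontrivial convex combination is $\succ 0$), and note the center cannot be all of $\mathcal P$ since a non-SP positive map like a replacement channel to a pure state is in $\mathcal P$ but not in $\mathcal{SP}$. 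For item \ref{thm:set_relations.SPR}: use \cref{thm:semi-postive} in the $B(\mathcal K_\Psi)$ form; $\mathcal{SPR}\supseteq\mathcal{CP}$ (by the replacement-channel discussion after \cref{def:SP}) but $\mathcal{SPR}\subsetneq\mathcal{SN}$; not open because it contains CPTP maps like the pure-state replacement channel that are limits of non-SN maps from outside, and not closed because $\mathcal{SP}\subsetneq\mathcal{SPR}$ and $\overline{\mathcal{SP}}=\mathcal{SN}\not\subseteq\mathcal{SPR}$ (using item 2 and the last theorem referenced, \cref{thm:SN_SP}); star-shapedness with center $\mathcal P$ again by interpolation, restricting attention to the reduced codomain.

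The main obstacle I anticipate is getting the \emph{exact} star centers right rather than merely showing the sets are star-shaped from $\mathcal P$. Showing $\mathcal P$ (or $\mathcal{SP}\cap\mathcal P$) is contained in the star center is the easy interpolation direction; showing nothing else is in the center requires, for each candidate non-positive map $\Psi_0$, producing some SNTP (resp. SPTP) map $\Psi$ such that the segment from $\Psi_0$ to $\Psi$ leaves the class — i.e.\ one must argue that a map that is ``SN from every other SN map'' has no density matrix sent outside the PSD cone, which is a small separation/compactness argument but needs care about boundary cases (pure-state outputs, non-invertible witnesses). I would handle this by contradiction: if $\Psi_0$ is not positive, pick a density matrix $\sigma$ with $\Psi_0(\sigma)$ having a strictly negative eigenvalue, then choose $\Psi$ in the class whose witness is a density matrix $\tau$ far from $\sigma$ and with $\Psi(\sigma)$ also badly non-positive (e.g.\ a rescaled reflection), so that no convex combination repairs $\sigma$; the bookkeeping that such a $\Psi$ exists in each of the three non-convex classes is where the real work sits. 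Everything else — the openness of $\mathcal{SP}$, closedness of $\mathcal{SN}$, compactness/convexity of $\mathcal P$ and $\mathcal{CP}$, and the unboundedness rays — is routine once phrased on the Choi picture.
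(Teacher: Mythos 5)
There is a genuine gap, and it sits exactly where you flagged the "real work": the proof that the star centers are \emph{exactly} $\mathcal{P}$ (for $\mathcal{SN}$, $\mathcal{SPR}$) and $\mathcal{SP}\cap\mathcal{P}$ (for $\mathcal{SP}$). Your contingency plan — pick $\sigma$ with $\Psi_0(\sigma)$ indefinite, then choose a class member $\Psi$ with witness $\tau$ far from $\sigma$ and with $\Psi(\sigma)$ "badly non-positive, so that no convex combination repairs $\sigma$" — does not establish what is needed. Semi-nonnegativity is an \emph{existential} condition over inputs, so to show the midpoint $\tfrac12(\Psi_0+\Psi)$ has left the class you must show it sends \emph{every} density matrix outside the PSD cone, not just $\sigma$. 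Worse, your choice actively risks failure: if $\Psi_0(\tau)\geq 0$ (which non-positivity of $\Psi_0$ does not preclude), the midpoint is SN via $\tau$, since $\Psi(\tau)\geq 0$ by construction. The paper's argument (Theorem S3 in the supplement) is built precisely to handle this universal quantifier: it takes the non-positivity witness to be a pure state $\ket\phi\bra\phi$, sets $\Phi(X)=\Tr(X)\sigma + k\,\Tr\bigl((I-\ket\phi\bra\phi)X\bigr)Z$ with $Z=\ket{v_2}\bra{v_1}+\ket{v_1}\bra{v_2}$ chosen so that the separating quadratic forms $\bra{v_i}\cdot\ket{v_i}$ are blind to $Z$, and then argues in two regimes: near $\ket\phi\bra\phi$ the sum stays indefinite by continuity of those forms (uniformly in the coefficient of $Z$), while away from $\ket\phi\bra\phi$ the compactness bound $m=\inf|\Tr((I-\ket\phi\bra\phi)X)|>0$ lets a large $k$ force a negative expectation value everywhere. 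Nothing in your sketch supplies this global, all-inputs control, and without it the "only if" half of each star-center claim is unproved.

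Secondary, smaller gaps of the same bookkeeping kind: non-convexity of $\mathcal{SP}$/$\mathcal{SN}$ is asserted via an unspecified "transpose-type map and reflected partner," whereas the paper gets it cleanly from $\Conv(\mathcal{SP})=\mathcal{HP}$ (Proposition S2, an explicit basis construction); and your non-closedness argument for $\mathcal{SPR}$ needs an explicit SN map that is not SPR (the paper constructs $\Psi(X)=\Tr(X)E_{11}+\Tr(E_{22}X)(E_{11}-E_{22})$), which is not automatic from the inclusions you cite. The routine parts — openness of $\mathcal{SP}$, closedness of $\mathcal{SN}$ by a compactness/subsequence argument, the unbounded rays, compactness and convexity of $\mathcal{P}$ and $\mathcal{CP}$ — match the paper's proofs and are fine.
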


\Cref{thm:set_relations.P} and \cref{thm:set_relations.CP} of~\cref{thm:set_relations} are well-known. 
The formal statements of~\cref{thm:set_relations} are provided and proved in Sup. Mat. 
In~\cref{thm:set_relations.inclusions}, we provide the inclusion relation between sets of maps. 
The map hierarchy is illustrated in~\cref{fig:VennDiagram}.
\begin{prop}[Nested structure between maps]\label{thm:set_relations.inclusions} 
Let $\mathcal{H}$ and $\mathcal{K}$ be finite-dimensional Hilbert spaces, 
the following inclusion relations hold
         \[
            \mathcal {CP}\subseteq \mathcal {P} \subseteq \mathcal{SPR} \subseteq \mathcal {SN} \subseteq \mathcal {HP}
         \]
         and
         \[
         \mathcal {SP} \subseteq \mathcal{SPR}.
         \]
The equal signs hold iff $\dim \mathcal{H} = 1$ or $\dim \mathcal{K} = 1$.
\end{prop}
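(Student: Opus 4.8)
The plan is to prove the chain of inclusions one link at a time, all of which are either immediate from the definitions or follow from the characterization theorems already established, and then to establish the equality criterion by exhibiting separating examples whenever $\dim\mathcal H,\dim\mathcal K\geq 2$.

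First I would dispatch the easy inclusions. The containment $\mathcal{CP}\subseteq\mathcal P$ is standard (complete positivity implies positivity by restricting to $1$-positivity). For $\mathcal P\subseteq\mathcal{SPR}$: given a positive TP map $\Psi$ and viewing it into $B(\mathcal K_\Psi)$, we must produce an invertible density matrix $\rho$ with $\Psi(\rho)$ invertible in $B(\mathcal K_\Psi)$; taking $\rho$ to be the maximally mixed state, $\Psi(\rho)$ is a density matrix whose range spans $\mathcal K_\Psi$ by a convexity/averaging argument (any element of $\mathcal K_\Psi$ is a combination of ranges of $\Psi(a)$, and positivity plus trace-preservation let one absorb these into the image of the full-rank input), hence $\Psi(\rho)$ is invertible in $B(\mathcal K_\Psi)$. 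The inclusion $\mathcal{SPR}\subseteq\mathcal{SN}$ is by definition together with the observation that if $\Psi$ restricted to $B(\mathcal K_\Psi)$ is SP then in particular it sends some density matrix to a density matrix, which is a density matrix in $B(\mathcal K)$ as well. Similarly $\mathcal{SP}\subseteq\mathcal{SPR}$ is immediate since an invertible-to-invertible witness in $B(\mathcal K)$ restricts to one in $B(\mathcal K_\Psi)$. Finally $\mathcal{SN}\subseteq\mathcal{HP}$ holds since SN maps are defined to be HPTP.

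Next, the equality claim. When $\dim\mathcal H=1$ every TP map is determined by a single matrix $\Psi(1)=D$ with $\Tr D=1$; HP forces $D$ Hermitian, so $\mathcal{HP}$ consists of replacement maps $\rho\mapsto\Tr(\rho)D$; positivity of such a map is equivalent to $D\succeq0$ i.e. $D$ a density matrix, and then all five classes coincide — one checks each characterization collapses to ``$D$ is a density matrix'' (using \cref{thm:semi-postive} and \cref{thm:SN} the decompositions are trivially available). When $\dim\mathcal K=1$, trace preservation forces $\Psi(\rho)=\Tr(\rho)$, the unique state on $\mathbb C$, so every HPTP map is CPTP and again all classes coincide. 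For the converse direction — that strictness holds once both dimensions are at least $2$ — I would supply, for each link, an explicit witness: a non-SN HPTP map (e.g. the indefinite replacement map $\Upsilon:\rho\mapsto\Tr(\rho)D$ with $D$ indefinite Hermitian, available once $\dim\mathcal K\geq 2$) separating $\mathcal{SN}$ from $\mathcal{HP}$; the map from the ``SNTP but not SP'' example in the text separating $\mathcal{SN}$ from $\mathcal{SPR}$ and from $\mathcal{SP}$; a positive-but-not-CP map such as the single-qubit transpose (\cref{eg:transpose_map}), suitably embedded, separating $\mathcal{CP}$ from $\mathcal P$; and a non-positive SP map (the inverse of a non-unitary invertible CPTP map, via \cref{cor:inv}, or the transpose decomposition) separating $\mathcal P$ from $\mathcal{SPR}$. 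Embedding these qubit examples into arbitrary $\mathcal H,\mathcal K$ with $\dim\geq 2$ is done by tensoring with identity on the complementary factors or by direct-summing a fixed state, checking the relevant property is preserved.

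The main obstacle I anticipate is the careful handling of the $\mathcal K_\Psi$ reduction in proving $\mathcal P\subseteq\mathcal{SPR}$ and in the separation between $\mathcal{SP}$ and $\mathcal{SPR}$: one must verify that $\Psi(\rho)$ for the maximally mixed $\rho$ genuinely has range equal to all of $\mathcal K_\Psi$, which requires knowing that the span of ranges of $\Psi(a)$ over all $a$ (or equivalently over all density matrices, by writing general $a$ as a linear combination) is already achieved inside the image of a single full-rank state — this is true for positive maps but needs the positivity crucially and is exactly the subtlety the SPR definition was introduced to handle. The rest is bookkeeping with the characterization theorems and example-checking, and the dimension-$1$ edge cases, while conceptually trivial, should be stated explicitly so the ``iff'' is airtight.
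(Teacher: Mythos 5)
Your treatment of the inclusion chain itself is correct and essentially the paper's: $\mathcal{CP}\subseteq\mathcal P$ is standard, $\mathcal{SPR}\subseteq\mathcal{SN}$ and $\mathcal{SP}\subseteq\mathcal{SPR}$ are the same one-line observations the paper makes, and the only divergence is at $\mathcal P\subseteq\mathcal{SPR}$: you take the maximally mixed state and argue $\Range(\Psi(\sigma))\subseteq\Range(\Psi(I/n))$ for every density matrix $\sigma$ (from $0\le\sigma\le I$ and positivity, plus decomposing a general $a$ into positive parts), hence $\mathcal K_\Psi=\Range(\Psi(I/n))$. The paper instead invokes its range-reduction lemma (\cref{thm:range_reduection_app}), which produces an invertible density matrix attaining $\mathcal K_\Psi$ for an \emph{arbitrary} HP map via a maximal-rank perturbation argument. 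Your route is more elementary and suffices here precisely because positivity is available; the paper's lemma is the more general tool. Either is fine for this step.

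The genuine gap is in the ``iff'' clause. Your $\dim\mathcal H=1$ analysis is wrong: a TP map from $B(\mathbb C)$ is $x\mapsto xD$ with $\Tr D=1$, and HP only forces $D$ to be Hermitian, not positive semidefinite. If $\dim\mathcal K\ge 2$, taking $D$ indefinite (e.g.\ $\operatorname{diag}(2,-1)$) gives an HPTP map that is not SN, so $\mathcal{SN}\subsetneq\mathcal{HP}$; and $D=\ket 0\bra 0$ gives a map that is SPR (indeed CPTP) but not SP, so $\mathcal{SP}\subsetneq\mathcal{SPR}$. Hence your claim that ``all five classes coincide'' when $\dim\mathcal H=1$ is false --- only $\mathcal{CP}=\mathcal P=\mathcal{SPR}=\mathcal{SN}$ collapses --- and the ``if'' direction does not go through as written (this also shows the equality criterion must be read with some care; note the paper's own supplemental proof assumes both dimensions exceed one and establishes only the inclusions, never the equality criterion). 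For the strictness direction with both dimensions at least $2$, your witnesses are only sketched: you do not verify that the tensor/direct-sum embeddings of the qubit examples preserve ``not CP'', ``not positive'', ``not SPR'' (and a tensor-with-identity embedding is unavailable unless the dimensions factor appropriately), and you give no witness at all for $\mathcal{SP}\subsetneq\mathcal{SPR}$, though the pure-state replacement channel discussed after \cref{def:SP} does this immediately. These separations are all fixable --- e.g.\ the supplement's map $X\mapsto\Tr(X)E^{\mathcal K}_{11}+\Tr(E^{\mathcal H}_{22}X)(E^{\mathcal K}_{11}-E^{\mathcal K}_{22})$ separates $\mathcal{SN}$ from $\mathcal{SPR}$ in any dimensions $\ge 2$ --- but as written the equality criterion is not proved.
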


It is straightforward to see that $\mathcal {SP}$ is a subset of $\mathcal {SN}$ from their definitions. The relation between $\mathcal {SN}$ and $\mathcal {SP}$ is, in fact, stronger than inclusion. 
\begin{theorem}[Relation between SP and SN]\label{thm:SN_SP}
    The interior of $\mathcal {SN}$ is $\mathcal {SP}$, the closure of $\mathcal {SP}$ is $\mathcal {SN}$, i.e.
         \[
         \begin{array}{lr}
           \operatorname{int}(\mathcal {SN})= \mathcal {SP},   &  \mathcal{SN}=\overline{\mathcal {SP}}.
         \end{array}
         \] \label{thm:set_relations.SN_regular}
\end{theorem}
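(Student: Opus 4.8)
The plan is to prove the two equalities separately, using from \cref{thm:set_relations} that $\mathcal{SP}$ is open and $\mathcal{SN}$ is closed (both relative to the affine space $\mathcal{HP}$) and from \cref{thm:set_relations.inclusions} that $\mathcal{SP}\subseteq\mathcal{SN}$; we may assume $\dim\mathcal{H}\ge 2$ and $\dim\mathcal{K}\ge 2$, as otherwise \cref{thm:set_relations.inclusions} already collapses all the sets in sight. For the closure, $\overline{\mathcal{SP}}\subseteq\mathcal{SN}$ is immediate since $\mathcal{SN}$ is closed and contains $\mathcal{SP}$, so the content is to approximate an arbitrary $\Psi\in\mathcal{SN}$ by SPTP maps. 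I would fix the completely depolarising channel $\Psi_\ast\colon x\mapsto\Tr(x)\,\tau$ with $\tau=I_{\mathcal{K}}/\dim\mathcal{K}$; it is CPTP and, its image being the full-rank state $\tau$, it is SP. Given $\Psi\in\mathcal{SN}$, pick a density matrix $\rho$ with $\Psi(\rho)\in\mathcal{D}(\mathcal{K})$, and for $\lambda\in(0,1)$ set $\Psi_\lambda:=\lambda\Psi+(1-\lambda)\Psi_\ast$, again HPTP. Then $\Psi_\lambda(\rho)=\lambda\Psi(\rho)+(1-\lambda)\tau$ is positive definite with unit trace; replacing $\rho$ by $(1-t)\rho+t\,I_{\mathcal{H}}/\dim\mathcal{H}$ for small $t>0$ makes the input invertible without spoiling positivity of the output, so $\Psi_\lambda\in\mathcal{SP}$, and letting $\lambda\to 1$ gives $\Psi\in\overline{\mathcal{SP}}$.

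For the interior, $\mathcal{SP}$ is open and contained in $\mathcal{SN}$, so $\mathcal{SP}\subseteq\operatorname{int}(\mathcal{SN})$, and the substance is the reverse inclusion, which I would prove contrapositively: every $\Psi\in\mathcal{SN}\setminus\mathcal{SP}$ is a boundary point of $\mathcal{SN}$. By \cref{thm:SP_noinv}, failure of semi-positivity means the compact convex set $C:=\Psi(\mathcal{D}(\mathcal{H}))$ misses the full-rank states on $\mathcal{K}$, while semi-nonnegativity means $C\cap\mathcal{D}(\mathcal{K})\ne\emptyset$. Both $C$ and the relatively open set of full-rank states lie in the affine hyperplane of unit-trace Hermitian operators on $\mathcal{K}$, so separating them yields a Hermitian $H$ with $\Tr(H\sigma)\ge\Tr(H\Psi(\rho))$ for all $\sigma\in\mathcal{D}(\mathcal{K})$ and all $\rho\in\mathcal{D}(\mathcal{H})$; subtracting $\lambda_{\min}(H)\,I$ we may take $H\succeq 0$, $H\ne 0$, $H$ not a scalar multiple of the identity, and $\Tr(H\Psi(\rho))\le 0$ for every density matrix $\rho$. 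Since $\{I_{\mathcal{K}},H\}$ is linearly independent, I can construct a Hermitian-preserving trace-annihilating map $\Delta\colon B(\mathcal{H})\to B(\mathcal{K})$ whose adjoint annihilates $I_{\mathcal{K}}$ and sends $H$ to $-I_{\mathcal{H}}$ (prescribe the adjoint on $\operatorname{span}\{I_{\mathcal{K}},H\}$, let it vanish on a complement, and extend $\mathbb{C}$-linearly). Then $\Psi_\epsilon:=\Psi+\epsilon\Delta$ is HPTP, tends to $\Psi$ as $\epsilon\to 0$, and satisfies $\Tr(H\Psi_\epsilon(\rho))=\Tr(H\Psi(\rho))-\epsilon\le-\epsilon<0$ for every density matrix $\rho$; with $H\succeq 0$ this forces $\Psi_\epsilon(\rho)$ to have a negative eigenvalue, so $\Psi_\epsilon\notin\mathcal{SN}$, whence $\Psi\notin\operatorname{int}(\mathcal{SN})$.

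The main obstacle I anticipate is the interior half, and within it the step that turns ``$\Psi$ is not SP'' into a usable perturbation leaving $\mathcal{SN}$: one must extract the separating functional $H$ and then realise the perturbation direction honestly inside $\mathcal{HP}$, i.e.\ as a Hermitian-preserving, trace-annihilating $\Delta$ with the correct sign across the whole state space. The closure half is comparatively routine, being essentially a convex mixture with one well-chosen interior map. It is also worth being careful with the degenerate one-dimensional cases and with the fact that ``interior'' and ``closure'' are taken in the subspace topology of $\mathcal{HP}$ rather than of the full space of Hermitian-preserving maps, but these are bookkeeping rather than genuine difficulties.
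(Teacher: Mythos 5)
Your proof is correct, and the two halves compare differently with the paper. The closure half ($\mathcal{SN}=\overline{\mathcal{SP}}$) is essentially the paper's own argument: mix $\Psi$ with a replacement channel onto an invertible state and let the mixing parameter vanish (the paper invokes \cref{thm:SP_noinv} to avoid your extra step of perturbing the input state, but this is cosmetic). The interior half is where you genuinely diverge. The paper proves $\operatorname{int}(\mathcal{SN})\subseteq\mathcal{SP}$ directly: interiority lets one pull $\Psi$ slightly \emph{past} itself away from the replacement channel $\Phi(x)=\Tr(x)b$ (with $b$ an invertible density matrix) while staying in $\mathcal{SN}$; writing $\Psi$ as a convex combination of this perturbed SN map and $\Phi$ and evaluating on a suitable state gives a strictly positive output, so $\Psi\in\mathcal{SP}$ by \cref{thm:SP_noinv} — a two-line perturbation trick with no separation theorem. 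You instead argue the contrapositive, showing every $\Psi\in\mathcal{SN}\setminus\mathcal{SP}$ is a boundary point by separating $\Psi(\mathcal{D}(\mathcal{H}))$ from the full-rank states and building an explicit Hermitian-preserving, trace-annihilating direction $\Delta$ along which $\Psi+\epsilon\Delta$ leaves $\mathcal{SN}$; this is heavier machinery but in effect re-derives (and puts to work) the Farkas-type alternative of \cref{thm:farkasSP}, and it buys something the paper's proof does not: explicit escape directions witnessing that $\mathcal{SN}\setminus\mathcal{SP}$ is exactly the boundary. One point you should make explicit rather than assert: the separation must be \emph{proper}, i.e.\ $H$ not a real multiple of $I_{\mathcal{K}}$, otherwise $\Tr(H\,\cdot)$ is constant on the unit-trace hyperplane and your construction of $\Delta$ (which needs $\{I_{\mathcal{K}},H\}$ linearly independent) collapses; properness does hold here because the full-rank states form a nonempty relatively open convex subset of that hyperplane disjoint from the compact convex set $\Psi(\mathcal{D}(\mathcal{H}))$, and one should also record that $\inf_{\sigma}\Tr(H\sigma)=\lambda_{\min}(H)$ to justify that the shifted $H$ satisfies $\Tr(H\Psi(\rho))\le 0$ on all states. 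These are finishing touches, not gaps.
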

This structure shows that there are not too many well-defined maps left for $\Phi$ to be non-invertible.

\begin{prop}
    The convex hulls $\Conv(\mathcal{SP})$ and $\Conv(\mathcal{SN})$ of $\mathcal{SP}$ and $\mathcal{SN}$ are the set of HPTP maps, i.e. $\Conv(\mathcal{SP}) = \Conv(\mathcal{SN}) = \mathcal{HP}$.
\end{prop}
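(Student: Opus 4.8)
The plan is to reduce everything to the single nontrivial inclusion $\mathcal{HP}\subseteq\Conv(\mathcal{SP})$. The inclusion $\Conv(\mathcal{SP})\subseteq\Conv(\mathcal{SN})$ is immediate from $\mathcal{SP}\subseteq\mathcal{SN}$ (\cref{thm:set_relations.inclusions}), and $\Conv(\mathcal{SN})\subseteq\mathcal{HP}$ holds because $\mathcal{HP}$, being an affine subspace of the real vector space of HP maps (it is the zero set of the affine condition of trace preservation), is convex and contains $\mathcal{SN}$. Hence once $\mathcal{HP}\subseteq\Conv(\mathcal{SP})$ is shown, all three sets coincide.

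To prove $\mathcal{HP}\subseteq\Conv(\mathcal{SP})$ I would first record the affine structure of $\mathcal{HP}$. Fix an invertible density matrix $\sigma\in\mathcal D(\mathcal K)$ (say $\sigma=I_{\mathcal K}/\dim\mathcal K$) and let $\Phi_\sigma\colon x\mapsto\tr(x)\sigma$ be the corresponding preparation channel, which is CPTP. Let $V$ be the real vector space of trace-annihilating HP maps $B(\mathcal H)\to B(\mathcal K)$. Since $\Psi-\Phi_\sigma$ is trace-annihilating for every $\Psi\in\mathcal{HP}$, and $\Phi_\sigma+V\subseteq\mathcal{HP}$, we have $\mathcal{HP}=\Phi_\sigma+V$. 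For an invertible density matrix $\rho\in\mathcal D(\mathcal H)$ set $\mathcal W_\rho:=\{\Theta\in V:\Theta(\rho)=0\}$, a linear subspace of $V$. The key point is that $\Phi_\sigma+\mathcal W_\rho\subseteq\mathcal{SP}$: for $\Theta\in\mathcal W_\rho$ the HPTP map $\Phi_\sigma+\Theta$ satisfies $(\Phi_\sigma+\Theta)(\rho)=\sigma$, an invertible density matrix, while $\rho$ itself is an invertible density matrix, so $\Phi_\sigma+\Theta$ is SP by \cref{def:SP}. Therefore $\mathcal{SP}\supseteq\Phi_\sigma+\bigcup_{\rho}\mathcal W_\rho$, and, taking convex hulls and using that the convex hull of a union of linear subspaces equals their sum (a finite sum $\sum_k w_k$ with $w_k\in\mathcal W_{\rho_k}$ equals the uniform convex combination $\tfrac1n\sum_k nw_k$ with $nw_k\in\mathcal W_{\rho_k}$; the reverse inclusion is trivial), we obtain
\begin{equation*}
\Conv(\mathcal{SP})\ \supseteq\ \Phi_\sigma+\Conv\Bigl(\bigcup_{\rho}\mathcal W_\rho\Bigr)\ =\ \Phi_\sigma+\sum_{\rho}\mathcal W_\rho .
\end{equation*}
It thus suffices to prove the spanning identity $\sum_{\rho}\mathcal W_\rho=V$, which then yields $\Conv(\mathcal{SP})\supseteq\Phi_\sigma+V=\mathcal{HP}$.

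For the spanning identity I would first note the elementary fact that for any Hermitian matrix $G$ on $\mathcal H$ which is indefinite (or zero) there is an invertible density matrix $\rho$ with $\tr(G\rho)=0$ — e.g.\ by an intermediate-value argument on diagonal densities in an eigenbasis of $G$. Then fix Hermitian bases $\{E_\alpha\}$ of $B(\mathcal H)$ and $\{H_\beta\}$ of $B(\mathcal K)$ with $E_1=I_{\mathcal H}$, $H_1=I_{\mathcal K}$ and all remaining $E_\alpha$, $H_\beta$ traceless; the maps $\Delta_{\alpha\beta}\colon x\mapsto\tr(E_\alpha x)H_\beta$ with $\beta\ge2$ then form a real basis of $V$. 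For $\alpha\ge2$ the matrix $E_\alpha$ is nonzero and traceless, hence indefinite, so there is an invertible $\rho$ with $\tr(E_\alpha\rho)=0$, giving $\Delta_{\alpha\beta}(\rho)=0$, i.e.\ $\Delta_{\alpha\beta}\in\mathcal W_\rho$. For $\alpha=1$ — the only obstructed case, since $\Delta_{1\beta}(\rho)=H_\beta\neq0$ for every density $\rho$, so no single $\mathcal W_\rho$ contains it — I would write $I_{\mathcal H}=G_1+G_2$ with $G_1,G_2$ Hermitian and both indefinite (possible since $\dim\mathcal H\ge2$); then $\Theta_i\colon x\mapsto\tr(G_ix)H_\beta$ lies in $\mathcal W_{\rho_i}$ for an invertible $\rho_i$ with $\tr(G_i\rho_i)=0$, while $\Theta_1+\Theta_2=\Delta_{1\beta}$, so $\Delta_{1\beta}\in\sum_\rho\mathcal W_\rho$. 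Every basis vector of $V$ therefore lies in $\sum_\rho\mathcal W_\rho$, which proves the identity and, with the previous paragraph, the whole proposition.

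I expect the spanning identity to be the crux, and within it the ``diagonal'' directions $x\mapsto\tr(x)H_\beta$, which no single $\mathcal W_\rho$ contains; the decomposition $I_{\mathcal H}=G_1+G_2$ into two indefinite Hermitians is exactly the device that repairs this, and it is also where the hypothesis $\dim\mathcal H\ge2$ enters. I would carry out the argument under $\dim\mathcal H\ge2$ (the case $\dim\mathcal K=1$ is trivial, $\mathcal{HP}=\{\tr\}$; for $\dim\mathcal H=1$ the set $\mathcal{SP}$ collapses to the invertible-density preparations and $\Conv(\mathcal{SP})$ is then the CPTP set, so the identity should be read with that caveat). Everything else — the affine identity $\mathcal{HP}=\Phi_\sigma+V$, convexity of $\mathcal{HP}$, the convex-hull-of-subspaces fact, and the production of the invertible $\rho$'s from indefiniteness — is routine; moreover, unwinding the argument shows concretely that every HPTP map is a uniform average of finitely many SP maps.
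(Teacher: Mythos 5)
Your argument is correct, but it takes a different route from the paper's. The paper's proof is a direct two-term decomposition: fix a Hermitian basis $\{H_i\}$ of $B(\mathcal H)$ in which $H_1$ and $H_2$ are both invertible density matrices, fix an invertible density $\rho\in B(\mathcal K)$, and given an HPTP map $\Phi$ define $\Psi_1,\Psi_2$ by $\Psi_1(H_1)=\Psi_2(H_2)=\rho$, $\Psi_1(H_2)=2\Phi(H_2)-\rho$, $\Psi_2(H_1)=2\Phi(H_1)-\rho$, and $\Psi_i(H_k)=\Phi(H_k)$ for $k\geq 3$; each $\Psi_i$ is then HPTP and SP, and $\Phi=\tfrac12(\Psi_1+\Psi_2)$, so every HPTP map is an average of exactly two SP maps. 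You instead set up the affine decomposition $\mathcal{HP}=\Phi_\sigma+V$, observe that $\Phi_\sigma+\mathcal W_\rho\subseteq\mathcal{SP}$, and reduce to the spanning identity $\sum_\rho\mathcal W_\rho=V$, which you verify basis vector by basis vector using indefiniteness/intermediate-value arguments, with the splitting $I_{\mathcal H}=G_1+G_2$ doing for the $\operatorname{tr}(x)H_\beta$ directions what the paper's two invertible basis densities $H_1,H_2$ do for it. Both exploit the same underlying freedom (SP constrains the map at only one invertible density), but the paper's construction is shorter and yields the sharper quantitative statement (a two-term convex combination suffices), whereas yours is more structural and makes explicit exactly which directions obstruct a single SP witness; note also that your proof, like the paper's, genuinely needs $\dim\mathcal H\geq 2$, and your parenthetical about $\dim\mathcal H=1$ is slightly off — there $\Conv(\mathcal{SP})$ is the set of preparations of \emph{invertible} density matrices, a proper (non-closed) subset of the CPTP preparations — though this degenerate case is outside the regime the paper's own proof covers and does not affect your main argument.
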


The dual map is a natural object to consider. Mathematically, $\Psi$ is positive iff $\Psi^*$ is positive. Physically, if $\Psi$ characterizes the dynamics in the Schrodinger picture, then $\Psi^*$ represents the dynamics in the Heisenberg picture. 
The following theorem demonstrates the dual map continues to sharply characterize semi-positivity. It can be viewed as a quantum generalization of \cite[Lemma 1.5]{dorsey2016new}.

\begin{theorem}\label{thm:farkasSP}
Let $\mathcal{H}$ and $\mathcal{K}$ be finite dimensional Hilbert spaces and let $\Psi$ be an HPTP map from $B(\mathcal{H}) \to B(\mathcal{K})$.  Then, one and only one of the following statements is true:

\begin{enumerate}
   \item $\Psi$ is a semi-positive map.
   \item $-\Psi^*$ is a semi-nonnegative map, where $\Psi^*$ denotes the dual map (or adjoint) of $\Psi$.
  \end{enumerate}
\end{theorem}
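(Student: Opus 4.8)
\emph{Proof proposal.} The statement is a conic theorem-of-the-alternative (a noncommutative Farkas lemma), so the plan is: first show the two options are mutually exclusive by a trace-pairing estimate, and then show that whenever $\Psi$ fails to be semi-positive, a separating-hyperplane argument manufactures a positive semidefinite $\sigma$ certifying that $-\Psi^*$ is semi-nonnegative. Before starting I would record two reformulations. Since $\Psi$ is trace-preserving, $\Psi$ is SP (\cref{def:SP}) if and only if there is a positive semidefinite $x\in B(\mathcal H)$ with $\Psi(x)$ positive definite; one direction is trivial, and for the other one perturbs $x\mapsto x+\delta I$ and renormalizes, using that positive definiteness is open and $\Psi$ is continuous. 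Likewise, reading \cref{def:SN} for the (non-trace-preserving) map $-\Psi^*$ in the way that matches the matrix notion behind \cite{dorsey2016new} --- a nonzero density matrix being allowed to map to the zero operator --- ``$-\Psi^*$ is semi-nonnegative'' means there is a nonzero positive semidefinite $\sigma\in B(\mathcal K)$ with $\Psi^*(\sigma)\preceq 0$.

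\emph{The alternatives are mutually exclusive.} Suppose $\rho$ is an invertible density matrix with $\Psi(\rho)$ an invertible density matrix, and suppose $\sigma\neq 0$ is positive semidefinite with $\Psi^*(\sigma)\preceq 0$. From $\Tr(\Psi(\rho)\sigma)=\Tr(\rho\,\Psi^*(\sigma))=-\Tr\!\big(\rho\,(-\Psi^*(\sigma))\big)$ and nonnegativity of the trace on a product of two positive semidefinite operators, the right-hand side is $\le 0$; but $\Psi(\rho)$ is positive definite and $\sigma$ is positive semidefinite and nonzero, so $\Tr(\Psi(\rho)\sigma)>0$ --- a contradiction.

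\emph{If $\Psi$ is not semi-positive, then $-\Psi^*$ is semi-nonnegative.} By the first reformulation the hypothesis says precisely that the open convex cone $\mathcal Q:=\{Y\in B(\mathcal K):Y\succ 0\}$ is disjoint from the convex cone $C:=\Psi(\{x\in B(\mathcal H):x\succeq 0\})$, which lies in the Hermitian operators (as $\Psi$ is HP) and contains $0$. I would apply the separation theorem for a nonempty open convex set and a disjoint convex set: there is a nonzero real-linear functional on the Hermitian part of $B(\mathcal K)$, necessarily of the form $X\mapsto\Tr(MX)$ with $M$ a nonzero Hermitian operator, weakly separating $C$ from $\mathcal Q$ and strict on the open side. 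Because both sets are cones, the affine constant in the separation must be $0$; this forces $\Tr(MX)\le 0$ for every $X\succeq 0$ in $B(\mathcal K)$ --- equivalently, testing on rank-one $X$, $M\preceq 0$ --- and $\Tr(M\,\Psi(x))\ge 0$ for every $x\succeq 0$ in $B(\mathcal H)$, i.e. $\Tr(\Psi^*(M)\,x)\ge 0$ for all such $x$, i.e. $\Psi^*(M)\succeq 0$. Setting $\sigma:=-M$ gives $\sigma\succeq 0$, $\sigma\neq 0$, and $-\Psi^*(\sigma)=\Psi^*(M)\succeq 0$, so $-\Psi^*$ is semi-nonnegative. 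Together with the previous paragraph this yields the claimed exclusive alternative.

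The step I expect to be the main obstacle is handling the strict/open side of the separation correctly. The cone $C$ need not be closed (since $\Psi$ need not be surjective --- e.g. for a replacement channel), so it is essential to separate the \emph{open} cone $\mathcal Q$ rather than to try to separate a point from $C$; and after separating one must still upgrade ``$\Tr(MX)<0$ on $\mathcal Q$'' to ``$M\preceq 0$'' and pin the affine constant to $0$, which is exactly where the homogeneity of the two cones enters. The remaining ingredients --- the perturbation lemma recasting SP as a positive-semidefinite feasibility statement, and the bookkeeping around ``semi-nonnegative'' for the non-trace-preserving dual map --- are routine once stated carefully.
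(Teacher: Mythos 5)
Your proof is correct and takes essentially the same route as the paper's: a separating-hyperplane (Farkas-type) argument combined with the adjoint duality $\Tr(\sigma\,\Psi(x))=\Tr(\Psi^*(\sigma)\,x)$, the only difference being that the paper separates the two sets of invertible density matrices and chooses a supporting hyperplane $\{X:\Tr(\rho X)=0\}$ of the state body, while you separate the corresponding cones (using the SP reformulation of Lemma~\ref{thm:SP_noinv}) and pin the separation constant to zero by homogeneity. Your explicit mutual-exclusivity step via the trace pairing is a worthwhile addition, since the paper's closing sentence is just the contrapositive of its first implication and leaves exclusivity implicit.
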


\section{Application in Quantum Error Correction: Crises or Opportunities?}

Quantum error correction and fault-tolerance for non-Markovian noise have been considered in~\cite{terhal2005fault,aliferis2005quantum}. In previous studies, assumptions are made about the interaction amongst the environments of the qubits. 
From previous sections, we learnt that non-CP SPTP maps correspond to quantum non-Markovian noise without direct assumptions on interaction strength.
Can the current quantum error correction framework still correct non-Markovian noise?
In this section, we consider the noise to be an SPTP map. 
The recovery channel must still be a CPTP map since CPTP is context-independent. 

An appropriate representation for SPTP maps is necessary to formulate this problem mathematically. 
Whether a linear map $\Psi$ is Hermitian-preserving or completely positive can be easily checked by writing out the Choi representation $J(\Psi)$ of $\Psi$. If $J(\Psi)$ is a Hermitian matrix, the map $\Psi$ is HP. 
Similarly, if $J(\Psi)$ is positive semidefinite, $\Psi$ is CP~\cite{watrous2018theory}.
However, the positivity of the Choi representation does not signal a non-CP SP or non-CP SN map from a HP map. 
We provide semidefinite programming to determine the semi-positivity of a non-CP HPTP map in Sup. Mat.

Any HPTP map has an operator sum representation similar to Kraus operators $\Psi: \{\sign(i), E_i\}$.
The action of $\Psi$ is $\Psi(\rho) = \sum_i \sign(i) E_i \rho E_i^\dag$, where $\sign(i)$ is the sign function, $\sign(i) = -1, i\in\mathbf{J}$ for a non-empty subset $\mathbf{J}$ of $i$, otherwise $\sign(i) = 1$.
Discussion on representations of HPTP maps can be found in Section V of the supplemental material.
With a modification of the proof in~\cite[Theorem 10.1]{nielsen2010quantum}, we prove that the Quantum Error Correction Criteria, a.k.a. Knill-Laflamme condition~\cite{knill1997theory,knill2000theory}, is still sufficient for correcting SPTP errors (in fact, any HPTP errors).

\begin{theorem}\label{thm:sufficient}
Let $\mathbf{C}$ be a code space, $P$ be the projector onto the code space $\mathbf{C}$. The operator sum representation of the semi-positive trace-preserving (SPTP) noise map $\mc{N}$ is given by $\{\sign(i), E_i\}$, where $\sign(i)$ is the sign function of $\mc{N}$ (for a non-empty subset $\mathbf{J}$ of $i$, $\sign(i) = -1, i\in\mathbf{J}$, otherwise $\sign(i) = 1$). A sufficient condition for a CPTP recovery map $\mc{R}$ correcting $\mc{N}$ is that 
\begin{equation}\label{eq:klcondition}
    PE_iE_j^\dag P = \alpha_{ij} P
\end{equation}
where $\alpha$ is a Hermitian matrix.
\end{theorem}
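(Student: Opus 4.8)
The plan is to run the standard Knill--Laflamme recovery construction (as in \cite[Theorem 10.1]{nielsen2010quantum}) while carefully tracking the sign pattern. Write $s_i=\sign(i)\in\{+1,-1\}$, so $\mc N(\rho)=\sum_i s_i E_i\rho E_i^{\dagger}$; since $\mc N$ is trace preserving we have $\sum_i s_i E_i^{\dagger}E_i=I$, and sandwiching by $P$ and invoking \eqref{eq:klcondition} gives the normalisation $\sum_i s_i\alpha_{ii}=1$. It suffices to exhibit one CPTP map $\mc R$ with $\mc R\big(\mc N(P\rho P)\big)=P\rho P$ for every density matrix $\rho$; by linearity we may assume $\rho=P\rho P$ throughout. (The insistence that $\mc R$ be CPTP is exactly the physically admissible class for a recovery operation, as discussed above, so this is the right target.)

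First I would diagonalise the matrix $\alpha$, which is Hermitian and in fact positive semidefinite (for any choice of sign convention, $\sum_{ij}c_i\bar c_j\alpha_{ij}P$ is a quantity of the form $(P A)(P A)^{\dagger}\succeq 0$ with $A=\sum_i c_iE_i$). So $\alpha=u^{\dagger}\operatorname{diag}(d_1,d_2,\dots)u$ with $u$ unitary and each $d_k\ge 0$; passing to the rotated operators $F_k$ (the corresponding $u$-combinations of the $E_i$), relation \eqref{eq:klcondition} becomes $PF_k^{\dagger}F_lP=d_k\,\delta_{kl}\,P$. For the indices with $d_k>0$ a polar decomposition writes the restriction as $F_kP=\sqrt{d_k}\,V_kP$ with $V_k$ a partial isometry with initial space $\mathbf C$, which we take to be unitary on $\mc H$; for $d_k=0$ we have $F_kP=0$. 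Vanishing of the off-diagonal terms forces the error subspaces $W_k:=V_k(\mathbf C)$ to be mutually orthogonal. I then set $R_k:=PV_k^{\dagger}$ for the indices with $d_k>0$ and adjoin extra Kraus operators supported on the orthogonal complement of $\bigoplus_k W_k$ so that $\mc R$ is trace preserving; this $\mc R$ is CPTP, $R_mF_kP=\sqrt{d_k}\,PV_m^{\dagger}V_kP=\sqrt{d_k}\,\delta_{mk}\,P$ by the orthogonality just established, and the adjoined operators annihilate $\mc N(P\rho P)$ because the latter is supported on $\bigoplus_k W_k$.

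Next I would compute $\mc R(\mc N(P\rho P))$ directly. Re-expressing the noise in the $F_k$-frame gives $\mc N(P\rho P)=\sum_{k,l}M_{kl}\,F_kP\rho P\,F_l^{\dagger}$ with $M$ a Hermitian matrix assembled from the $s_i$ and $u$. This is the one genuine departure from the CPTP proof: the signs live in the original $E_i$-frame and are scrambled by $u$, so $M$ is \emph{not} diagonal and $\mc N$ does \emph{not} reduce to an orthogonal mixture of unitary conjugations. The saving point is that applying $\mc R(\sigma)=\sum_m R_m\sigma R_m^{\dagger}+(\text{completion})$ and using $R_mF_kP=\sqrt{d_k}\,\delta_{mk}P$ annihilates every cross term, leaving $\mc R(\mc N(P\rho P))=\big(\sum_m d_m M_{mm}\big)\,P\rho P$. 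Unwinding the definitions shows $\sum_m d_m M_{mm}=\sum_i s_i\alpha_{ii}$, which equals $1$ by the trace-preservation identity derived in the first paragraph; hence $\mc R(\mc N(P\rho P))=P\rho P$. Nothing in the argument used semipositivity of $\mc N$ beyond it being HPTP, which is why the conclusion holds for arbitrary HPTP noise.

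The main obstacle is precisely this scrambling: in the CPTP case diagonalising $\alpha$ simultaneously diagonalises the channel, so there are simply no cross terms, whereas here one must argue that the recovery kills the off-diagonal $M_{kl}$ by orthogonality of the $W_k$, and separately check the normalisation $\sum_m d_m M_{mm}=1$ (which, reassuringly, turns out to be equivalent to trace preservation of $\mc N$). A preliminary technical point to clear first is that every HPTP map really does admit an operator-sum representation $\{\sign(i),E_i\}$, obtained by diagonalising its Hermitian Choi matrix, so that \eqref{eq:klcondition} and the manipulations above are well posed; with that in hand, the remainder is the bookkeeping sketched here.
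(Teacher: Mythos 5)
Your argument is correct, and it reproduces the paper's overall architecture: extract $\sum_i\sign(i)E_i^\dag E_i=I$ from trace preservation to get the normalisation $\sum_i\sign(i)\alpha_{ii}=1$, polar-decompose the (rotated) errors restricted to the code space, build the recovery from the resulting isometries onto mutually orthogonal subspaces, and observe that the recovery collapses the double sum to a multiple of $P\sigma P$. The genuine divergence is at the diagonalisation step, which is exactly where the HPTP setting differs from the CPTP one. The paper simply says ``WLOG assume $\alpha$ is diagonal,'' appealing to its supplemental unitary-equivalence theorems for signed operator-sum representations; but those theorems only allow rotations of the form $U_1\oplus U_2$ mixing Kraus operators of the same sign, which is not obviously enough to kill the block of $\alpha$ coupling the $+$ and $-$ sectors. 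You instead diagonalise $\alpha$ with an arbitrary unitary, accept that the signed-Kraus structure is destroyed, and track the resulting Hermitian coefficient matrix $M$ in the rotated frame, checking separately that the recovery annihilates all cross terms $M_{kl}$ (by orthogonality of the subspaces $W_k$ and by supporting the completion Kraus operators off $\bigoplus_k W_k$) and that $\sum_m d_m M_{mm}=\sum_i\sign(i)\alpha_{ii}=1$; both checks are straightforward and correct. What the paper's route buys is brevity and a computation formally identical to the CPTP proof; what yours buys is independence from the HPTP representation-freedom machinery and a watertight treatment of the one step that machinery does not clearly cover, while also making explicit why only the HPTP (not SP) structure is used and why $\alpha$ is automatically positive semidefinite so the $\sqrt{d_k}$ polar step is legitimate. (The mismatch between the condition as stated, $PE_iE_j^\dag P=\alpha_{ij}P$, and the form $PE_i^\dag E_jP$ actually used in both your argument and the paper's is inherited from the paper and is immaterial.)
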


\begin{proof}

WLOG, assume $\alpha$ is diagonal. 

Polar decomposition $E_k P = U_k\sqrt{P E_k^\dag E_k P} = \sqrt{\alpha_{kk}} U_k P$.
Let $P_k : = U_k P U_k^\dag = {E_k P U_k}/{\sqrt{d_{kk}}}$. Since $\alpha$ is diagonal, $P_k\cdot P_l = 0$ for $k\neq l$.
Let $R_k = U_k^\dag P_k$, we have 
\begin{align*}
\mathcal{R}\circ\mathcal{N}(\rho) &= \sum_{kl} \sign(l) U_k^\dag P_k E_l P \sigma P E_l^\dag P_k U_k\\
& = \sum_k \alpha_{kk} \sign(k) P \sigma P
\end{align*}
Semi-positive maps are also Hermitian preserving (HP).
For a trace-preserving HP map, $\sum_i \sign(i) E_i^\dag E_i = I$. 
\[
P\left(\sum_i \sign(i) E_iE_i^\dag\right) P = \sum_i \sign(i) \alpha_{ii} P
\]
Therefore, $\sum_i \alpha_{ii} \sign(i) = 1$.
\[
\mc{R}\circ\mc{N}(P\sigma P) = \sum_i \alpha_{ii} \sign(i) P\sigma P = P\sigma P
\]
\end{proof}

However, the Knill-Laflamme condition is no longer a necessary condition.
When $\mc{N}^{-1}$ is CPTP, we can choice $\mc{R} = \mc{N}^{-1}$. And $\mc{R}\circ\mc{N}(\rho) = \rho$ holds for all density matrices. In this case, the Knill-Laflamme condition fails. That means we have a new way to correct this type of error that does not involve subspace codes.
The information backflow in SPTP noise allows parts of the information to be restored without active correction. 

\section{Outlook}

\Cref{thm:semi-postive} and \Cref{thm:SN} offer ways to interpret and realize SPTP and SNTP maps in quantum systems. We shall point out that it may not be the only physical interpretation. The connection between SPTP/SNTP maps and other notions of physical maps, such as assignment maps and $\mathbb{C}$-linear HPTP maps, awaits to be studied. 

The set $B$ defined in~\cref{fig:scheme} is the set of density matrices mapped to density matrices by $\Psi$.
Although for any given $\rho\in B$, we can find a pair of CPTP maps $\{\Phi_\rho, \Xi_\rho\}$ such that $\rho$ is in the range of $\Phi_\rho$, $B$ can not be covered by one construction $\{\Phi, \Xi\}$ in general.
In our construction for~\cref{eg:transpose_map}, only $1/27$ of the Bloch sphere is covered.
What stops us from conveying more elements in $B$ in one pair of $\{\Phi, \Xi\}$? 
How can one find a maximal division for a given SPTP map $\Psi$? Many interesting questions arise here.

The single qubit non-SP SNTP maps only send one pure state to a pure state (Section III, Sup. Mat.). They map other density matrices to non-density matrices. For the physical scenarios that these maps can characterize, information backflow would not happen, and these maps are not uniquely defined.
The distinction between SN and SP becomes tricky in higher-dimensional Hilbert spaces $\mc{H}$, mainly due to the geometry of quantum states $\mathcal{D}(\mathcal{H})$. Unlike single qubit density matrices which form the Bloch sphere, a higher dimensional unit ball does not represent $\mathcal{D}(\mathcal{H})$~\cite{bengtsson2017geometry}. Non-pure, not full-rank states comprise flat facets on the surface of $\mathcal{D}$. A non-SP SNTP map can map one of those facets (or part of it) to valid quantum states.
We conjecture that non-SP SNTP maps still signal no information backflow and are not uniquely defined even for higher dimensional $\mathcal{H}$. 

It would also be interesting to study how would current results on open system simulation, noise characterization protocols etc change if relaxing a possible CPTP assumption.

\section{Acknowledgement}
NC and ZM thank Daniel Gottesman for the helpful discussion.
RP would like to acknowledge the support of Discovery grant no. RGPIN-2022-04149.
RL thanks Mike and Ophelia Lazaridis for funding.

\bibliography{ref}

\begin{thebibliography}{10}

\bibitem{carteret2008dynamics}
Hilary~A Carteret, Daniel~R Terno, and Karol {\.Z}yczkowski.
\newblock Dynamics beyond completely positive maps: Some properties and
  applications.
\newblock {\em Physical Review A}, 77(4):042113, 2008.

\bibitem{shabani2009vanishing}
Alireza Shabani and Daniel~A Lidar.
\newblock Vanishing quantum discord is necessary and sufficient for completely
  positive maps.
\newblock {\em Physical review letters}, 102(10):100402, 2009.

\bibitem{brodutch2013vanishing}
Aharon Brodutch, Animesh Datta, Kavan Modi, Angel Rivas, and Cesar~A
  Rodriguez-Rosario.
\newblock Vanishing quantum discord is not necessary for completely positive
  maps.
\newblock {\em Physical Review A}, 87(4):042301, 2013.

\bibitem{dominy2016beyond}
Jason~M Dominy and Daniel~A Lidar.
\newblock Beyond complete positivity.
\newblock {\em Quantum Information Processing}, 15(4):1349--1360, 2016.

\bibitem{nielsen2010quantum}
Michael~A Nielsen and Isaac~L Chuang.
\newblock Quantum computation and quantum information.
\newblock 2010.

\bibitem{pechukas1994reduced}
Philip Pechukas.
\newblock Reduced dynamics need not be completely positive.
\newblock {\em Physical review letters}, 73(8):1060, 1994.

\bibitem{alicki1995comment}
Robert Alicki.
\newblock Comment on “reduced dynamics need not be completely positive”.
\newblock {\em Physical review letters}, 75(16):3020, 1995.

\bibitem{shaji2005who}
Anil Shaji and Ennackal Chandy~George Sudarshan.
\newblock Who's afraid of not completely positive maps?
\newblock {\em Physics Letters A}, 341(1-4):48--54, 2005.

\bibitem{jordan2004dynamics}
Thomas~F Jordan, Anil Shaji, and Ennackal Chandy~George Sudarshan.
\newblock Dynamics of initially entangled open quantum systems.
\newblock {\em Physical Review A}, 70(5):052110, 2004.

\bibitem{salgado2004evolution}
D~Salgado, JL~S{\'a}nchez-G{\'o}mez, and M~Ferrero.
\newblock Evolution of any finite open quantum system always admits a
  kraus-type representation, although it is not always completely positive.
\newblock {\em Physical Review A}, 70(5):054102, 2004.

\bibitem{breuer2016colloquium}
Heinz-Peter Breuer, Elsi-Mari Laine, Jyrki Piilo, and Bassano Vacchini.
\newblock Colloquium: Non-markovian dynamics in open quantum systems.
\newblock {\em Reviews of Modern Physics}, 88(2):021002, 2016.

\bibitem{watrous2018theory}
John Watrous.
\newblock {\em The theory of quantum information}.
\newblock Cambridge university press, 2018.

\bibitem{rivas2014quantum}
{\'A}ngel Rivas, Susana~F Huelga, and Martin~B Plenio.
\newblock Quantum non-markovianity: characterization, quantification and
  detection.
\newblock {\em Reports on Progress in Physics}, 77(9):094001, 2014.

\bibitem{nayak2006invertible}
Ashwin Nayak and Pranab Sen.
\newblock Invertible quantum operations and perfect encryption of quantum
  states.
\newblock {\em arXiv preprint quant-ph/0605041}, 2006.

\bibitem{tsatsomeros2016geometric}
MJ~Tsatsomeros.
\newblock Geometric mapping properties of semipositive matrices.
\newblock {\em Linear Algebra and its Applications}, 498:349--359, 2016.

\bibitem{sivakumar2018semipositive}
KC~Sivakumar and MJ~Tsatsomeros.
\newblock Semipositive matrices and their semipositive cones.
\newblock {\em Positivity}, 22(1):379--398, 2018.

\bibitem{dorsey2016new}
Jonathan Dorsey, Tom Gannon, Charles~R Johnson, and Morrison Turnansky.
\newblock New results about semi-positive matrices.
\newblock {\em Czechoslovak Mathematical Journal}, 66:621--632, 2016.

\bibitem{choi1974schwarz}
Man-Duen Choi.
\newblock A {S}chwarz inequality for positive linear maps on
  ${C}^\ast$-algebras.
\newblock {\em Illinois Journal of Mathematics}, 18(4):565--574, 1974.

\bibitem{siudzinska2021interpolating}
Katarzyna Siudzi{\'n}ska, Sagnik Chakraborty, and Dariusz Chru{\'s}ci{\'n}ski.
\newblock Interpolating between positive and completely positive maps: a new
  hierarchy of entangled states.
\newblock {\em Entropy}, 23(5):625, 2021.

\bibitem{chakraborty2019information}
Sagnik Chakraborty and Dariusz Chru{\'s}ci{\'n}ski.
\newblock Information flow versus divisibility for qubit evolution.
\newblock {\em Physical Review A}, 99(4):042105, 2019.

\bibitem{terhal2005fault}
Barbara~M Terhal and Guido Burkard.
\newblock Fault-tolerant quantum computation for local non-markovian noise.
\newblock {\em Physical Review A}, 71(1):012336, 2005.

\bibitem{aliferis2005quantum}
Panos Aliferis, Daniel Gottesman, and John Preskill.
\newblock Quantum accuracy threshold for concatenated distance-3 codes.
\newblock {\em arXiv preprint quant-ph/0504218}, 2005.

\bibitem{knill1997theory}
Emanuel Knill and Raymond Laflamme.
\newblock Theory of quantum error-correcting codes.
\newblock {\em Physical Review A}, 55(2):900, 1997.

\bibitem{knill2000theory}
Emanuel Knill, Raymond Laflamme, and Lorenza Viola.
\newblock Theory of quantum error correction for general noise.
\newblock {\em Physical Review Letters}, 84(11):2525, 2000.

\bibitem{bengtsson2017geometry}
Ingemar Bengtsson and Karol {\.Z}yczkowski.
\newblock {\em Geometry of quantum states: an introduction to quantum
  entanglement}.
\newblock Cambridge university press, 2017.

\end{thebibliography}
\bibliographystyle{unsrt}

\newcommand{\veco}{\operatorname{vec}}
\newcommand{\rank}{\operatorname{rank}}

\newcommand{\lrp}[1]{\left( #1 \right) }
\newcommand{\lrb}[1]{\left[ #1 \right] }
\newcommand{\lrcb}[1]{\{ #1 \} }

\newtheorem{defnn}[theorem]{Definition}
\newtheorem{examp}[theorem]{Example}
\newtheorem{pro}[theorem]{Proposition}
\newtheorem{lemma}[theorem]{Lemma}
\renewcommand*{\thetheorem}{S\arabic{theorem}}

\newpage
\onecolumngrid
\begin{center}
\textbf{Supplemental Material: Quantum Maps Between CPTP and HPTP}
\end{center}

\section{Non-Invertible $\Phi$}

When $\Phi: \rho_S \mapsto \rho_{S'}$ is non-inveritible, the problem of finding a map $\Psi$ describes the process from $\rho_{S'}$ to $\rho_{S''}$ is that it may not be well-defined. We illustrate this view with the following example.

\begin{figure}[ht]
    \centering
    \includegraphics{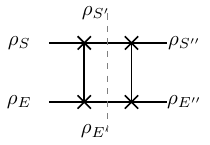}
    \caption{Two swap gates}
    \label{fig:swap}
\end{figure}
\cref{fig:swap} is an example that $\Phi$ is not invertible. All possible $\rho_S$ maps to $\rho_E$. Therefore, after the first unitary (a swap gate), the information of $\rho_S$ is completely erased from system $S$ but stored in the environment $E$. The action of the second unitary allows the information stored in $\rho_{E'}$ backflow, and the local density matrix $\rho_{S''}$ recover to $\rho_S$. 
The picture is clear when we have complete knowledge about the whole system $SE$, but if we only have access to $S$, the process from $\rho_{S'}$ and $\rho_{S''}$ can not be written as a well-defined map.

We also remark that it is possible that the information about $S$ stores not locally in $E$. It can be stored globally, shared between $S$ and $R$, and may not be seen locally in any subsystem.

\section{Unitary Operators for Realizing the Single Qubit Transpose Map}

We take $\Xi$ and $\Phi$ as defined in Example 1 of the main text. The Choi representations for $\Xi$ and $\Phi$ are given by 
\[J\lrp{\Xi}=\frac{1+\lambda}{2}\left( \ket{00}\bra{00}+\ket{11}\bra{11}+\ket{\Psi^+}\bra{\Psi^+} \right)+\frac{1-3\lambda}{2}\ket{\Psi^-}\bra{\Psi^-}\]
and
\[J\lrp{\Phi}=\frac{1-\lambda}{2}\left( \ket{01}\bra{01}+\ket{10}\bra{10}+\ket{\Phi^-}\bra{\Phi^-} \right)+\frac{1+3\lambda}{2}\ket{\Phi^+}\bra{\Phi^+}\]
respectively, where $\ket{\Psi^{\pm}}$ and $\ket{\Phi^{\pm}}$ denote the Bell states. Remark that $0<\lambda\leq\frac{1}{3}$ implies that these matrices have all non-negative eigenvalues, and thus are positive semi-definite, implying the maps $\Xi$ and $\Phi$ are completely positive. Operator-sum representations of $\Phi$ and $\Xi$ can be constructed using the operators 
\[ A_0=\sqrt{\frac{1-\lambda}{2}}\ket{0}\bra{1},\quad A_1=\sqrt{\frac{1-\lambda}{2}}\ket{1}\bra{0},\quad A_2=\frac{\sqrt{1+3\lambda}}{2}I,\quad A_3=\frac{\sqrt{1-\lambda}}{2}Z \]
for $\Phi$ and 
\[ B_0=\sqrt{\frac{1+\lambda}{2}}P_0,\quad B_1=\quad \sqrt{\frac{1+\lambda}{2}}P_1,\quad B_2= \frac{\sqrt{1+\lambda}}{2}X,\quad B_3= \frac{\sqrt{1-3\lambda}}{2}iY \]
 for $\Xi$. Here, $P_0=\ket{0}\bra{0}$, $P_1=\ket{1}\bra{1}$ are projection operators and $X,Y,Z$ denote the Pauli operators. For convenience, we shall instead use the operators 
\[C_0=A_0,\quad C_1=A_2,\quad C_2=\frac{1}{\sqrt{2}}\left(A_2+A_3\right),\quad C_3=\frac{1}{\sqrt{2}}\left( A_2-A_3 \right)\]
 for the $\Phi$ map, and the operators
\[D_0= B_0,\quad D_1= B_1,\quad D_2=\frac{1}{\sqrt{2}}\left( B_2+B_3 \right),\quad D_3=\frac{1}{\sqrt{2}}\left( B_2-B_3 \right) \]
for the $\Xi$ map. The unitary equivalence between the two representations for $\Xi$ and the two representations of $\Phi$ is obtained using the matrix $I\oplus H$ where $H$ is the Hadamard gate, $I$ is the identity matrix and $\oplus$ denotes the matrix direct sum. Unitary dilations for $\Xi$ and $\Phi$ can be obtained as follows : 
\[U_{\Phi}=\begin{pmatrix}
C_2 & -C_3 & C_0^{\dagger} & C_1^{\dagger} \\
C_3 & C_2 & C_1^{\dagger} & -C_0^{\dagger} \\
C_0 & C_1 & -C_3 & C_2 \\
C_1 & -C_0 & -C_2 & -C_3 
\end{pmatrix},\quad 
U_{\Xi}=\begin{pmatrix}
D_0 & -D_1 & D_2^{\dagger} & D_3^{\dagger} \\
D_1 & D_0 & D_3^{\dagger} & -D_2^{\dagger} \\
D_2 & D_3 & -D_1 & D_0 \\
D_3 & -D_2 & -D_0 & -D_1 
\end{pmatrix} 
\]

\section{Non-SP SNTP Maps}

\begin{examp}[SNTP but not SP]\label{eg:nonSP_app}
    Consider the single qubit HPTP map $\Psi:$
    \[
    \begin{pmatrix}
        a & b\\
        c & d
    \end{pmatrix} \mapsto
    \begin{pmatrix}
        a+2d & b\\
        c & -d
    \end{pmatrix}.
    \]
    $\Psi$ is a non-SP SNTP map, it maps the whole Bloch sphere to indefinite matrices except $\ket{0}\bra{0}$. 

    Let $\Phi$ be the CPTP map with Kruas operators $$\left\{\begin{pmatrix}
        1 & 0\\
        0 & 0
    \end{pmatrix},\begin{pmatrix}
        0 & 1\\
        0 & 0
    \end{pmatrix}\right\}.$$ The composition $\Psi \circ \Phi(\rho) = \Phi(\rho)$ is still a CPTP map.
\end{examp}

This example is provided in the main text. It is easy to see that $\Phi(\rho) = \ket{0}\bra{0}$ for any $\rho\in \mathcal{D}(\mathcal{H})$. Therefore $\Psi$ can be demonstrated in the circuit in~\cref{fig:nonSP_SN}. However, any unitary that sends $\ket{0}$ to $\ket{0}$ can replace $\Psi$ in the plot. That is to say, the process in the dashed-line box is not uniquely characterized by $\Psi$. We conjecture that this is true for all maps in $\mathcal{SN}\backslash\mathcal{SP}$.
\begin{figure}[ht]
    \centering
    \includegraphics{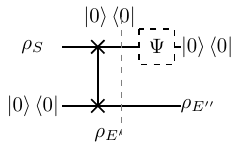}
    \caption{The SNTP map $\Phi$ in the Example~\ref{eg:nonSP_app} represents the depicted circuit.}
    \label{fig:nonSP_SN}
\end{figure}

\section{Proof for Geometric Results and Set Relations}

In this section, we provide the proofs of all statements in Sec III of the main text.
Especially, we combined Theorem 3, Proposition 2 and Theorem 4 in the main text to~\cref{thm:set_relations_app}.

\begin{lemma} \label{thm:SP_noinv}
Let $\mathcal{H}$ and $\mathcal{K}$ be finite dimensional Hilbert spaces and let $\Psi$ be an HP map from $B(\mathcal{H})$ to $B(\mathcal{K})$.
Then TFAE:
 \begin{enumerate}
   \item $\Psi$ is SP.
   \item There exists $\rho\geq 0$, $\rho \in B(\mathcal H)$ with $\Psi(\rho) >0$.
  \end{enumerate}
\end{lemma}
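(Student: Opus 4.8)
The plan is to prove the two implications separately, with $(1)\Rightarrow(2)$ being immediate and all of the content living in the converse. For $(1)\Rightarrow(2)$: by \cref{def:SP}, semi-positivity already hands us an \emph{invertible} density matrix $\rho$ for which $\Psi(\rho)$ is an invertible density matrix; such a $\rho$ is in particular a positive semidefinite operator with $\Psi(\rho)>0$, which is exactly statement (2). So the real assertion is the reverse: a witness that is merely positive semidefinite — possibly rank-deficient, possibly not normalized — can always be upgraded to an invertible-density-matrix witness.

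For $(2)\Rightarrow(1)$ I would argue by a small perturbation of the witness toward the maximally mixed state. Given $\rho\geq 0$ with $\Psi(\rho)>0$, first note $\rho\neq 0$ (otherwise $\Psi(\rho)=0\not>0$ by linearity), so $\Tr\rho>0$ and after rescaling we may take $\Tr\rho=1$, i.e.\ $\rho$ is a (possibly non-invertible) density matrix. Set $\rho_t=(1-t)\rho+\tfrac{t}{\dim\mathcal H}I_{\mathcal H}$ for $t\in(0,1]$; each $\rho_t$ is an invertible density matrix. By linearity $\Psi(\rho_t)=(1-t)\Psi(\rho)+\tfrac{t}{\dim\mathcal H}\Psi(I_{\mathcal H})\to\Psi(\rho)>0$ as $t\to 0^+$, and since positive definiteness is an open condition there is $t_0>0$ with $\Psi(\rho_{t_0})>0$. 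Trace preservation gives $\Tr\Psi(\rho_{t_0})=\Tr\rho_{t_0}=1$, so $\Psi(\rho_{t_0})$ is an invertible density matrix and $\Psi$ is SP.

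I do not anticipate a genuine obstacle: this lemma is the routine ``rank and normalization do not matter'' restatement of \cref{def:SP}, which is precisely how it gets invoked in the main text (``if some $\rho\in B$ has $\Psi(\rho)$ invertible, then $\Psi$ is SP''). The only point that deserves care is keeping \emph{both} properties alive along the perturbation — mixing in $I_{\mathcal H}$ with strictly positive weight forces $\rho_t>0$, while continuity of $\Psi$ together with openness of the positive-definite cone (inside the Hermitian operators, where $\Psi(\rho_t)$ always lives) forces $\Psi(\rho_t)$ to stay positive definite for small $t$. If one wishes to read the statement for a general HP, not necessarily trace-preserving, map $\Psi$, the final normalization step should be understood with ``density matrix'' weakened to ``positive definite operator'' in the sense of \cref{def:SP}; no additional argument is needed.
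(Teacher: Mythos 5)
Your proof is correct and takes essentially the same route as the paper's: perturb the witness $\rho$ toward a full-rank state and use openness of the positive-definite cone (the paper phrases this as openness of $\Psi^{-1}$ of that cone under the continuous linear map $\Psi$) to keep the image positive definite. The only cosmetic differences are that you mix with the maximally mixed state so the perturbed witness stays normalized, whereas the paper uses $\rho+\epsilon I$ and leaves normalization implicit, and that you spell out the trivial direction and the HP-versus-HPTP normalization caveat, which the paper glosses over.
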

\begin{proof}
    Let $\mathcal S_{\mathcal K}$ be the set of invertible positive definite matrices in $B(\mathcal K)$ then $\mathcal S_{\mathcal K}$ is open in $B(\mathcal{K})$. Since $\Psi$ is continuous we have that $\Psi^{-1}(\mathcal S_{\mathcal K})$ is open in $\mathcal H$. 
    
    If 2 holds then $\rho \in \Psi^{-1}(\mathcal S_{\mathcal K})$, as this set is open there is a $\epsilon>0$ such that $\rho+\epsilon I \in \Psi^{-1}(\mathcal S_{\mathcal K})$. Since $\rho+\epsilon I >0$ holds and $\Psi(\rho+\epsilon I ) \in \mathcal S_{\mathcal K}$  we know that $\Psi$ is SP.
\end{proof}

\begin{lemma} \label{thm:range_reduection_app}
Let $\mathcal{H}$ and $\mathcal{K}$ be finite dimensional Hilbert spaces and let $\Psi$ be an HP map from $B(\mathcal{H})$ to $B(\mathcal{K})$. 
Let $\mathcal{K}_\Psi= \bigcup_{a\in B(\mathcal K)}\operatorname{Range} (\Psi(a))$. Then there is an invertible density matrix $\rho \in \mathcal H$ with $\mathcal{K}_\Psi= \operatorname{Range} (\Psi(\rho))$.
\end{lemma}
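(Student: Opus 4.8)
The plan is to first replace the target space $\mathcal{K}$ by $\mathcal{K}_\Psi$ and then run a rank‑maximization argument. Since $\Psi$ is Hermitian preserving, for Hermitian $a$ the operator $\Psi(a)$ is Hermitian with range contained in $\mathcal{K}_\Psi$, so it annihilates $\mathcal{K}_\Psi^\perp$ and restricts to a Hermitian operator on $\mathcal{K}_\Psi$; thus $\Psi$ may be viewed as an HPTP map $B(\mathcal{H})\to B(\mathcal{K}_\Psi)$, and because every $\Psi(a)$ already has its range inside $\mathcal{K}_\Psi$, for a density matrix $\rho$ the conclusion $\Range(\Psi(\rho))=\mathcal{K}_\Psi$ is equivalent to $\Psi(\rho)$ being invertible on $\mathcal{K}_\Psi$. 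So it suffices to produce one invertible density matrix that $\Psi$ maps to an invertible operator. I would also note at the outset that, since the positive‑definite density matrices span $B(\mathcal{H})$ (and, mixing in $I/\dim\mathcal{H}$, stay invertible), the union defining $\mathcal{K}_\Psi$ may be taken over invertible density matrices only.

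Now fix an invertible density matrix $\rho_0$ with $r:=\rank(\Psi(\rho_0))$ as large as possible, put $W=\Range(\Psi(\rho_0))$, and suppose toward a contradiction that $\dim W<\dim\mathcal{K}_\Psi$. Then the ranges of the $\Psi(\sigma)$ over density matrices $\sigma$ cannot all lie in $W$, so pick a density matrix $\sigma$ with $\Range(\Psi(\sigma))\not\subseteq W$. In the block decomposition $\mathcal{K}_\Psi=W\oplus W^\perp$ write $\Psi(\rho_0)=\bigl(\begin{smallmatrix}A&0\\0&0\end{smallmatrix}\bigr)$ with $A$ invertible Hermitian on $W$, and $\Psi(\sigma)=\bigl(\begin{smallmatrix}B&C\\ C^\ast&D\end{smallmatrix}\bigr)$, where $C$ and $D$ are not both zero. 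Along the segment $\rho_t=(1-t)\rho_0+t\sigma$ of invertible density matrices, for small $t>0$ the top‑left block is invertible and a Schur‑complement identity gives
\[
\rank(\Psi(\rho_t))=\dim W+\rank\bigl(tD-t^2C^\ast((1-t)A+tB)^{-1}C\bigr).
\]
If $D\neq 0$, the right‑hand side is at least $\dim W+1$ for all small $t>0$, contradicting the maximality of $r$.

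The remaining case — and this is the step I expect to be the real obstacle — is the degenerate one in which $D=P_{W^\perp}\Psi(\sigma)P_{W^\perp}$ vanishes for every density matrix $\sigma$ (equivalently $P_{W^\perp}\Psi(a)P_{W^\perp}=0$ for all $a\in B(\mathcal{H})$) and, in addition, the next‑order term $C^\ast A^{-1}C$ also vanishes for every $\sigma$. Here trace preservation must be used: the compression $a\mapsto P_{W^\perp}\Psi(a)P_{W^\perp}$ cannot be identically zero while the projected ranges $P_{W^\perp}\Range(\Psi(\sigma))$ still span $W^\perp$, because then the off‑diagonal data $C^\ast$ have to carry all of $W^\perp$ in their ranges, and a two‑parameter perturbation $(1-s-t)\rho_0+s\sigma_1+t\sigma_2$ can be arranged to increase the rank strictly, the identity $\Tr\Psi(a)=\Tr(a)$ being exactly what obstructs the cancellations. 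The cleanest packaging of the obstacle is the following reformulation: after the reduction $\mathcal{K}=\mathcal{K}_\Psi$, the image $\Psi(B(\mathcal{H}))$ is a $\ast$‑closed linear subspace of $B(\mathcal{K}_\Psi)$ whose members' ranges span $\mathcal{K}_\Psi$ and which carries a nonzero trace functional, and one must show that such a space cannot consist entirely of singular operators. Granting that, $\Psi(B(\mathcal{H}))$ contains an operator of full rank $\dim\mathcal{K}_\Psi$, hence the polynomial $h\mapsto\det(\Psi(h)|_{\mathcal{K}_\Psi})$ on the real space of Hermitian matrices is not identically zero — a holomorphic polynomial that vanishes on this real form of $B(\mathcal{H})$ vanishes identically — so it is nonzero on a dense open set, which meets the nonempty open set of invertible density matrices and yields the required $\rho$.

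Finally, I would record that trace preservation is essential rather than cosmetic: for general HP maps the statement fails, e.g.\ the HP map $B(\mathbb{C}^2)\to B(\mathbb{C}^3)$ that sends the three Pauli operators into the (three‑dimensional) space of Hermitian antisymmetric $3\times 3$ matrices and the identity to $0$ has $\mathcal{K}_\Psi=\mathbb{C}^3$, yet every $\Psi(a)$ is antisymmetric, hence singular. So the use of the trace‑preserving hypothesis — in force throughout this section — cannot be avoided, and the degenerate case above is precisely where it enters.
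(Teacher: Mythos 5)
Your reduction to producing one input with invertible image, the polynomial/density argument at the end, and the Schur-complement step in the generic case are all fine, but the proof stops exactly where you say it does, and the route you propose for closing the degenerate case cannot succeed. The packaged claim -- that a $\ast$-closed subspace of $B(\mathcal K_\Psi)$ whose members' ranges span $\mathcal K_\Psi$ and which carries a nonzero trace functional cannot consist entirely of singular operators -- is false, even when the subspace is the image of an HPTP map. Take $\mathcal H=\mathbb C^3$, $\mathcal K=\mathbb C^4$, set $M_1=i(E_{12}-E_{21})$, $M_2=i(E_{13}-E_{31})$, $M_3=i(E_{23}-E_{32})$, choose a Hermitian basis $\{I_3,\lambda_1,\dots,\lambda_8\}$ of $B(\mathbb C^3)$ with the $\lambda_j$ traceless, and define $\Psi(I_3)=3E_{44}$, $\Psi(\lambda_j)=M_j\oplus 0$ for $j=1,2,3$ and $\Psi(\lambda_j)=0$ for $j\geq 4$, extended linearly. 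This $\Psi$ is HPTP, and every output has the form $(iB)\oplus z$ with $B$ a complex antisymmetric $3\times 3$ matrix and $z\in\mathbb C$, hence rank at most $3$; yet the ranges of $E_{44}$, $M_1\oplus0$, $M_2\oplus0$ span $\mathbb C^4$, so $\mathcal K_\Psi=\mathbb C^4$ is attained by no $\Psi(a)$ whatsoever, let alone by an invertible density matrix. So trace preservation does not supply the missing step: the degenerate case you isolated is a genuine obstruction, not a removable technicality.

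This also means your closing remark undersells your own example: the lemma as stated assumes only HP, so your $3\times3$ antisymmetric map already contradicts it, and the example above shows that adding TP does not repair it. In particular you could not have recovered the paper's argument, because it founders on the same point: the paper picks a maximal-rank $X$, passes to $\rho\propto I+\epsilon X$, and asserts $\Range(\Psi(\rho))\cup\Range(\Psi(a))\subseteq\Range(\Psi(\rho+\epsilon' a))$ for all sufficiently small $\epsilon'>0$, justified by the bound $\norm{\Psi(\rho)\ket\phi}\geq\sigma_{\min}(\Psi(\rho))$ for any unit $\ket\phi$ with $\Psi(\rho)\ket\phi\neq0$; that bound is valid only when $\ket\phi$ is orthogonal to $\ker\Psi(\rho)$, and in the two examples the asserted range inclusion fails for every $\epsilon'>0$ (the perturbation rotates the kernel rather than shrinking it). What is true, and is all that the paper actually uses (in proving $\mathcal P\subseteq\mathcal{SPR}$), is the statement for positive TP maps, where $\rho=I_{\mathcal H}/\dim\mathcal H$ works directly: for any density matrix $\sigma$ one has $0\leq\Phi(\sigma)\leq (\dim\mathcal H)\,\Phi(\rho)$, hence $\Range(\Phi(\sigma))\subseteq\Range(\Phi(\rho))$, and writing an arbitrary $a$ as a combination of density matrices handles the general range. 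So your strategy is sound up to the step you flagged, but the missing lemma you hoped trace preservation would provide is not merely unproved -- it is false, and any correct version of this result needs a hypothesis beyond HPTP (such as positivity).
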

\begin{proof}
    Since $\mathcal K$ is finite dimensional there exists a $X\in B(\mathcal H)$ with $\Psi(X)$ having maximum rank in $\operatorname{Range}(\Psi)$. 
    We claim that there is a $\epsilon>0$ such that $\rho'=\epsilon X+I_{\mathcal H}>0$ has the same range as $\Psi(X)$.
    To see this, note that there is a $L>0$ such that for all $\epsilon \in (0,L)$ we have that $\epsilon X+I_{\mathcal H}>0$. 
    For all $a\in B(\mathcal K)$ define $\sigma_{\max}(a)$ to be the largest singular value of $\Psi(a)$ and let $\sigma_{\min}(a)$ be the smallest non-zero singular value of $\Psi(a)$.
    Pick $\epsilon \in (0,\min\{L,\frac{\sigma_{\min}(I_{\mathcal H})} {\sigma_{\max}(X)} \} )$, let $\ket \phi\in \mathcal K$ have $\Psi(I_{\mathcal H})\ket \phi\neq 0$ or $\Psi(X)\ket \phi\neq 0$  and for the sake of contradiction suppose that  $\Psi(\epsilon X+I_{\mathcal H})\ket \phi =0$. 
    Then, $\epsilon \Psi(X)\ket \phi =-\Psi(I_{\mathcal H})\ket \phi \neq 0$ and 
    \[
        \sigma_{\min}(I_{\mathcal H})>\epsilon \sigma_{\max}(X) \geq  \epsilon\norm{\Psi(X)\ket \phi} \geq \norm{\Psi(I_{\mathcal H})\ket \phi} \geq \sigma_{\min}(I_{\mathcal H}),
    \]
    where the norm is the norm induced by the inner product on $\mathcal K$. This shows that
    $\operatorname{Range}(\Psi(I_{\mathcal H})) \cup \operatorname{Range}(\Psi(X)) \subseteq \operatorname{Range}(\Psi(\epsilon X+I_{\mathcal H}))$
    However, $\Psi(X)$ has maximum rank so $\operatorname{Rank}(\Psi(\epsilon X+I_{\mathcal H})) = \operatorname{Rank}(\Psi(X))$, which is enough to show that $\operatorname{Range}(\Psi(\epsilon X+I_{\mathcal H})) = \operatorname{Range}(\Psi(X))$  for all $\epsilon \in (0,\min\{L,\frac{\sigma_{\min}(I_{\mathcal H})} {\sigma_{\max}(X)} \} )$. Therefore, we pick a  $\epsilon \in (0,\min\{L,\frac{\sigma_{\min}(I_{\mathcal H})} {\sigma_{\max}(X)} \} )$ and set $\rho' = \epsilon X+I_{\mathcal H}$ to prove the claim.

    Let $\rho = \frac{1}{\Tr(\rho')} \rho'$ then $\rho$ is an invertible density matrix, and from above, we know that $\Psi(\rho)$ has maximum rank. 
    It follows from definitions that $\mathcal{K}_\Psi\supseteq \operatorname{Range} (\Psi(\rho))$. 
    So consider any $a\in B(\mathcal H)$ and suppose that $\Psi(a)\ket \phi \neq 0$, by the arguments given above there is a $\epsilon'>0$ with $\operatorname{Range}(\Psi(\rho)) \cup \operatorname{Range}(\Psi(a)) \subseteq \operatorname{Range}(\Psi(\epsilon' a+\rho))$. But since $\Psi(\rho)$ has maximum rank we know that 
    $\operatorname{Range}(\Psi(\rho))  = \operatorname{Range}(\Psi(\epsilon' a+\rho)) \supseteq \operatorname{Range}(\Psi(a))$ for all $a\in B(\mathcal H)$. Hence, $\mathcal{K}_\Psi\subseteq \operatorname{Range} (\Psi(\rho))$ too. Therefore the $\mathcal{K}_\Psi=\operatorname{Range} (\Psi(\rho))$ and $\rho$ is an invertible density matrix.
\end{proof}

\begin{theorem}\label{thm:star}
	The star center of $\mathcal{SN}$ is equal to $\mathcal P$. The star center of $\mathcal{SP}$ is equal to $\mathcal P\cap \mathcal{SP}$. The star center of $\mathcal{SPR}$ is equal to $\mathcal P$.
\end{theorem}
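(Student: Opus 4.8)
The plan is to prove each identity through the two inclusions ``positive (and SP, in the $\mathcal{SP}$ case) $\subseteq$ star center'' and ``star center $\subseteq$ positive''; throughout, a ``witness'' of a map is a density matrix realizing Definition~\ref{def:SN} or Definition~\ref{def:SP}. I will assume $\dim\mathcal H,\dim\mathcal K\ge 2$, since otherwise all five classes coincide by \cref{thm:set_relations.inclusions} and the claim is trivial.

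\textbf{Forward inclusions (routine).} Fix $\Pi\in\mathcal P$ and a map $\Psi$ in the set under consideration, with witness $\rho$. Since $\Pi$ is positive and trace preserving, $\Pi(\rho)$ is a density matrix, so $(\lambda\Pi+(1-\lambda)\Psi)(\rho)=\lambda\Pi(\rho)+(1-\lambda)\Psi(\rho)$ is a convex combination of $\Pi(\rho)$ with $\Psi(\rho)$; the combination is again HPTP, so for $\mathcal{SN}$ it is a density matrix and the whole segment lies in $\mathcal{SN}$, and for $\mathcal{SP}$ invertibility of $\Psi(\rho)$ makes the combination invertible for $\lambda<1$, while $\lambda=1$ is covered by $\Pi\in\mathcal{SP}$. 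For $\mathcal{SPR}$ one extra step is needed because $\mathcal K_{\lambda\Pi+(1-\lambda)\Psi}$ can strictly contain $\mathcal K_\Psi$: one averages a witness for $\Psi$ on $\mathcal K_\Psi$ with a witness for $\Pi$ on $\mathcal K_\Pi$ to get a single invertible $\rho'$ whose image is positive definite on $\mathcal K_\Pi+\mathcal K_\Psi=\mathcal K_{\lambda\Pi+(1-\lambda)\Psi}$. Since a star center is by definition a member of the set, this already forces the $\mathcal{SP}$ star center into $\mathcal P\cap\mathcal{SP}$ rather than $\mathcal P$.

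\textbf{Reverse inclusion, the core case $\mathcal{SN}$.} I argue by contraposition: if $\Psi_0\in\mathcal{SN}$ is not positive I build $\Psi_{\mathrm{bad}}\in\mathcal{SN}$ with $[\Psi_0,\Psi_{\mathrm{bad}}]\not\subseteq\mathcal{SN}$. Non-positivity gives a pure state $\sigma_0=\ket{u_0}\bra{u_0}$ and a unit vector $\ket{v_0}$ with $\langle v_0|\Psi_0(\sigma_0)|v_0\rangle=-c_0<0$. Pick $\ket{w_0}\perp\ket{v_0}$ and set $\Psi_{\mathrm{bad}}(x)=\tr(x)\ket{w_0}\bra{w_0}+(\tr(x)-\langle u_0|x|u_0\rangle)(\ket{w_0}\bra{w_0}-\ket{v_0}\bra{v_0})$. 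A direct computation shows $\Psi_{\mathrm{bad}}$ is HPTP, sends $\sigma_0$ to the density matrix $\ket{w_0}\bra{w_0}$ and sends every other density matrix to an indefinite matrix, so $\Psi_{\mathrm{bad}}\in\mathcal{SN}$ with $B_{\Psi_{\mathrm{bad}}}=\{\sigma_0\}$; moreover $\Psi_{\mathrm{bad}}(\sigma_0)\ket{v_0}=0$ and $\langle v_0|\Psi_{\mathrm{bad}}(H)|v_0\rangle=\langle u_0|H|u_0\rangle$ on any traceless $H$. On the segment $\Psi_t=(1-t)\Psi_0+t\Psi_{\mathrm{bad}}$ one has $\langle v_0|\Psi_t(\sigma_0)|v_0\rangle=-(1-t)c_0<0$ for $t<1$, so $\sigma_0$ is never a witness of $\Psi_t$; and a compactness argument (any witness of $\Psi_t$ has a subsequential limit as $t\to1$ whose $\Psi_{\mathrm{bad}}$-image is a density matrix, hence equal to $\sigma_0$) shows that for $t$ near $1$ every witness $\tau_t$ of $\Psi_t$ is close to $\sigma_0$. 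Writing $\tau_t=\sigma_0+\delta_t H_t$ with $\delta_t\downarrow 0$, the inequality $\langle v_0|\Psi_t(\tau_t)|v_0\rangle\ge 0$ forces $\langle u_0|H_t|u_0\rangle>0$ for $t$ close to $1$, whereas $\tau_t\succeq 0$ with $\tr\tau_t=1=\langle u_0|\sigma_0|u_0\rangle$ forces $\langle u_0|H_t|u_0\rangle\le 0$ — a contradiction. Hence $\Psi_t\notin\mathcal{SN}$ for some $t\in(0,1)$, so $\Psi_0$ is not a star center of $\mathcal{SN}$.

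\textbf{Reverse inclusion for $\mathcal{SP}$ and $\mathcal{SPR}$, by reduction.} Let $\Pi_0$ be the fully depolarizing channel (which lies in $\mathcal P\cap\mathcal{SP}$) and put $\Psi:=(1-\eta)\Psi_{\mathrm{bad}}+\eta\Pi_0$. A short eigenvalue computation shows that for every $\eta\in(0,1)$ the map $\Psi$ sends a slightly depolarized version of $\sigma_0$ to a positive definite density matrix, so $\Psi\in\mathcal{SP}\subseteq\mathcal{SPR}$. On the other hand, $(1-t^\ast)\Psi_0+t^\ast\Psi_{\mathrm{bad}}\notin\mathcal{SN}$ for some $t^\ast\in(0,1)$ by the previous paragraph, and $\mathcal{SN}$ is closed by \cref{thm:SN_SP}; hence for $\eta$ small enough $(1-t^\ast)\Psi_0+t^\ast\Psi$ is still outside $\mathcal{SN}$, a fortiori outside $\mathcal{SPR}\supseteq\mathcal{SP}$. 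So a non-positive $\Psi_0$ is a star center of neither $\mathcal{SP}$ nor $\mathcal{SPR}$, and together with the forward inclusions this gives all three identities. The main obstacle is entirely in the $\mathcal{SN}$ reverse inclusion: because membership in $\mathcal{SN}$ is an existential condition over $\mathcal D(\mathcal H)$, pushing a segment out of $\mathcal{SN}$ requires killing \emph{every} candidate witness at once, which is why $\Psi_{\mathrm{bad}}$ must be engineered to have a single pure witness $\sigma_0$ whose image vanishes exactly along the bad direction $\ket{v_0}$, and why localizing surviving witnesses as $t\to1$ (via compactness of $\mathcal D(\mathcal H)$ and continuity) is the delicate point; everything else is convexity and perturbation bookkeeping.
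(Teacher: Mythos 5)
Your argument is correct, and its core — the reverse inclusion for $\mathcal{SN}$ — runs along a genuinely different route than the paper's. The paper, too, disqualifies a non-positive $\Psi_0\in\mathcal{SN}$ by exhibiting a companion SN map whose convex combination with $\Psi_0$ leaves $\mathcal{SN}$, but its companion is $\Phi(X)=\Tr(X)\sigma+k\Tr((I-\ket{\phi}\bra{\phi})X)Z$, where $\sigma$ makes $\Psi_0(\ket{\phi}\bra{\phi})+\sigma$ indefinite, $Z$ is an off-diagonal traceless Hermitian matrix built from one negative and one positive direction of that matrix, and $k$ is large; it then certifies that the midpoint $\tfrac12(\Psi_0+\Phi)$ is nowhere positive semidefinite by a uniform two-region estimate over $\mathcal D(\mathcal H)$ (near $\ket{\phi}\bra{\phi}$, adding $rZ$ cannot cure indefiniteness because the two pinned diagonal entries are unaffected; on the compact complement, the large-$k$ term dominates). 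Your $\Psi_{\mathrm{bad}}$ instead is engineered to have the single witness $\sigma_0$, with image singular exactly along $\ket{v_0}$, and you push points of the segment near $t=1$ out of $\mathcal{SN}$ by a soft compactness-plus-first-order argument rather than certifying the midpoint quantitatively. Your route buys a cleaner companion and automatic localization of witnesses; the paper's buys an explicit certificate at a fixed parameter with no limiting argument. You also work out the $\mathcal{SP}$ and $\mathcal{SPR}$ cases (which the paper dismisses as ``very similar'') by depolarizing $\Psi_{\mathrm{bad}}$ slightly and invoking closedness of $\mathcal{SN}$; that is legitimate and not circular, since closedness of $\mathcal{SN}$ is proved in the paper by a direct compactness argument that does not use the star-center theorem, but you should cite that argument rather than \cref{thm:SN_SP} wholesale.

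Two small repairs. First, in the forward inclusion for $\mathcal{SPR}$, averaging a witness $\rho_2$ of $\Pi$ with a witness $\rho_1$ of $\Psi$ does not obviously work: $\Psi(\rho_2)$ may be indefinite and can destroy positive semidefiniteness of the combination. The simpler fix is to use $\rho_1$ alone: since $\Pi$ is positive and $\rho_1$ is invertible, $\Pi(\rho_1)\ge 0$ has range equal to all of $\mathcal K_\Pi$ (every positive $a$ satisfies $a\le c\rho_1$, so $\Pi(a)\le c\,\Pi(\rho_1)$; compare \cref{thm:range_reduection_app}), hence $\lambda\Pi(\rho_1)+(1-\lambda)\Psi(\rho_1)$ is positive semidefinite with range $\mathcal K_\Pi+\mathcal K_\Psi\supseteq\mathcal K_{\lambda\Pi+(1-\lambda)\Psi}$, which is exactly what SPR needs. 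Second, in the degenerate case $\dim\mathcal H=1<\dim\mathcal K$ the five classes do not in fact all coincide (there $\mathcal{SP}$ consists of the maps $1\mapsto D$ with $D>0$, while $\mathcal{SPR}=\mathcal{SN}=\mathcal P$ consists of those with $D\ge 0$); the star-center identities are nevertheless immediate there because each of these sets is convex, so you should state that directly rather than lean on \cref{thm:set_relations.inclusions}.
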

\begin{proof}
	For the sake of brevity, we will only show that: The star center of $\mathcal{SN}$ is equal to $\mathcal P$. The proofs of the other claims are very similar.
	
	To see why $\mathcal{SN}$ is star-shaped with star center containing  $\mathcal {P}$, consider a SNTP map $\Psi$ and a positive TP map $\Phi$, which is also SN. Suppose that $\Psi(\rho)$ and $\rho$ are density matrices then for all $\lambda \in (0,1]$ we have $\lambda\Psi(\rho) +(1-\lambda)\Phi(\rho)\geq 0$. Hence $\lambda\Psi+(1-\lambda)\Phi$ is SNTP for $\lambda \in (0,1]$ but even when $\lambda=0$ we have $\Phi$ is SN by assumption. Therefore, $\mathcal {P}$ is in the star center of $\mathcal {SN}$.
	
	For the other inclusion, suppose that $\Psi$ is a SNTP map which is not positive. 
	We will show that $\Psi$ is not in the star center of $\mathcal{SN}$.
	Firstly, note that since $\Psi$ is not positive, there exists a pure state of $\ket \phi \bra \phi \in \mathcal D_{\mathcal H}$ with $\Psi(\ket \phi \bra \phi)$ indefinite. 
	
	We will now prove the following statement: there is a $\epsilon >0$, a non-zero matrix $Z\in B(\mathcal K)$ with $\Tr(Z)=0$ and a density matrix $\sigma \in \mathcal D_{\mathcal K}$ such that for all $\rho \in B_{\epsilon}(\ket \phi \bra \phi)\cap \mathcal D_{\mathcal H}$, and all $r \in \mathbb R$ we have that 
	\[
		\Psi(\rho) + \sigma + rZ \text{ is indefinite}.
	\]
	Where $B_{\epsilon}(\ket \phi \bra \phi)$ is the open ball of radius $\epsilon$ centred at $\ket \phi \bra \phi$, with respect to the norm induced by the inner product on $\mathcal H$.
	To prove this statement, we pick $\sigma \in \mathcal D_{\mathcal H}$ so that $A=\Psi(\ket \phi \bra \phi)+\sigma$ is indefinite. 
	Since $A$ is indefinite and hermitian, there are unit $\ket{v_1},\ket{v_2} \in \mathcal K$ with $\ket{v_1},\ket{v_2}$ orthogonal and   $\bra{v_1} A \ket{v_1} <0$, $\bra{v_2} A \ket{v_2} >0$. 
	Let $Z= \ket{v_2}\bra{v_1} + \ket{v_1} \bra{v_2} $ then, we have $\bra{v_1} Z \ket{v_1} =0$, $\bra{v_2} Z \ket{v_2}=0$. 
	Define $f_1(\rho)= \bra{v_1} \Psi(\rho) + \sigma \ket{v_1} $ and   $f_2(\rho)= \bra{v_2} \Psi(\rho) + \sigma \ket{v_2} $, one can see that $f_1(\rho)= \bra{v_1} \Psi(\rho) + \sigma +r Z\ket{v_1} $ and  $f_2(\rho)= \bra{v_2} \Psi(\rho) + \sigma +r Z \ket{v_2} $ for all $r\in \mathbb R$.
	Since, $f_1(\ket \phi \bra \phi) <0$, $f_2(\ket \phi \bra \phi) >0$ and both $f_1,f_2$ are continuous there exists an $\epsilon >0$ such that for all $\rho \in B_{\epsilon}(\ket \phi \bra \phi)$ we have both  $f_1(\rho) <0$ and $f_2(\rho) >0$.
	It follows that the statement holds for this $\epsilon$, $\sigma$ and $Z$. 
	
	Define, $\Phi(X)= \Tr(X)\sigma + k\Tr((I_{\mathcal H}-\ket \phi \bra \phi)X)Z$ where $k\in \mathbb R$ we will determine shortly. 
	Note that $\Phi(X)$ is SNTP. 
	Furthermore, it is not hard to see that when $X\in \mathcal D_{\mathcal H}$ that: $\Tr((I_{\mathcal H}-\ket \phi \bra \phi)X)=0$ if and only if $X=\ket \phi \bra \phi$.
	Since $\mathcal D_{\mathcal H}\setminus (B_{\epsilon}(\ket \phi \bra \phi)\cap \mathcal D_{\mathcal H})$ is compact, we have $m=\inf\{|\Tr((I_{\mathcal H}-\ket \phi \bra \phi)X)| : X \in \mathcal D_{\mathcal H}\setminus (B_{\epsilon}(\ket \phi \bra \phi)\cap \mathcal D_{\mathcal H})\}>0$. 
	We pick $k\in \mathbb R$, $k>0$, large enough so that the smallest magnitude strictly negative eigenvalue of $kmZ$ is less than $-3$. 
	This means there is a unit $\ket \psi \in \mathcal K$ with $km \bra \psi Z \ket \psi <-3$.
	
	We now show that $\frac 1 2 (\Psi + \Phi)$ is not SN, in particular we show that $\Psi(X) + \Phi(X) \not \geq 0$ for all $X \in \mathcal D_{\mathcal H}$.
	To see this, notice that when $X\in  B_{\epsilon}(\ket \phi \bra \phi)\cap \mathcal D_{\mathcal H}$ we have
	\[
	\Psi(X) + \Phi(X) =\Psi(X) + \sigma + rZ
	\]
	where $r=k \Tr((I_{\mathcal H}-\ket \phi \bra \phi)X) \in \mathbb R$. 
	By the claim, $\Psi(X) + \Phi(X)$ is indefinite, and so is not positive semi-definite.
	When $X \in \mathcal D_{\mathcal H}\setminus (B_{\epsilon}(\ket \phi \bra \phi)\cap \mathcal D_{\mathcal H})$, we pick a unit $\ket \psi \in \mathcal K$ with $\bra \psi kmZ \ket \psi <-3$ as mentioned before and consider
	\begin{align*}
		\bra \psi \Psi(X) + \Phi(X) \ket \psi &= \bra \psi \Psi(X)\ket \psi+\bra \psi\sigma \ket \psi+ k \Tr((I_{\mathcal H}-\ket \phi \bra \phi)X) \bra \psi Z \ket \psi \\
		&\leq  1+1 + km \bra \psi Z \ket \psi \\
		&\leq 2-3 <0. 
	\end{align*}
	Therefore, $\Psi(X) + \Phi(X)$ is not  positive semi-definite for all $X\in \mathcal D_{\mathcal H}$. It follows that $\frac 1 2 (\Psi + \Phi)$ is not SN; since this is a convex combination of two SN maps neither can be in the star center. Hence, $\Psi$ is not in the star center, which proves the theorem. 
\end{proof}

\begin{pro}\label{thm:convex_hull}
    The convex hull $\Conv(\mathcal{SP}(\mathcal{H},\mathcal{K})) $ of $\mathcal{SP}(\mathcal{H},\mathcal{K})$ is the set of HPTP maps. 
\end{pro}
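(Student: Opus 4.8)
The plan is to prove two inclusions. The inclusion $\Conv(\mathcal{SP}) \subseteq \mathcal{HP}$ is immediate: every SPTP map is HPTP, and the set $\mathcal{HP}$ of HPTP maps is convex (it is an affine subspace intersected with the affine trace-preserving condition, closed under convex — indeed arbitrary affine — combinations), so any convex combination of SPTP maps is again HPTP. The substance is the reverse inclusion $\mathcal{HP} \subseteq \Conv(\mathcal{SP})$: given an arbitrary HPTP map $\Psi$, I want to write it as a convex combination of finitely many SPTP maps.

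The key idea is to perturb toward the interior along a positive (in fact CPTP) map and then cancel the perturbation. Fix a reference CPTP map, say the completely depolarizing map $\Omega(x) = \tr(x)\,\tau$ with $\tau$ the maximally mixed state on $\mathcal{K}$ (more precisely, since $\mathcal H$ and $\mathcal K$ may have different dimensions, take $\Omega(x) = \tr(x) I_{\mathcal K}/\dim\mathcal K$); this $\Omega$ is CPTP and, being invertible as a map onto its support with invertible output, lies in $\mathcal{SP}$ — or at any rate in the star center of $\mathcal{SP}$ by \cref{thm:set_relations.SP}/\cref{thm:star} after the range-reduction discussion; I will use that $\mathcal{SP}$ is open and star-shaped about points of $\mathcal{SP}\cap\mathcal{P}$. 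Now consider the two maps
\[
\Psi_{\pm} = \Omega \pm t(\Psi - \Omega)
\]
for a small parameter $t>0$. Both are HPTP (affine combinations of HPTP maps with coefficients summing to $1$). I claim that for $t$ small enough, both $\Psi_+$ and $\Psi_-$ are SPTP: each has the form $\Omega + (\text{small HP perturbation})$, and since $\mathcal{SP}$ is open and contains $\Omega$, a sufficiently small HP-perturbation of $\Omega$ stays in $\mathcal{SP}$. Concretely, $\Psi_\pm(\tau')$ for an invertible density matrix $\tau'$ equals $\tau + (\text{small})$, which is still positive definite for $t$ small, so $\Psi_\pm$ is SP by \cref{def:SP} (or \cref{thm:SP_noinv}). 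Finally,
\[
\Psi = \frac{1}{2t}\Psi_+ + \left(1 - \frac{1}{2t}\right)\Psi_- ,
\]
wait — the coefficients here must be checked: $\tfrac{1}{2}(\Psi_+ + \Psi_-) = \Omega$, not $\Psi$, so instead I solve for $\Psi$ directly: from $\Psi_+ - \Omega = t(\Psi-\Omega)$ we get $\Psi = \Omega + \tfrac1t(\Psi_+ - \Omega) = (1 - \tfrac1t)\Omega + \tfrac1t \Psi_+$, which is an affine but not convex combination when $t<1$. To get a genuine convex combination I use three points: $\Psi = \alpha \Psi_+ + \beta \Psi_- + \gamma \Omega$ with $\alpha = \tfrac{1}{2t}$, $\beta = -\tfrac{1}{2t}$ — still not convex. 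The correct device is to instead write $\Psi$ as a convex combination of $\Psi_+$ and a second auxiliary SPTP map: set $\alpha = 1/(1+t) $ is not it either. Cleanest: since $\Psi_+ = \Omega + t(\Psi-\Omega)$, rearrange as $t\Psi = \Psi_+ - (1-t)\Omega$, i.e. $\Psi = \tfrac1t \Psi_+ - \tfrac{1-t}{t}\Omega$; choosing instead the \emph{symmetric} pair and noting $\Omega \in \mathcal{SP}$ too, I will present $\Psi$ as $\tfrac12(\Psi_+' ) + \tfrac12(\Psi_-')$ where $\Psi_\pm' = \Psi \pm s\,\Xi$ for a suitable SPTP $\Xi$ and scalar $s$ chosen so that $\Psi'_\pm$ land in $\mathcal{SP}$ — this works because any HPTP map $\Psi$, when pushed far enough in the direction of the CPTP "center" $\Omega$, i.e. $\Psi'_+ := (1-\mu)\Psi + \mu\Omega$ for $\mu$ near $1$, becomes SP (it's close to $\Omega$), while $\Psi'_- := (1+\nu)\Psi - \nu\Omega$ is the reflected point, also HPTP; then $\Psi$ is the convex combination $\tfrac{\nu}{\mu+\nu}\Psi'_+ + \tfrac{\mu}{\mu+\nu}\Psi'_-$ once we confirm $\Psi'_-$ is SP for suitable small $\nu$.

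Hence the genuinely load-bearing step — and the main obstacle to get right — is establishing that the ``reflected'' map $\Psi'_- = (1+\nu)\Psi - \nu\Omega$ is SPTP for some $\nu>0$. This is exactly where openness of $\mathcal{SP}$ does not directly help (the reflected point moves \emph{away} from $\Omega$). The fix: do not reflect; instead use that $\mathcal{SP}$ is \emph{unbounded} and star-shaped. Take any SPTP map $\Psi_0$ in the star center (e.g. $\Omega$); the ray from $\Psi_0$ through $\Psi$ is $\{\Psi_0 + t(\Psi - \Psi_0): t \ge 0\}$. For $t\in(0, \epsilon)$ the points lie in $\mathcal{SP}$ by openness; I want one SPTP point with $t>1$ as well, and then $\Psi$ (at $t=1$) is a convex combination of a point with $t<1$ and a point with $t>1$. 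To secure a point with large $t$: for $t$ large, $\Psi_0 + t(\Psi-\Psi_0) = t\Psi + (1-t)\Psi_0$; evaluate at an invertible density matrix $\rho$ with $\Psi(\rho)$ invertible — but $\Psi$ need not have such a $\rho$! That is the real crux. If $\Psi$ is already SP we are done trivially; if not, I must still realize it as a convex combo of SP maps, and the output-invertibility obstruction is genuine. The resolution I will push through: show directly that for generic small $t$, \emph{both} $\Psi_0 + t(\Psi - \Psi_0)$ and $\Psi_0 - t(\Psi - \Psi_0)$ are SP (both are small perturbations of the interior point $\Psi_0$, hence in the open set $\mathcal{SP}$), and then $\Psi_0 = \tfrac12[(\Psi_0 + t(\Psi-\Psi_0)) + (\Psi_0 - t(\Psi-\Psi_0))]$ only recovers $\Psi_0$; so this shows $\mathcal{SP}$ spans $\mathcal{HP}$ \emph{affinely} but I still need convexity. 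The honest completion is: $\mathcal{HP}$ is finite-dimensional, $\mathcal{SP}$ is an open subset of $\mathcal{HP}$ (by \cref{thm:set_relations.SP}) that is nonempty, and the convex hull of any nonempty open subset of a finite-dimensional affine space whose affine span is the whole space is that whole space — indeed the convex hull of an open set is open, and an open convex set that affinely spans $\mathcal{HP}$, together with the fact that we can reach every direction, fills out $\mathcal{HP}$; combined with $\Conv(\mathcal{SP})\subseteq\mathcal{HP}$, equality follows. So the plan reduces to: (i) $\mathcal{SP}\ne\emptyset$ and is open in $\mathcal{HP}$ [cited]; (ii) $\operatorname{aff}(\mathcal{SP}) = \mathcal{HP}$, which I get from the perturbation argument above (small $\pm t$ perturbations of $\Omega$ stay SP, so every direction in $\mathcal{HP}$ is an affine direction of $\mathcal{SP}$); (iii) invoke the elementary convexity fact that a subset of a finite-dimensional affine space with nonempty interior (relative to its affine hull) has convex hull equal to that affine hull when... — more carefully, $\Conv$ of a set containing a point and a full basis of affine directions around it is the affine hull only if the set is, e.g., a neighborhood basis; the cleanest true statement is that $\Conv$ of an open set need not be the whole space, so I will instead finish by the three-point argument: for each $\Psi\in\mathcal{HP}$, pick $t\in(0,\epsilon)$ with $\Psi_{\pm t}:=\Omega\pm t(\Psi-\Omega)\in\mathcal{SP}$; then $\Psi = \tfrac{1}{2t}\Psi_{t} + \tfrac{1}{2t}\cdot(-1)\cdot$ — no. Final correct identity: $\Psi = \tfrac{1}{2}\Psi_{t} \cdot \tfrac1t + \cdots$ I will work this algebra out carefully in the writeup, but the mechanism is: express $\Psi$ using $\Omega$ (itself SP) and $\Psi_t$ (SP) via $\Psi = \tfrac1t\Psi_t + (1-\tfrac1t)\Omega$ when $t\le 1$ is \emph{not} convex, so instead choose $t$ with $1 < 1/t$ is automatic; choose instead a \emph{large} $t$: we need $\Psi_t \in\mathcal{SP}$ for some $t>1$. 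Thus the true main obstacle, stated plainly, is: \textbf{produce an SPTP map on the ray $\{\Omega + t(\Psi-\Omega)\}$ with parameter $t>1$}, equivalently an SPTP map ``beyond'' $\Psi$. I expect to handle this using unboundedness of $\mathcal{SP}$ together with the decomposition $\mathcal{SP}=\overline{\cdot}\to\mathcal{SN}$ machinery, or more slickly by noting $-\Psi + 2\Omega$-type reflections land in $\mathcal{SN}=\overline{\mathcal{SP}}$ and then perturbing into $\mathcal{SP}$; getting this reflection-plus-perturbation to close up cleanly is the step I anticipate will need the most care.
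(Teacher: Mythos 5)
Your proof establishes only the easy inclusion $\Conv(\mathcal{SP})\subseteq\mathcal{HP}$ and then, by your own account, does not close the hard direction: you correctly isolate the crux (producing an SP map ``beyond'' $\Psi$ so that $\Psi$ sits strictly between two SP maps), but none of the devices you float actually delivers it. Worse, the specific strategies you lean on fail for a concrete reason. Take $\Psi=\Upsilon$ with $\Upsilon(x)=\tr(x)D$, $D$ an indefinite Hermitian matrix of trace one (the paper's example of a non-SN HPTP map), and let $\Omega(x)=\tr(x)\tau$ with $\tau$ maximally mixed. Every point on the ray past $\Psi$, i.e.\ $(1-t)\Omega+t\Upsilon$ with $t\ge 1$, is the replacement map onto $(1-t)\tau+tD$; evaluating on a unit eigenvector of $D$ with negative eigenvalue shows this matrix has a negative eigenvalue for all $t\ge1$, so no point with $t\ge 1$ is SP (or even SN). Hence $\Psi$ cannot be written as a convex combination of SP maps lying on the line through $\Omega$ and $\Psi$, which is exactly what your ray construction would need. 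The reflection idea fares no better: $2\Omega-\Upsilon$ is the replacement map onto $2\tau-D$, which is not positive semidefinite once $D$ has a large positive eigenvalue, so the reflected point need not lie in $\mathcal{SN}=\overline{\mathcal{SP}}$. And, as you yourself retract, ``convex hull of a nonempty open affinely spanning set is everything'' is false. So the gap is genuine: a fixed perturbation direction toward a single interior point cannot work; the direction must be adapted to $\Psi$.

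For comparison, the paper's proof is a short explicit construction with a $\Psi$-dependent direction. Choose a basis $\{H_i\}$ of $B(\mathcal{H})$ of Hermitian elements with $H_1,H_2$ invertible density matrices, and fix an invertible density matrix $\rho\in B(\mathcal{K})$. Given HPTP $\Phi$, define $\Psi_1,\Psi_2$ by $\Psi_1(H_1)=\Psi_2(H_2)=\rho$, $\Psi_1(H_2)=2\Phi(H_2)-\rho$, $\Psi_2(H_1)=2\Phi(H_1)-\rho$, and $\Psi_1(H_k)=\Psi_2(H_k)=\Phi(H_k)$ for $k\ge 3$. Both maps are HPTP, both are SP because each sends an invertible density matrix ($H_1$ resp.\ $H_2$) to the invertible density matrix $\rho$, and $\Phi=\tfrac12(\Psi_1+\Psi_2)$ because the modifications on the $H_1$ and $H_2$ coordinates cancel in the average. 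In other words, the ``compensating'' map is built by redistributing $\Phi$'s values between two basis directions rather than by moving along a line through a fixed center such as $\Omega$; that adaptation is precisely the ingredient your argument is missing, and no appeal to openness, unboundedness, or star-shapedness of $\mathcal{SP}$ substitutes for it.
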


\begin{proof}
    Let $\{H_i\}$ be a basis of $B(\mathcal{H})$ consisting entirely of Hermitian elements with both $H_1$ and $H_2$ being invertible density matrices.  Let $\rho$ be an arbitrary density matrix in $ B(\mathcal{K})$.  Now suppose $\Phi$ is an HPTP map from  $B(\mathcal{H}) \to B(\mathcal{K})$.  Now define $\Psi_1$ and $\Psi_2$ as follows:  $\Psi_1(H_1)=\Psi_2(H_2)=\rho$ (hence $\Psi_1$ and $\Psi_2$ are semipositive), $\Psi_1(H_2)=2\Phi(H_2)-\rho$, $\Psi_2(H_1)=2\Phi(H_1)-\rho$, and $\Psi_1(H_k)=\Psi_2(H_k)=\Phi(H_k)$ for all $k\geq 3$.  Since $\Phi=\frac{1}{2}(\Psi_1+\Psi_2)$, every HPTP map is the convex combination of two semi-positive maps.
\end{proof}

\begin{theorem}\label{thm:set_relations_app}
     Let $\mathcal{H}$ and $\mathcal{K}$ be finite-dimensional Hilbert spaces with dimensions greater than $1$. Let $\mathcal {HP}$ be the set of all HPTP linear maps from $B(\mathcal H)$ to $B(\mathcal K)$ and let $\norm{\cdot}_{\mathcal {HP}}$ be some norm on $\mathcal {HP}$. Then the following hold:
     \begin{enumerate}
         \item $\mathcal {SP}=\set{\Psi \in \mathcal {HP} : \Psi \text{ is SP} }$ is open and unbounded in $(\mathcal {HP}, \norm{\cdot}_{\mathcal {HP}})$ but not convex. It is star-shaped with a star center equal to $\mathcal {SP}\cap \mathcal {P}$. \label{thm:set_relations.SP_app}
         \item $\mathcal {SN}=\set{\Psi \in \mathcal {HP} : \Psi \text{ is SN} }$ is closed and unbounded in $(\mathcal {HP}, \norm{\cdot}_{\mathcal {HP}})$ but not convex. It is star-shaped with a star center equal to $ \mathcal {P}$. \label{thm:set_relations.SN_app}
         \item $\mathcal {SPR}=\set{\Psi \in \mathcal {HP} : \Psi \text{ is SPR} }$ is unbounded, not open and not closed in $(\mathcal {HP}, \norm{\cdot}_{\mathcal {HP}})$ nor is it convex. It is star-shaped with a star center equal to $ \mathcal {P}$. \label{thm:set_relations.SPR_app}
         \item $\mathcal {P}=\set{\Psi \in \mathcal {HP} : \Psi \text{ is positive} }$ is compact in $(\mathcal {HP}, \norm{\cdot}_{\mathcal {HP}})$ and convex. \label{thm:set_relations.P_app}
         \item $\mathcal {CP}=\set{\Psi \in \mathcal {HP} : \Psi \text{ is CP} }$ is compact in $(\mathcal {HP}, \norm{\cdot}_{\mathcal {HP}})$ and convex. \label{thm:set_relations.CP_app}
         \item 
         \[
            \mathcal {CP}\subseteq \mathcal {P} \subseteq \mathcal{SPR} \subseteq \mathcal {SN}
         \]
         and
         \[
         \mathcal {SP} \subseteq \mathcal{SPR} 
         \]
         \label{thm:set_relations.inclusions_app} 
         \item 
         \[
         \begin{array}{lr}
           \operatorname{int}(\mathcal {SN})= \mathcal {SP}   &  \mathcal{SN}=\overline{\mathcal {SP}}
         \end{array}
         \] \label{thm:set_relations.SN_regular_app}
     \end{enumerate}
\end{theorem}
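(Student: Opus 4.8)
The plan is to prove the seven items in a logical order that exploits their dependencies, rather than in the stated order. I would begin with items \ref{thm:set_relations.P_app} and \ref{thm:set_relations.CP_app}, since these are classical: the set of CPTP maps is the intersection of the (compact, because trace-preserving pins down the trace of the Choi matrix) set of positive-semidefinite Choi matrices with the affine trace-preservation constraint, hence compact and convex; and $\mathcal P$ is likewise convex (a convex combination of positive maps is positive), closed (a limit of positive maps applied to any fixed density matrix remains positive), and bounded (positive TP maps are trace-norm contractive, as noted in the main text), hence compact. Next I would record the containments of item \ref{thm:set_relations.inclusions_app}: $\mathcal{CP}\subseteq\mathcal P$ is immediate; $\mathcal P\subseteq\mathcal{SPR}$ follows from the replacement-channel discussion and \cref{def:SP} (a positive map restricted to $\mathcal K_\Psi$ sends some invertible state to an invertible state of $\mathcal K_\Psi$); $\mathcal{SP}\subseteq\mathcal{SPR}$ is by definition; and $\mathcal{SPR}\subseteq\mathcal{SN}$ because SP-after-reduction in particular produces \emph{some} density matrix mapped to a density matrix.

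For item \ref{thm:set_relations.SP_app}: \emph{openness} I would prove via \cref{thm:SP_noinv}, which characterizes SP by the existence of $\rho\ge 0$ with $\Psi(\rho)>0$; positivity-definiteness is an open condition and evaluation at a fixed $\rho$ is continuous in $\Psi$, so a small perturbation of $\Psi$ keeps $\Psi(\rho)>0$. \emph{Unboundedness} I would get from \cref{cor:inv}: invertible CPTP maps are SP, and one can rescale the non-trace-preserving part — more concretely, for a fixed invertible CPTP $\Phi_0$ and a fixed traceless Hermitian-preserving map $\Delta$ in the kernel of "evaluate-trace," the family $\Phi_0 + t\Delta$ stays SP (since SP is open around $\Phi_0$... no — better: take $\Psi=\Xi\circ\Phi^{-1}$ and let the eigenvalues of $\Phi$ approach $0$, blowing up $\Phi^{-1}$). \emph{Non-convexity} follows once \emph{star-shapedness with center $\mathcal{SP}\cap\mathcal P$} is established and one exhibits a single SP map not in that center — then convexity would force the center to be everything; the star-shapedness itself is \cref{thm:star}, already proved in the excerpt. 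For item \ref{thm:set_relations.SN_app}: \emph{closedness} — if $\Psi_n\to\Psi$ with $\Psi_n(\rho_n)$ a density matrix, pass to a convergent subsequence of the $\rho_n$ (the state space is compact) to get $\Psi(\rho)$ a density matrix; \emph{unboundedness} since $\mathcal{SP}\subseteq\mathcal{SN}$; \emph{non-convexity and star-center} again from \cref{thm:star} plus a witness. Item \ref{thm:set_relations.SPR_app}: \emph{not closed} because $\mathcal{SPR}$ sits strictly between $\mathcal{SP}$ (open) and $\mathcal{SN}$ ($=\overline{\mathcal{SP}}$ by item \ref{thm:set_relations.SN_regular_app}), and one must check $\mathcal{SPR}\ne\mathcal{SN}$ by producing a single non-SP SNTP map whose reduction is still non-SP (the \cref{eg:nonSP_app}-type example $\begin{pmatrix}a&b\\c&d\end{pmatrix}\mapsto\begin{pmatrix}a+2d&b\\c&-d\end{pmatrix}$ works); \emph{not open} because it contains $\mathcal P$ and any positive-but-not-SP-in-the-original-codomain map is a boundary point of $\mathcal{SPR}$ approached from outside $\mathcal{SN}$ (one perturbs the $-d$ coefficient).

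The heart of the theorem is item \ref{thm:set_relations.SN_regular_app}, $\operatorname{int}(\mathcal{SN})=\mathcal{SP}$ and $\mathcal{SN}=\overline{\mathcal{SP}}$, and this is where the main obstacle lies. One inclusion $\mathcal{SP}\subseteq\operatorname{int}(\mathcal{SN})$ is free from item \ref{thm:set_relations.SP_app} (SP is open and contained in SN). For $\mathcal{SN}\subseteq\overline{\mathcal{SP}}$: given SNTP $\Psi$ with $\Psi(\rho)$ a density matrix, form $\Psi_\epsilon := (1-\epsilon)\Psi + \epsilon\,\Theta$ where $\Theta$ is a fixed full-rank-output CPTP map (e.g. the completely depolarizing channel $x\mapsto\operatorname{Tr}(x)\tfrac{I}{\dim\mathcal K}$); then $\Psi_\epsilon(\rho)=(1-\epsilon)\Psi(\rho)+\tfrac{\epsilon}{\dim\mathcal K}I$ is invertible and a density matrix, so $\Psi_\epsilon\in\mathcal{SP}$, and $\Psi_\epsilon\to\Psi$. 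The genuinely delicate direction is $\operatorname{int}(\mathcal{SN})\subseteq\mathcal{SP}$: I must show a non-SP SNTP map $\Psi$ is \emph{not} interior to $\mathcal{SN}$, i.e. is a limit of non-SN maps. By \cref{thm:SP_noinv}, non-SP means for every $\rho\ge 0$, $\Psi(\rho)$ is singular (or not positive); so the set $B=\{\rho\in\mathcal D(\mathcal H):\Psi(\rho)\ge 0\}$ contains only matrices mapped to \emph{rank-deficient} states. The strategy is to perturb $\Psi$ to $\Psi + t\,\Lambda$ in a direction $\Lambda$ that pushes every such boundary output strictly indefinite — essentially the same mechanism as in the proof of \cref{thm:star}, where a traceless $Z$ was constructed to make $\Psi(\rho)+\sigma+rZ$ indefinite for all nearby $\rho$ and all $r$. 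I expect the hard part to be handling the case where $B$ is positive-dimensional and the rank-deficient outputs $\Psi(\rho)$, $\rho\in B$, have kernels varying with $\rho$; one needs a single perturbation direction (or a short argument reducing to finitely many directions via compactness of $B$ and of the relevant Grassmannian of kernels) that simultaneously defeats all of them. Once that is in hand, $\Psi+t\Lambda$ sends no density matrix to a density matrix for small $t>0$, so $\Psi+t\Lambda\notin\mathcal{SN}$, witnessing $\Psi\notin\operatorname{int}(\mathcal{SN})$; combined with the two easy inclusions this closes item \ref{thm:set_relations.SN_regular_app}, and the non-convexity/boundary claims in items \ref{thm:set_relations.SP_app}--\ref{thm:set_relations.SPR_app} then follow by assembling the pieces above.
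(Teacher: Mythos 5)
Your plan coincides with the paper's proof on most items (closedness of $\mathcal{SN}$ by a compactness/subsequence argument, $\mathcal{SN}\subseteq\overline{\mathcal{SP}}$ by mixing with a full-rank replacement channel, openness of $\mathcal{SP}$ via \cref{thm:SP_noinv}, star-shapedness and non-convexity outsourced to \cref{thm:star} and the convex-hull proposition), but there is a genuine gap at the decisive step, the inclusion $\operatorname{int}(\mathcal{SN})\subseteq\mathcal{SP}$ in item \ref{thm:set_relations.SN_regular_app}. You propose to prove it in the contrapositive form ``every non-SP SN map is a limit of non-SN maps'' by finding one perturbation direction $\Lambda$ that simultaneously makes $\Psi(\rho)+t\Lambda(\rho)$ indefinite for \emph{all} $\rho$ in the set $B$ of states with singular positive outputs, and you yourself flag the unresolved difficulty (the kernels of $\Psi(\rho)$ vary over a possibly positive-dimensional $B$); no construction is supplied, so the key equality is not established, and the parts you later hang on it (your route to non-openness of $\mathcal{SPR}$) inherit the gap. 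The paper avoids this difficulty entirely with a short interiority argument (its write-up has a typo, but the mechanism is this): if $\Psi\in\operatorname{int}(\mathcal{SN})$, then for small $\epsilon>0$ the HPTP map $\Gamma_\epsilon=\tfrac{1}{1-\epsilon}\left(\Psi-\epsilon\Phi\right)$, where $\Phi(x)=\Tr(x)b$ with $b$ an invertible density matrix, still lies in $\mathcal{SN}$ since $\Gamma_\epsilon\to\Psi$; taking a density matrix $\rho$ with $\Gamma_\epsilon(\rho)\geq 0$ gives $\Psi(\rho)=(1-\epsilon)\Gamma_\epsilon(\rho)+\epsilon b>0$, and \cref{thm:SP_noinv} then yields semi-positivity of $\Psi$. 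Note also that the paper proves $\mathcal{SPR}$ is not open by an explicit construction (non-SN perturbations $\Gamma_\epsilon(X)=\Tr(X)\sigma+\epsilon\Tr(X)Z$ of the pure-state replacement channel, with $Z$ traceless and $\bra\phi Z\ket\phi<0$ on the kernel of $\sigma$), so it does not need item \ref{thm:set_relations.SN_regular_app} for that point.

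Smaller loose ends, all repairable: your unboundedness argument stabilizes only after the self-correction, and even then you should fix a concrete family (e.g.\ $\Phi_\lambda^{-1}$ for $\Phi_\lambda(x)=\lambda x+(1-\lambda)\Tr(x)\rho$, $\lambda\to0$, which blows up on traceless inputs), whereas the paper uses $\Psi_k=\Psi_{TP}+k\Psi_{TA}$ with a trace-zero perturbation that annihilates the maximally mixed state so that SP is preserved for every $k$; the inclusion $\mathcal{P}\subseteq\mathcal{SPR}$ silently uses that some invertible state attains the maximal range $\mathcal{K}_\Psi$ (the paper's \cref{thm:range_reduection_app}; for positive maps this is easy since $\mathcal{K}_\Psi=\Range(\Psi(I/\dim\mathcal H))$, but it should be said); non-convexity of $\mathcal{SP}$ via ``a convex set is its own star center'' additionally requires exhibiting a non-positive SP map, which you assert but do not produce; and your witness for $\mathcal{SN}\setminus\mathcal{SPR}$ is the single-qubit map of \cref{eg:nonSP_app}, which must be adapted to arbitrary dimensions (the paper uses $\Psi(X)=\Tr(X)E_{11}^{\mathcal K}+\Tr(E_{22}^{\mathcal H}X)(E_{11}^{\mathcal K}-E_{22}^{\mathcal K})$ precisely for this reason).
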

\begin{proof}
    For \ref{thm:set_relations.SP_app}, we first show that $\mathcal{SP}$ is unbounded. Consider the maps
    \begin{align*}
        \Psi_{TP}(X)&= \Tr(X)\sigma \\
        \Psi_{TA}(X)&= \Tr(Z_1X)Z_2 
    \end{align*}
    where $X,\sigma,Z_1 \in B(\mathcal H)$, $Z_2 \in B(\mathcal K)$, $\sigma$ is an invertible density matrix and $Z_1,Z_2$ are non-zero trace zero hermitian matrices. 
    Now consider for $k\in \mathbb N$ $\Psi_k= \Psi_{TP} +k \Psi_{TA}$, for every $k\in \mathbb N$ $\Psi_k$ is TP (as $Z_2$ has zero trace) and it is SP since $\Psi_k(\rho)=\sigma$ where $\rho = \frac{1}{\dim(\mathcal H)}I_{\mathcal H}$. Note that $\norm{\Psi_{TA}}_{\mathcal {HP}}\neq 0$ as $\Psi_{TA}\neq 0$ and we see
    \[
        \norm{\Psi_{k}}_{\mathcal {HP}} \geq k \norm{\Psi_{TA}}_{\mathcal {HP}} - \norm{\Psi_{TP}}_{\mathcal {HP}}
    \]
    which is unbounded as $k\to \infty$. Therefore, $\mathcal{SP}$ is unbounded.

    Next, we argue that $\mathcal{SP}$ is open. Let $\Psi$ be SPTP and let $\Phi$ be any HPTP map with $\norm{\Phi}_{\mathcal {HP}}=1$. Suppose that $\Psi(\rho)$ and $\rho$ are invertible density matrices then there an $\epsilon>0$ such that $\Psi(\rho) + \epsilon \Phi(\rho) >0$ which means $\Psi +\epsilon \Phi$ is SP. From here, it can be seen that $\mathcal{SP}$ is open.

    To see why $\mathcal{SP}$ is start shaped with a star center equal to $\mathcal {SP}\cap \mathcal {P}$, we refer to \cref{thm:star}.
    
    To see why $\mathcal{SP}$ is not convex, we refer to Proposition~\ref{thm:convex_hull}. In Proposition~\ref{thm:convex_hull}, we show that $\Conv(\mathcal SP)\neq \mathcal {SP}$ ($\Conv(M)$ denotes the convex hull of a set $M$), which means that $\mathcal {SP}$ is not convex.

    The proof of \ref{thm:set_relations.SN_app} follows quickly from the arguments given in \ref{thm:set_relations.SP_app} expect for closedness of $\mathcal{SN}$. 
    To see why $\mathcal{SN}$ is closed, suppose that $\{\Psi_n\}_{n\in \mathbb N}$ is a sequence in $\mathcal{SN}$ which converges to $\Psi$ in $\mathcal {HP}$; note that this sequence also converges uniformly. 
    By SN, for each $n\in \mathbb N$, there is a density matrix $\rho_n$ with $\Psi_n(\rho_n)\geq 0$; since the set of density matrices is compact, there is a subsequence $\{\rho_{n_k}\}_{k\in \mathbb N}$ which converges to a density matrix $\rho$. 
    We also have that $\{\Psi_{n_k}\}_{k\in \mathbb N}$ converges to $\Psi$ in $\mathcal {HP}$, since it also converges uniformly, we have $\lim_{k\to\infty}\Psi_{n_k}(\rho_{n_k})= \Psi(\rho)$ and since the set of positive semi-definite matrices is closed we also have  $\Psi(\rho)\geq 0$. 
    This means $\Psi$ is SN and $\mathcal{SN}$ is closed. 

    Again the proof of \ref{thm:set_relations.SPR_app} follows quickly from the arguments given in \ref{thm:set_relations.SP_app} expect the non-openness and non-closedness. To see the non-closedness, note that by \ref{thm:set_relations.SN_regular_app} the set $\mathcal {SN}$ is closed and cannot have any isolated points, and since $\mathcal{SPR}\subseteq \mathcal{SN}$ we need only find a map in $\mathcal{SN}$ not in $\mathcal{SPR}$. Consider the map
    \[
        \Psi(X)= \Tr(X)E_{11}^{\mathcal K} + \Tr(E_{22}^{\mathcal H}X)(E_{11}^{\mathcal K} -E_{22}^{\mathcal K})
    \]
    (which is well defined as $\dim(\mathcal H),\dim(\mathcal K)>1$) where $E_{ij}^{\mathcal H},E_{ij}^{\mathcal K}$ are the choi basis of $\mathcal H$ and $\mathcal K$. Then, $\Psi$ is TP and $B(\mathcal K_\Psi) \simeq \mathbb C ^{2\times 2}$. 
    $\Psi$ is SN as $\Psi(E_{11}^{\mathcal H})=E_{11}^{\mathcal K}$  but as a map from $B(\mathcal H)$ to $B(\mathcal K_\Psi)$ it is not SP. Since if $X>0$ then $\Tr(E_{22}^{\mathcal H}X)>0$ which means $\Tr(E_{22}^{\mathcal K}\Psi(X)) )<0$, thus $\Psi(X)\not \geq 0$. Therefore, $\Psi$ is not SPR. 

    To show that $\mathcal{SPR}$ is not open, we show that $\mathcal{SPR}$ contains a boundary point. The map $\Psi(X)= \Tr(X)\sigma$ where $\sigma$ is a pure state and in $B(\mathcal K_\Psi)$ we have $\sigma >0$ so $\Psi$ is SPR. 
    However, as a map from $B(\mathcal H)$ to $B(\mathcal K)$ we have that $\lim_{\epsilon \to 0} \Gamma_\epsilon = \Phi$ and $\lim_{\epsilon \to 0} \Phi_\epsilon = \Phi$ where
    \begin{align*}
        \Gamma_\epsilon(X) &= \Psi(X)+\epsilon \Tr(X)Z \\
        \Phi_\epsilon(X) &= (1-\epsilon)\Psi(X)+\epsilon \Tr(X) I_{\mathcal K},
    \end{align*}
    $X \in B(\mathcal H)$, $Z \in B(\mathcal K)$, $Z$ is an invertible trace zero matrix with $\bra \phi Z \ket \phi <0$ and $\sigma \ket \phi =0$ for some unit $\ket \phi \in \mathcal K$.
    Then, for all $\epsilon \in (0,1)$ we have that $\Gamma_\epsilon$ is not SN since $\bra \phi \Gamma_\epsilon(X) \ket \phi=\bra \phi \epsilon Z \ket \phi <0$ for $X$ being a density matrix. Also, for all $\epsilon \in (0,1)$, we have that $\Phi_\epsilon$ is SPTP and positive (and therefore SPR). This shows that $\Psi$ lies on the boundary of $\mathcal{SPR}$, so $\mathcal{SPR}$ is not open. 

    \Cref{thm:set_relations.P_app,thm:set_relations.CP_app} are well known. 

    We now prove~\cref{thm:set_relations.inclusions_app}. It is well known that  $\mathcal{CP}\subseteq \mathcal P$. Let $\Phi \in \mathcal P$ then, as a map from $B(\mathcal H)$ to $B(\mathcal K_\Phi)$ is SP, since by \cref{thm:range_reduection_app} there is an invertible density matrix $\rho \in B(\mathcal H)$ such that $\mathcal{K}_\Phi= \operatorname{Range} (\Phi(\rho))$ and 
    so $\Phi(\rho)$ is positive and invertible in $B(\mathcal{K}_\Phi)$. Hence $\Phi$ is SPR.

    The inclusion $\mathcal {SPR}\subseteq \mathcal{SN}$ follows from the fact: if $ \sigma \in B(\mathcal{K}_\Psi)$ is an invertible positive matrix then, in $B(\mathcal{K})$ $\sigma$ is positive semi-definite. 

    Finally, the inclusion $\mathcal{SP} \subseteq \mathcal{SPR}$ is immediate since if $\Psi(\rho)>0$ (in $B(\mathcal K)$) for $\rho>0$ then, $\operatorname{Range}(\Psi(\rho))=\mathcal K$. Hence, $\mathcal K_{\Psi}=\mathcal K$.

    To prove, \ref{thm:set_relations.SN_regular_app}, we first consider $\operatorname{int}(\mathcal {SN})= \mathcal {SP}$. We already know that $\operatorname{int}(\mathcal {SN})\supseteq \mathcal {SP}$ since $SP$ is open and contained in $\mathcal {SN}$. 
    So suppose that $\Psi \in \operatorname{int}(\mathcal {SN})$ then for some $\epsilon>0$ we have $(1-\epsilon)\Psi +\epsilon \Phi \in \mathcal{SN}$ where $\Phi(x)= \Tr(x)b$ is SNTP when $b$ is an invertible density matrix. 
    Let $\rho \geq 0$ have $\Psi(\rho) \geq 0$ then for $a$, an invertible density matrix in $\mathcal H$, we have $\Psi((1-\epsilon)\rho +\epsilon a)=(1-\epsilon)\Psi(\rho) +\epsilon \Phi(a)>0$. Since $(1-\epsilon)\rho +\epsilon a$ and $\Psi((1-\epsilon)\rho +\epsilon a)$ are invertible density matrices we know $\Psi$ is SP. This proves the equality. 

    On the other hand, we know $\mathcal{SN}\supseteq \overline{\mathcal{SP}}$ is true since $\mathcal{SN}\supseteq {\mathcal{SP}}$  and $\mathcal{SN}$ is closed. 
     Let $\Psi$ be a SN map then consider for $\lambda \in [0,1)$ the map $\Gamma_\lambda =(1-\lambda)\Psi +\lambda \Phi$ where $\Phi(x)= \Tr(x)I_{\mathcal K}$. 
     We claim that for  $\lambda \in [0,1)$  $\Gamma_\lambda$ is SP, to see this let $\rho \geq 0$ be in $\mathcal H$ with $\Psi(\rho) \geq 0$ then  $\Gamma_\lambda(\rho)= \lambda \Psi(\rho)  +(1-\lambda)I_{\mathcal K} >0$ for $\lambda \in [0,1)$ and so $\Gamma_\lambda$ for $\lambda \in [0,1)$ is SP by \cref{thm:SP_noinv}. 
     Also note that  for $\lambda_n= 1-\frac 1 n$, $n\in \mathbb N$ we have that $\lim_{n\to \infty}\Gamma_{\lambda_n}=\Psi$ and $\Psi$ is the limit of SP maps, so $\Psi \in \overline{\mathcal{SP}}$ as required. 
\end{proof}

\begin{theorem} 
Let $\mathcal{H}$ and $\mathcal{K}$ be finite dimensional Hilbert spaces and let $\Psi$ be an HPTP map from $B(\mathcal{H}) \to B(\mathcal{K})$.  Then one and only one of the following statements is true.
 \begin{enumerate}
   \item $\Psi$ is a semi-positive map.
   \item $-\Psi^*$ is a semi-nonegative map.
  \end{enumerate}
\end{theorem}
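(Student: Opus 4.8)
The plan is to prove this as a quantum ``theorem of the alternative'' of Gordan type, paralleling the matrix argument behind \cite[Lemma~1.5]{dorsey2016new}. First I would recast both conditions into a convex-geometric form. By \cref{thm:SP_noinv}, $\Psi$ is semi-positive if and only if there is a positive semidefinite $X\in B(\mathcal H)$ with $\Psi(X)$ positive definite. On the other side, I would read ``$-\Psi^*$ is semi-nonnegative'' as the natural statement for the non-trace-preserving (but HP, since $\Psi$ is) map $-\Psi^*$: there is a density matrix $\tau\in B(\mathcal K)$ with $-\Psi^*(\tau)\succeq 0$. Using $\langle Z,W\rangle=\Tr(ZW)$ for the Hilbert--Schmidt pairing of Hermitian operators, the defining identity $\Tr(\Psi(X)Y)=\Tr(X\Psi^*(Y))$, and self-duality of the positive semidefinite cone, this is equivalent to: there is a nonzero positive semidefinite $Y\in B(\mathcal K)$ with $\Psi^*(Y)\preceq 0$. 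The dichotomy to be proved is then exactly the alternative between these two reformulations.

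Mutual exclusivity is the short direction: if both held, choose $X\succeq 0$ with $\Psi(X)\succ 0$ and $Y\succeq 0$, $Y\neq 0$, with $\Psi^*(Y)\preceq 0$; then $\Tr(\Psi(X)Y)=\Tr(X\Psi^*(Y))\le 0$ since $X\succeq 0$ and $-\Psi^*(Y)\succeq 0$, while $\Tr(\Psi(X)Y)>0$ because $\Psi(X)\succ 0$ and $Y$ is a nonzero positive operator --- a contradiction. For the existence half, assume $\Psi$ is not semi-positive. Consider the convex cone $\mathcal C=\{\Psi(X):X\succeq 0\}$ inside the real vector space of Hermitian operators on $\mathcal K$; by \cref{thm:SP_noinv} it is disjoint from the open cone of positive-definite operators. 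Since the latter is convex with nonempty interior, the separating-hyperplane (geometric Hahn--Banach) theorem yields a nonzero Hermitian $Y$ and a scalar $c$ with $\Tr(\Psi(X)Y)\le c\le\Tr(WY)$ for all $X\succeq 0$ and all $W\succ 0$. Invariance of $\mathcal C$ and of the positive-definite cone under positive scaling forces $c=0$, so $\Tr(\Psi(X)Y)\le 0$ for all $X\succeq 0$ and $\Tr(WY)\ge 0$ for all $W\succeq 0$; the second inequality gives $Y\succeq 0$ by self-duality, and the first gives $\Tr(X\Psi^*(Y))\le 0$ for all $X\succeq 0$, i.e. $\Psi^*(Y)\preceq 0$. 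Hence $-\Psi^*$ is semi-nonnegative, closing the dichotomy.

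The step I expect to be the crux is the convex-geometry setup rather than any single computation: the image $\mathcal C$ of the positive semidefinite cone under a linear map need not be closed, so a careless attempt to separate it from a point would fail. The remedy is precisely the ``invertible density matrix'' clause in the definition of semi-positivity, exploited through \cref{thm:SP_noinv}: it lets us separate $\mathcal C$ from the \emph{open} cone of positive-definite operators, where no closedness of $\mathcal C$ is needed. A secondary point requiring care is that $-\Psi^*$ is HP but not trace-preserving, so ``semi-nonnegative'' must be taken in the extended sense above (sending some density matrix to a positive-semidefinite operator, possibly a subnormalised or even zero one); the degenerate separator with $\Psi^*(Y)=0$ is harmless once this convention is fixed at the outset, but it should be flagged so that the equivalence in the first paragraph is unambiguous.
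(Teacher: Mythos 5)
Your proof is correct and takes essentially the same route as the paper: a separating-hyperplane argument combined with the adjoint pairing $\Tr(\Psi(X)Y)=\Tr(X\,\Psi^*(Y))$, the only cosmetic difference being that the paper separates $\Psi(\mathcal S_{\mathcal H})$ (images of invertible density matrices) from the invertible density matrices $\mathcal S_{\mathcal K}$ and normalizes the separator to a supporting hyperplane $\{X:\Tr(\rho X)=0\}$ with $\rho$ a singular density matrix, whereas you separate the cone $\{\Psi(X):X\succeq 0\}$ from the open positive-definite cone and normalize at the end via scaling and self-duality. Your version also makes explicit two points the paper glosses over — the mutual-exclusivity half via the trace pairing, and the extended reading of ``semi-nonnegative'' for the non-trace-preserving map $-\Psi^*$ — which tightens rather than changes the argument.
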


\begin{proof}  Let $\mathcal{S}_\mathcal{H}$  and $\mathcal{S}_\mathcal{K}$ be the set of invertible density matrices in  $B(\mathcal{H})$ and $B(\mathcal{K})$ respectively.  If $\Psi$ is not semi-positive, then $\Psi(\mathcal{S}_\mathcal{H} )$ and $\mathcal{S}_\mathcal{K}$ are disjoint convex set; therefore, there exists a hyperplane that separates these two sets.  We can choose this separating hyperplane to also be a supporting hyperplane of $\mathcal{S}_\mathcal{K}$ which means it is of the form $\{X: \tr(\rho X)=0\}$ for some singular density matrix $\rho$.  Then $\tr(\Psi^*(\rho)X)=\tr(\rho \Psi (X))\le 0$ for all positive semi-definite $X\in B(\mathcal{H})$.  Therefore $\Psi^*(\rho)$ is negative definite and hence $-\Psi^*$ is a semi-nonegative map.

A similar argument shows that if $(-\Psi^*)$ is not a semi-nonnegative map, then $\Psi$ is a semi-positive map. 

\end{proof}

\section{Representations of HPTP Maps}

Similarly to CPTP maps, we can define the Choi representation of HPTP maps. For an HPTP map $\Psi$ acting on L($\mathcal{H}$), the Choi representation $J\lrp{\Psi}$ of $\Psi$ is given by 
\[
J\lrp{\Psi}=\lrp{\Phi\otimes I_{\mathcal{H}}}\lrp{\veco\lrp{I_{\mathcal{H}}}\veco\lrp{I_{\mathcal{H}}}^{\dagger}},
\]
where L$(\mathcal{H})$ is the space of linear operators acting on the Hilbert space $\mathcal{H}$ and $I$ is the identity matrix.

If $\Psi$ is HPTP, then the Choi representation $J\lrp{\Psi}$ is hermitian \cite[Theorem 2.25]{watrous2018theory}. The Choi representation thus admits a Jordan-Hahn decomposition
$J\lrp{\Psi}=P-Q$
where $P,Q$ are positive semi-definite. Let
\[J\lrp{\Phi_0}=\frac{1}{\Tr P}P,\quad J\lrp{\Phi_1}=\frac{1}{\Tr Q}Q\]
We see that $J\lrp{\Phi_0}$ and $J\lrp{\Phi_1}$ are positive semi-definite operators with trace 1. They are the Choi representations of CPTP maps $\Phi_0$, $\Phi_1$ respectively \cite[Corollary 2.27]{watrous2018theory}. 
This decomposition ensures us the operator sum representation used in Section 4 in the main text.
By linearity of the Choi representation, we have that $\Psi=p_0\Phi_0-p_1\Phi_1$ where $p_0-p_1=\Tr P-\Tr Q=1$. Therefore, the operator-sum representation of $\Psi$ is the weighted difference of the operator-sum representations of CPTP maps $\Phi_0$ and $\Phi_1$. Let $\lrcb{A_k}_{k\in K}$, $\lrcb{B_j}_{j\in J}$ be the operators for the operator-sum representation of $\Phi_0$ and $\Phi_1$ respectively. Then the operator-sum representation for $\Psi$ is given by $\Psi(X)=p_0\sum_kA_kXA_k^{\dagger}-p_1\sum_j B_jXB_j^{\dagger}$.
We, therefore, refer to the operator-sum representation of a given HPTP map as $\Psi : \lrcb{\text{sign}(i),E_i}$ where $\Psi(X)=\sum_i\text{sign}(i)E_iXE_i^{\dagger}$.

Similarly to CPTP maps, the operator-sum representation for HPTP maps is not unique. The unitary equivalence of HPTP maps is derived in a similar manner to the unitary equivalence of CPTP maps. Due to the following theorems, we can diagonalize the Hermitian matrix $\alpha$ in Theorem 6 in the main text. 


\begin{theorem}\label{thm:hptp_unitary_equivalence}
    Let $\Psi$, $\Xi$ be HPTP maps. If there exists sets of linear operators $\lrcb{A_k}_{k\in K}$, $\lrcb{B_j}_{j\in J}$, $\lrcb{C_i}_{i\in \mathcal{I}}$ and $\lrcb{D_l}_{l\in L}$ such that
    \[\Psi(X)=\sum_k A_kXA_k^{\dagger}-\sum_jB_jXB_j^{\dagger},\quad \Xi(X)=\sum_iC_iXC_i^{\dagger}-\sum_{l}D_lXD_l^{\dagger}\]
    with $\left|K\right|=\left|L\right|=\left|\mathcal{I}\right|=\left|J\right|$. Then $\Psi=\Xi$ if and only if there exists a unitary operator $U$ with rows labelled by $K\cup L$ and columns labelled by $\mathcal{I}\cup J$ such that 
    \[A_k=\sum_{i}U(k,i)C_i+\sum_{j}U(k,j)B_j,\quad D_l=\sum_{i}U(l,i)C_i+\sum_{j}U(l,j)B_j\]
\end{theorem}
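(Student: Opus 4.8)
The plan is to reduce the statement to the classical unitary-freedom theorem for operator-sum (Kraus) representations of completely positive maps, \cite[Theorem 8.2]{nielsen2010quantum}. The key move is to separate each HPTP map into its positive and negative Kraus pieces: writing $\mathcal{A}(X)=\sum_k A_kXA_k^\dagger$, $\mathcal{B}(X)=\sum_j B_jXB_j^\dagger$, $\mathcal{C}(X)=\sum_i C_iXC_i^\dagger$ and $\mathcal{D}(X)=\sum_l D_lXD_l^\dagger$, all four are manifestly completely positive, and $\Psi=\mathcal{A}-\mathcal{B}$, $\Xi=\mathcal{C}-\mathcal{D}$. Moving the subtracted terms to the other side, the equality $\Psi=\Xi$ is then \emph{equivalent} to the equality of the two \emph{completely positive} maps $\mathcal{A}+\mathcal{D}=\mathcal{C}+\mathcal{B}$. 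This is what lets us invoke CP theory even though $\Psi,\Xi$ are only Hermitian-preserving; the trace-preserving hypothesis is not actually used in the argument — it is inherited from the surrounding discussion and only serves to guarantee that such representations exist.

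Next I would observe that $\mathcal{A}+\mathcal{D}$ has the concatenated operator-sum representation indexed by $K\cup L$ (which we may take disjoint, relabeling if necessary), and $\mathcal{C}+\mathcal{B}$ the one indexed by $\mathcal{I}\cup J$ (likewise disjoint). The cardinality hypothesis $|K|=|L|=|\mathcal{I}|=|J|$ forces $|K\cup L|=|\mathcal{I}\cup J|$, so the two Kraus lists already have equal length and no padding with zero operators is needed — this is precisely why the resulting matrix $U$ comes out genuinely unitary (square) rather than merely an isometry. Applying \cite[Theorem 8.2]{nielsen2010quantum} verbatim, $\mathcal{A}+\mathcal{D}=\mathcal{C}+\mathcal{B}$ holds if and only if there is a unitary $U$ with rows indexed by $K\cup L$ and columns indexed by $\mathcal{I}\cup J$ such that the $(K\cup L)$-family equals $U$ applied to the $(\mathcal{I}\cup J)$-family. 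Reading off the $K$-rows and the $L$-rows of this single identity gives exactly $A_k=\sum_i U(k,i)C_i+\sum_j U(k,j)B_j$ and $D_l=\sum_i U(l,i)C_i+\sum_j U(l,j)B_j$, which is the asserted equivalence.

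The converse direction runs the same chain backwards: given such a $U$, the two displayed formulas say precisely that the concatenated $(K\cup L)$-family is the image of the concatenated $(\mathcal{I}\cup J)$-family under the unitary $U$, so the easy direction of \cite[Theorem 8.2]{nielsen2010quantum} yields $\mathcal{A}+\mathcal{D}=\mathcal{C}+\mathcal{B}$, hence $\Psi=\mathcal{A}-\mathcal{B}=\mathcal{C}-\mathcal{D}=\Xi$.

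I do not expect a genuine obstacle here; the only thing needing care is the index bookkeeping — making the two unions disjoint and keeping the block structure of $U$ (the $K$-rows versus the $L$-rows, and the $\mathcal{I}$-columns versus the $J$-columns) unambiguous when splitting the single concatenated Kraus identity into its four pieces. I would therefore record the concatenation-and-relabeling step as a short remark before quoting \cite[Theorem 8.2]{nielsen2010quantum}, so that the decomposition of $U$ into blocks is clean and the matching with the claimed formulas is immediate.
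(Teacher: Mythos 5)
Your proposal is correct and follows essentially the same route as the paper: both arguments move the negative parts across the equality to obtain the identity of completely positive maps $\mathcal{A}+\mathcal{D}=\mathcal{C}+\mathcal{B}$ and then apply the unitary freedom of operator-sum representations to the concatenated Kraus families indexed by $K\cup L$ and $\mathcal{I}\cup J$. The only cosmetic difference is that you quote the standard unitary-freedom theorem directly, whereas the paper re-derives it at the Choi-matrix level via the unitary equivalence of purifications.
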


\begin{proof}
    We consider first the $\Leftarrow$ direction.
    Consider
    $\sum_kA_kXA_k^{\dagger}+\sum_lD_lXD_l^{\dagger}$.

    It can be shown by direct calculation after substituting $A_k,D_l$ with their expansions over $C_i,B_j$ that this matrix is equivalent to 
    $\sum_jB_jXB_{j}^{\dagger}+\sum_iC_iXC_{i}^{\dagger}$
    We can rearrange the terms to obtain the equality :
    $\sum_kA_kXA_k^{\dagger}-\sum_jB_jXB_j^{\dagger}=\sum_iC_iXC_i{\dagger}-\sum_lD_lXD_l^{\dagger}$,
    implying $\Phi=\Xi$.

    We now consider the $\Rightarrow$ direction. Suppose that 
    \[\Psi(X)=\sum_kA_kXA_k^{\dagger}-\sum_jB_jXB_j^{\dagger}=\sum_iC_iXC_i^{\dagger}-\sum_lD_lXD_l^{\dagger}=\Xi(X)\]
    
    for all $X$. Their Choi representations must agree. The Choi representations are given by \cite[Proposition 2.20]{watrous2018theory}. 

    \[J\lrp{\Psi}=\sum_k\veco\lrp{A_k}\veco\lrp{A_k}^{\dagger}-\sum_j\veco\lrp{B_j}\veco\lrp{B_j}^{\dagger}=\sum_i\veco\lrp{C_i}\veco\lrp{C_i}^{\dagger}-\sum_l\veco\lrp{D_l}\veco\lrp{D_j}^{\dagger}\]
    
    where $\veco$ denotes the vector operator correspondence \cite{watrous2018theory}. This equation is equivalent to
    \[\sum_k\veco\lrp{A_k}\veco\lrp{A_k}^{\dagger}+\sum_l\veco\lrp{D_l}\veco\lrp{D_j}^{\dagger}=\sum_j\veco\lrp{B_j}\veco\lrp{B_j}^{\dagger}+\sum_i\veco\lrp{C_i}\veco\lrp{C_i}^{\dagger}\]
    
    We now replicate the proof of \cite[Corollary 2.23]{watrous2018theory}. We define an orthonormal basis $\lrcb{e_a}_{a\in K\cup L\cup \mathcal{I}\cup J}$ over a Hilbert space $\mathcal{Z}$. We define the following vectors :
    \[\ket{u}=\sum_k\veco\lrp{A_k}\otimes e_k+\sum_l\veco\lrp{D_l}\otimes e_l,\quad \ket{v}=\sum_j\veco\lrp{B_j}\otimes e_j+\sum_i\veco\lrp{C_i}\otimes e_i\]
    These vectors lead to an equivalence of purifications
    $\tr_{\mathcal{Z}}\lrp{\ket{u}\bra{u}}=\tr_{\mathcal{Z}}\lrp{\ket{v}\bra{v}}$.
    We can invoke the unitary equivalence of purifications \cite[Theorem 2.12]{watrous2018theory}. The argument is identical to that of \cite[Corollary 2.23]{watrous2018theory}. There exists a unitary operator $U$ indexed by $\lrp{K\cup L}\times\lrp{\mathcal{I}\cup J}$ such that
    \[\veco\lrp{A_k}=\sum_{i}U(k,i)\veco\lrp{C_i}+\sum_jU(k,j)\veco\lrp{B_j},\quad \veco\lrp{D_l}=\sum_{l}U(l,i)\veco\lrp{C_i}+\sum_jU(l,j)\veco\lrp{B_j}\]
    which is equivalent to the desired result by linearity of the vector-operator correspondence.
    
\end{proof}

\begin{theorem}

Let $\Psi$, $\Xi$ be HPTP maps. Let $\Psi=p_0\Phi_0-p_1\Phi_1$ and $\Xi=c_0\chi_0-c_1\chi_1$, where $\Phi_0,\Phi_1,\chi_0,\chi_1$ are CPTP maps. Let $\rank\lrp{J\lrp{\Psi}}=\rank\lrp{J\lrp{\Phi_0}}+\rank\lrp{J\lrp{\Phi_1}}$ and $\rank\lrp{J\lrp{\Xi}}=\rank\lrp{J\lrp{\chi_0}}+\rank\lrp{J\lrp{\chi_1}}$. Let $\lrcb{A_k}_{k\in K}$, $\lrcb{B_j}_{j\in J}$, $\lrcb{C_i}_{i\in \mathcal{I}}$ and $\lrcb{D_l}_{l\in L}$ be sets of linear operators such that 
    \[\Psi(X)=\sum_kA_kXA_k^{\dagger}-\sum_jB_jXB_j^{\dagger},\quad \Xi(X)=\sum_iC_iXC_i^{\dagger}-\sum_{l}D_lXD_l^{\dagger}\]
where $\left|K\right|=\rank\lrp{J\lrp{\Phi_0}}$, $\left|J\right|=\rank\lrp{J\lrp{\Phi_1}}$, $\left|\mathcal{I}\right|=\rank\lrp{J\lrp{\chi_0}}$ and $\left|L\right|=\rank\lrp{J\lrp{\chi_1}}$. Then $\Psi=\Xi$ if and only if there exists unitary operators $U_1$ with rows labelled by $K$ and columns labelled by $\mathcal{I}$ and $U_2$ with rows labelled by $J$ and columns labelled by $L$ such that 
    \[A_k=\sum_iU_1(k,i)C_i,\quad D_l=\sum_j U_2(l,j)B_j\]

\end{theorem}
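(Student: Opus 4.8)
The plan is to take Choi matrices of the identity $\Psi=\Xi$, split the resulting Hermitian-matrix identity into its positive and negative parts, and apply on each part the standard unitary freedom in the operator-sum representation of a completely positive map (the same tool that drives \cref{thm:hptp_unitary_equivalence}), the only new point being that here that freedom must respect the two blocks. Write $J(\Psi)=P-Q$ for the Jordan--Hahn decomposition, so $P,Q\ge 0$ have orthogonal ranges and $\rank J(\Psi)=\rank P+\rank Q$. The hypotheses $|K|=\rank J(\Phi_0)$, $|J|=\rank J(\Phi_1)$ together with $\rank J(\Psi)=\rank J(\Phi_0)+\rank J(\Phi_1)$ say exactly that $\{A_k\}$ is a minimal Kraus family for the CP map $p_0\Phi_0$ and $\{B_j\}$ a minimal Kraus family for $p_1\Phi_1$; in particular $\{\veco(A_k)\}$ and $\{\veco(B_j)\}$ are each linearly independent, and similarly $\{\veco(C_i)\}$, $\{\veco(D_l)\}$ for $\Xi$.

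For ($\Leftarrow$) I would simply substitute. Given a $|K|\times|\mathcal I|$ unitary $U_1$ with $A_k=\sum_i U_1(k,i)C_i$,
\[
\sum_k A_k X A_k^{\dagger}=\sum_{i,i'}\Big(\sum_k U_1(k,i)\overline{U_1(k,i')}\Big)C_i X C_{i'}^{\dagger}=\sum_i C_i X C_i^{\dagger},
\]
by orthonormality of the columns of $U_1$, and the analogous computation with $U_2$ gives $\sum_l D_l X D_l^{\dagger}=\sum_j B_j X B_j^{\dagger}$; subtracting yields $\Psi=\Xi$. This is the ``$\Leftarrow$'' half of \cref{thm:hptp_unitary_equivalence} with the mixing confined to each block, so no cross terms occur.

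For ($\Rightarrow$), assume $\Psi=\Xi$, so $J(\Psi)=J(\Xi)$; by \cite[Proposition 2.20]{watrous2018theory},
\[
\sum_k \veco(A_k)\veco(A_k)^{\dagger}-\sum_j \veco(B_j)\veco(B_j)^{\dagger}=\sum_i \veco(C_i)\veco(C_i)^{\dagger}-\sum_l \veco(D_l)\veco(D_l)^{\dagger}.
\]
The rank hypotheses identify $\operatorname{Span}\{\veco(A_k)\}=\operatorname{Span}\{\veco(C_i)\}=\Range P$ and $\operatorname{Span}\{\veco(B_j)\}=\operatorname{Span}\{\veco(D_l)\}=\Range Q$, the ranges of the positive and negative parts of the common Choi matrix, so in particular $|K|=\rank P=|\mathcal I|$ and $|J|=\rank Q=|L|$. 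Compressing the displayed identity to $\Range P$ and to $\Range Q$ separately gives $\sum_k \veco(A_k)\veco(A_k)^{\dagger}=P=\sum_i \veco(C_i)\veco(C_i)^{\dagger}$ and $\sum_j \veco(B_j)\veco(B_j)^{\dagger}=Q=\sum_l \veco(D_l)\veco(D_l)^{\dagger}$. Each is an equality of two minimal positive-operator decompositions of a fixed PSD matrix, so the classical unitary-freedom result already used in \cref{thm:hptp_unitary_equivalence} (the argument of \cite[Corollary 2.23]{watrous2018theory}) produces a $|K|\times|\mathcal I|$ unitary $U_1$ with $\veco(A_k)=\sum_i U_1(k,i)\veco(C_i)$ and a $|J|\times|L|$ unitary $U_2$ with $\veco(D_l)=\sum_j U_2(l,j)\veco(B_j)$; linearity of the vector--operator correspondence gives the stated relations for $A_k$ and $D_l$.

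The step I expect to be the main obstacle is the identification $\operatorname{Span}\{\veco(A_k)\}=\Range P$, $\operatorname{Span}\{\veco(B_j)\}=\Range Q$ (and likewise for $C_i,D_l$): this is precisely what prevents the unitary relating the full Kraus families from mixing an $A_k$ with the $B_j$'s, so that one genuinely gets block-diagonal $U_1,U_2$ rather than the single unitary of \cref{thm:hptp_unitary_equivalence}. The rank condition by itself only delivers the complementary splitting $\Range J(\Psi)=\operatorname{Span}\{\veco(A_k)\}\oplus\operatorname{Span}\{\veco(B_j)\}$, which need not coincide with the orthogonal splitting into the ranges of $P$ and $Q$; to get the sharp identification one wants the decomposition $J(\Psi)=\big(\sum_k\veco(A_k)\veco(A_k)^{\dagger}\big)-\big(\sum_j\veco(B_j)\veco(B_j)^{\dagger}\big)$ to be the Jordan--Hahn one, which is exactly how $\Phi_0,\Phi_1$ (and $\chi_0,\chi_1$) arise in the construction of Section~V. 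I would therefore either read the hypothesis in that stronger sense or supply the extra lemma promoting the rank condition to orthogonality of the two Choi ranges; once that is in place, the remaining manipulations are routine $\veco$-algebra and column-orthonormality bookkeeping.
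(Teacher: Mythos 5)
Your outline is the same as the paper's: the $(\Leftarrow)$ direction is the identical direct computation, and for $(\Rightarrow)$ both you and the paper pass to Choi matrices, use the rank hypothesis to squeeze $\operatorname{rank}\bigl(\sum_k\operatorname{vec}(A_k)\operatorname{vec}(A_k)^{\dagger}\bigr)=|K|$ and $\operatorname{rank}\bigl(\sum_j\operatorname{vec}(B_j)\operatorname{vec}(B_j)^{\dagger}\bigr)=|J|$ (so the families are linearly independent), and then aim for blockwise unitaries. You differ only in the endgame: you compress the Choi identity onto the ranges of the Jordan--Hahn parts $P,Q$ and invoke the unitary freedom for minimal decompositions of each positive part separately, whereas the paper first deduces the trace-orthogonality relations $\operatorname{Tr}(B_j^{\dagger}A_k)=\operatorname{Tr}(C_i^{\dagger}D_l)=0$, then applies the single-unitary equivalence of \cref{thm:hptp_unitary_equivalence} to all four families at once, and finally kills the off-diagonal blocks of that unitary with the dual operators $\hat B_j$. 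Granting the crucial identification discussed next, your endgame is a legitimate and somewhat shorter alternative to the paper's.

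The genuine gap is the one you flag yourself and do not close: the identification $\operatorname{Span}\{\operatorname{vec}(A_k)\}=\operatorname{Range}(P)$ and $\operatorname{Span}\{\operatorname{vec}(B_j)\}=\operatorname{Range}(Q)$. The patch you propose (an ``extra lemma promoting the rank condition to orthogonality of the two Choi ranges'') cannot work in the generality you state it, because rank additivity does not force a difference of positive parts to be the Jordan--Hahn decomposition: with $w=\tfrac{1}{2}(\sqrt{3},1)^{T}$ and $v=\tfrac{1}{2}(1,\sqrt{3})^{T}$ one has $\operatorname{diag}(1,-1)=2ww^{\dagger}-2vv^{\dagger}$, a difference of two rank-one positive semidefinite matrices whose ranks add up to the rank of the difference, yet whose ranges are neither orthogonal nor equal to the eigenspaces of the difference. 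So, as you correctly observe, the rank count only yields the (generally non-orthogonal) direct-sum splitting $\operatorname{Range}J(\Psi)=\operatorname{Span}\{\operatorname{vec}(A_k)\}\oplus\operatorname{Span}\{\operatorname{vec}(B_j)\}$, and no reinterpretation of the cardinality hypotheses alone will upgrade this to orthogonality; one must use more structure, e.g.\ read the hypothesis as saying that $\sum_k A_k(\cdot)A_k^{\dagger}=p_0\Phi_0$ and $\sum_j B_j(\cdot)B_j^{\dagger}=p_1\Phi_1$ with $\Phi_0,\Phi_1$ the Jordan--Hahn pair constructed earlier in Section~V (and likewise for $\Xi$). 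Note that this is exactly the point where the paper's own proof does its work: it identifies the spectral projectors of $\tilde A$ and $-\tilde B$ with those of $J(\Psi)$ ``by uniqueness of the spectral decomposition,'' i.e.\ it asserts precisely the orthogonality your compression step needs (and the example above shows this assertion also needs the stronger reading of the hypothesis rather than the rank count alone). So your proposal is not a wrong route---it is the paper's route with a cleaner finish---but as written its central step is left open, and the specific lemma you hope would close it is false; to complete the argument you must justify the alignment of $\{\operatorname{vec}(A_k)\}$, $\{\operatorname{vec}(B_j)\}$ with $\operatorname{Range}(P)$, $\operatorname{Range}(Q)$ from the intended (stronger) hypotheses, after which your two applications of the minimal-Kraus unitary freedom finish the proof.
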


\begin{proof}
    The proof for the $\Leftarrow$ direction follows directly from the unitary equivalence of Kraus representations for $CPTP$ maps. It can be proven by direct calculation.

   We now consider the $\Rightarrow$ direction. First, we consider the Choi representation for $\Psi$.
   \[
   J\lrp{\Psi}=\sum_k\veco\lrp{A_k}\veco\lrp{A_k}^{\dagger}-\sum_j\veco\lrp{B_j}\veco\lrp{B_j}^{\dagger}\]
   We consider the following operators :
   \[\Tilde{A}=\sum_k\veco\lrp{A_k}\veco\lrp{A_k}^{\dagger},\quad \Tilde{B}=\sum_j\veco\lrp{B_j}\veco\lrp{B_j}^{\dagger}
   \]
   These operators are positive, and therefore admit spectral decompositions. The operators decompose into $\rank\lrp{C(\Phi_0)}$ and $\rank\lrp{C(\Phi_1)}$ rank 1 projection operators respectively, since they are sums of that many rank 1 operators and must have a total rank of $\rank\lrp{C(\Phi_0)}+\rank\lrp{C(\Phi_1)}$. 
   Let $\lambda^{(A)}_k$, $\lambda^{(B)}_j$ and $\Pi^{(A)}_k$, $\Pi^{(B)}_j$ denote the eigenvalues and rank 1 projectors of the decomposition for $\Tilde{A}$ and $\Tilde{B}$ respectively. Since the operators are positive, $\lambda^{(A)}_k,\lambda^{(B)}_j>0$. We let eigenvalues repeat to constrain the rank of our projectors to 1 in the spectral decomposition.

   Further, $C(\Psi)$ is Hermitian and therefore admits a spectral decomposition into $\rank\lrp{C(\Psi)}=\rank\lrp{C(\Phi_0)}+\rank\lrp{C(\Phi_1)}$ rank 1 projectors. 
   Let $\lambda_n^{(\Psi)}$, $\Pi_n^{(\Psi)}$ denote the eigenvalues and projection operators of the decomposition. We obtain the equation
   \[\sum_{n=1}^{\rank(J\lrp{\Psi})}\lambda_n^{(\Psi)}\Pi_n^{(\Psi)}=\sum_{k=1}^{\rank(J\lrp{\Phi_0})}\lambda_k^{(A)}\Pi^{(A)}_{k}-\sum_{j=1}^{\rank(J\lrp{\Phi_1})}\lambda_j^{(B)}\Pi^{(B)}_j\]
   The spectral decomposition is unique up to ordering, therefore we can identify the projectors on the left-hand side of the equation to the projectors on the right-hand side of the equation. In particular, we have that the projection operators are orthogonal and $\Pi^{(A)}_k\Pi^{(B)}_j=0$ for all $k\in K,\ j\in J$. Further, since $J(\Psi)=J(\Xi)$, we can decompose $J(\Xi)$ in a similar manner to obtain the equation :
   \[
   \sum_{k=1}^{\rank(J\lrp{\Phi_0})}\lambda_k^{(A)}\Pi^{(A)}_{k}-\sum_{j=1}^{\rank(J\lrp{\Phi_1})}\lambda_j^{(B)}\Pi^{(B)}_j=\sum_{i=1}^{\rank(J\lrp{\chi_0})}\lambda_i^{(C)}\Pi^{(C)}_i-\sum_{d=1}^{\rank(J\lrp{\chi_1})}\lambda_l^{(D)}\Pi^{(D)}_l
   \]
   By identifying the positive and negative eigenvalue eigenspaces on both sides of the equation, we obtain $\Pi^{(C)}_i\Pi_j^{(B)}=\Pi_k^{(A)}\Pi_l^{(D)}=0$ for all $i\in \mathcal{I},\ j\in J,k\in K,\ l\in L$.
   
    Therefore,
    \[ \veco\lrp{A_k}^{\dagger}\veco\lrp{B_j}=\lrp{\lrp{\sum_{k'}\Pi^{(A)}_{k'}}\veco\lrp{A_k}}^{\dagger}\lrp{\sum_{j'}\Pi_{j'}^{(B)}}\veco\lrp{B_j}=0.
    \]
    And since the vector-operator correspondence is an isometry, we have that
    $\Tr\lrp{B_j^{\dagger}A_k}=0$
    for all $k\in K$, $j\in J$.
    An identical argument can be applied to obtain the relation
    $\Tr\lrp{C_i^{\dagger}D_l}=\Tr\lrp{B_i^{\dagger}C_i}=0$.
    
    Suppose $\Psi=\Xi$. We know from \Cref{thm:hptp_unitary_equivalence} that there exists a unitary such that
    \[
    A_k=\sum_{i}U(k,i)C_i+\sum_{j}U(k,j)B_j,\quad D_l=\sum_{i}U(l,i)C_i+\sum_{j}U(l,j)B_j.
    \]
    The $B_j$ must be linearly independent since $\rank\lrp{C(\Phi_1)}$ different operators that must span a space of the same dimension. Consider $\hat{B}_j$ such that
    $\Tr\lrp{\hat{B_j}^{\dagger}B_{j'}}=c_j\delta_{jj'}$ and $ \sum_{j'}\Pi^{(B)_{j'}}\hat{B_j}=B_j$
    where $c_j\in\mathbb{C}$, $c\neq 0$ is a constant. This operator must exist by linear independence. Now consider
    \[\Tr\lrp{\hat{B}_{j'}^{\dagger}A_k}=\sum_iU(k,i)\Tr\lrp{\hat{B}_{j'}^{\dagger}C_i}+\sum_jU(k,j)\Tr\lrp{\hat{B}_{j'}^{\dagger}B_j}\]
    which is equivalent to
    $0=U(k,j)c_j$.
    The calculation for $U(l,i)$ is the same. Since the constants $c_j$ are non-zero, we obtain the following conditions on the unitary $U$ : 
    $U(k,j)=0$ and $U(l,i)=0$
    for all $k\in K$, $j\in J$, $l\in L$ and $i\in \mathcal{I}$.
    Identifying $U_1(k,i)=U(k,i)$, $U_2(l,j)=U(l,j)$ leads to the desired equivalence.
    
\end{proof}

\section{Semidefinite Programming for Determine SN and SP}
Whether a linear map $\Phi$ is Hermitian-preserving or completely positive can be easily checked by writing out the Choi representation $J(\Phi)$ of $\Phi$. If $J(\Phi)$ is a Hermitian matrix, the map $\Phi$ is HP. Similarly, if $J(\Phi)$ is positive, $\Phi$ is CP. 
However, the positivity of the Choi representation does not differentiate between a non-CP SN map from a non-CP HP map. 
Here we provide a semidefinite program to determine the semi-nonnegativity of a HPTP map, $\Phi:{\mathbb C}^{n\times n} \to {\mathbb C}^{m\times m}$.

\begin{equation*}
\begin{aligned}
\min_{(y,x)\in \mathbb R \times {\mathbb R}^{n^2}} \quad & y\\
\textrm{s.t.} \quad & yI+\sum_{k=1}^{n^2} x_k(V_k \oplus \Phi(V_k))\geq 0\\
& \sum_{k=1}^{n^2} x_k \Tr(V_k) =1
\end{aligned}
\end{equation*}
where $I$ is the identify matrix on $\mathbb C ^{(n+m)\times (n+m)}$, $\set{V_k}_{k=1}^{n^2}$ is a hermitian and orthonormal basis for ${\mathbb C}^{n\times n}$, and $\oplus$ is the direct sum of matrices. 
Let  $y\in \mathbb R$, $x\in{\mathbb R}^{n^2}$ and $X=\sum_{k=1}^{n^2} x_kV_k$ with $\tr(X)=1$, suppose that the constraints of the program are satisfied for $(y,x)$ then, it is not hard to see that: $X\geq 0$ and $\Phi(X)\geq0$ if and only if $y\leq 0$. Even better, $X> 0$ and $\Phi(X)>0$ if and only if $y<0$.
Therefore, if this program terminates with global minimum $y^*>0$ then we know that the map $\Phi$ is not SN. 
If the program terminates with $y^*=0$, then we know that $\Phi$ is SN.
And if the program terminates with $y^*<0$, then we know that $\Phi$ is SP.

To determine SPR (but not SP or SN), the user would need to notice that $\Phi(X)$ is always rank deficient than reduce the dimension of $\mathbb C^{m\times m}$ to $\mathbb C^{(m-k) \times (m-k)}$ for some $k\in \mathbb N$ in a way so that $\Phi: {\mathbb C}^{n\times n} \to \mathbb C^{(m-k) \times (m-k)}$ and $\Phi(X)$ is invertible for some $X\in {\mathbb C}^{n\times n}$.

\end{document}